\title{Faster Algorithms for Schatten-\texorpdfstring{$p$}{p} Low Rank Approximation} 
\author{Praneeth Kacham \\ CMU and Google Research \and David P. Woodruff \\ CMU}
\newcommand{\machine}{\textnormal{mach}}
\newcommand{\lp}[1]{\|#1\|_{S_p}}
\newcommand{\gap}{\textnormal{\textsf{gap}}}
\renewcommand{\tilde}{\widetilde}
\newcommand{\fl}{\textnormal{fl}}
\newcommand{\AppxPCA}{\texttt{AppxPCA}}
\newcommand{\Proj}{\textnormal{Proj}}
\newcommand{\pca}{\textnormal{pca}}
\newcommand{\colspace}{\textnormal{colspace}}
\begin{document}
\maketitle

\begin{abstract}
    We study algorithms for the Schatten-$p$ Low Rank Approximation (LRA) problem. First, we show that by using fast rectangular matrix multiplication algorithms and different block sizes, we can improve the running time of the algorithms in the recent work of Bakshi, Clarkson and Woodruff (STOC 2022).    
    We then show that by carefully combining our new algorithm with the algorithm of Li and Woodruff (ICML 2020), we can obtain even faster algorithms for Schatten-$p$ LRA. 
    
        While the block-based algorithms are fast in the real number model, we do not have a stability analysis which shows that the algorithms work when implemented on a machine with polylogarithmic bits of precision. We show that the LazySVD algorithm of Allen-Zhu and Li (NeurIPS 2016) can be implemented on a floating point machine with only logarithmic, in the input parameters, bits of precision. As far as we are aware, this is the first stability analysis of any algorithm using $O((k/\sqrt{\varepsilon})\poly(\log n))$ matrix-vector products with the matrix $A$ to output a $1+\varepsilon$ approximate solution for the rank-$k$ Schatten-$p$ LRA problem. 
\end{abstract}

\section{Introduction}
Low Rank Approximation (LRA) is an important primitive in large scale data analysis. Given an $m \times n$ matrix $A$, and a rank parameter $k$, the task is to find a rank-$k$ matrix $B$ that minimizes $\|A - B\|$ where $\| \cdot \|$ is some matrix norm. Typically, we also require that the algorithms output a factorization $B = XY$ such that $X \in \R^{m \times k}$ and $Y \in \R^{k \times n}$. Such a factorization lets us compute the product $Bz$ with an arbitrary vector $z$ in time $O(k(n+m))$ which can be significantly smaller than the $\nnz(A)$ time required to multiply a vector with the original matrix $A$. Here $\nnz(A)$ denotes the number of non-zero entries of the matrix $A$. Thus, replacing $A$ with a low rank approximation can make downstream tasks much faster. Additionally, if the matrix $A$ has a low rank structure but is corrupted by noise, a low rank approximation of $A$ can recover the underlying structure under suitable assumptions on the noise. We note that many low rank approximation algorithms, including ours, compute a rank-$k$ orthonormal matrix $W$ such that $\|A(I-W\T{W})\|$ is small and then define $X = AW$ and $Y = \T{W}$. 

In this paper, the error metric we consider is given by the Schatten-$p$ norm for $p \ge 1$. Given a matrix $M$, the Schatten-$p$ norm of $M$ denoted by $\lp{M}$ is defined as $(\sum_{i} \sigma_i(M)^p)^{1/p}$ where $\sigma_i(M)$ denotes the $i$-th singular value of $M$.  Note that Schatten-$2$ norm is the same as the Frobenius norm, denoted by $\frnorm{M} = (\sum_{i,j}M_{ij}^2)^{1/2}$ and the Schatten-$\infty$ norm is the same as the operator norm, denoted by $\opnorm{M} = \max_{x \ne 0}\opnorm{Mx}/\opnorm{x}$. In the presence of outliers, the Schatten-$1$ norm, $\sum_i\sigma_i(M)$, is considered to be more robust since the errors introduced by the outliers are not ``squared'' as it is done in the case of the Frobenius norm.

The Schatten-$p$ norm low rank approximation problem asks to find a rank-$k$ matrix $B$ that minimizes $\lp{A - B}$. As the Schatten-$p$ norms are unitarily invariant, we have from Eckart-Young-Mirsky's theorem that $\lp{A - A_k} = \min_{\text{rank-}k\, B}\lp{A - B}$ for all $p \ge 1$, where $A_k$ is the matrix obtained by truncating the Singular Value Decomposition (SVD) of $A$ to only the top $k$ singular values. This implies that a single matrix $A_k$ is a \emph{best} rank-$k$ approximation for $A$ for all values of $p$. However, computing the SVD of an $m \times n$ matrix takes $O(\min(mn^{\omega-1}, nm^{\omega-1}))$ time (see Appendix~\ref{sec:svd-complexity}), where $\omega$ is the matrix multiplication exponent. This time complexity is prohibitive when $m$ and $n$ are large. Thus, we relax the requirements and ask for a rank-$k$ matrix $B$ satisfying $\lp{A-B} \leq (1+\varepsilon)\lp{A - A_k}$ in the hope of obtaining faster algorithms than the SVD.

While a single matrix $A_k$ is a best low rank approximation for $A$ in all Schatten-$p$ norms, it is not the case for approximate solutions, i.e., if $B$ is a rank-$k$ matrix that satisfies $\lp{A - B} \le (1+\varepsilon)\lp{A - A_k}$ for some $p$, it may not be the case that $\|A-B\|_{S_q} \le (1+\varepsilon)\|A-A_k\|_{S_q}$ for $q \ne p$. Thus, many approximation algorithms for Schatten-$p$ LRA are tailored to the particular value $p$. There are two different lines of works for Schatten-$p$ LRA in the literature: (i) Sketching based algorithms of Li and Woodruff \cite{li2020input} and (ii) Iterative algorithms of Bakshi, Clarkson and Woodruff \cite{bakshi2022low}. We summarize the running times of the algorithms in Table~\ref{tab:results-summary}.
\begin{table}[b]
    \centering
    \begin{tabular}{l l}\toprule
    & Time Complexity\\ \midrule
    Li and Woodruff \cite{li2020input} ($p \in [1,2)$) & 
    \begin{tabular}{@{}c@{}}$O(\nnz(A)\log n) + \tilde{O}_p(mk^{2(\omega-1)/p}/\varepsilon^{(4/p-1)(\omega-1)}$) \\ $+ \tilde{O}_p(k^{2\omega/p}/\varepsilon^{(4/p-1)(2\omega+2)})$\end{tabular}
    \\
    Li and Woodruff \cite{li2020input} ($p > 2$) & 
    \begin{tabular}{@{}c@{}}
    $O(\nnz(A)\log n) + \tilde{O}_p(n^{\omega(1-2/p)}k^{2\omega/p}/\varepsilon^{2\omega/(p+2)})$ \\ $ + \tilde{O}_p(mn^{(\omega-1)(1-2/p)}(k/\varepsilon)^{2(\omega-1)/p})$\end{tabular}\\
    Bakshi et al. \cite{bakshi2022low} & $O(p^{1/6}\varepsilon^{-1/3}\nnz(A)k\log(n/\varepsilon) + mp^{(\omega-1)/6}k^{\omega-1}\varepsilon^{-(\omega-1)/3})$\\\bottomrule
    \end{tabular}
    \caption{Running times for $1+\varepsilon$ rank-$k$ Schatten-$p$ LRA algorithms for $m \times n$ matrices assuming $m \ge n$.}
    \label{tab:results-summary}
\end{table}
The sketch-based algorithms are usually non-adaptive and the iterative algorithms adaptively pick their matrix-vector product queries depending on the results in the previous round which makes them powerful as we can see from the superior running time over sketch-based algorithms when we desire solutions with small $\varepsilon$.

\subparagraph*{Sketching Algorithms.} Li and Woodruff \cite{li2020input} gave (almost) input-sparsity time algorithms for Schatten-$p$ LRA,  extending the earlier input-sparsity time algorithms for Frobenius norm LRA from \cite{cw-lra}. For $p < 2$, their algorithm runs in $\tilde{O}(\nnz(A) + \max(m, n) \cdot \poly(k/\varepsilon))$ time and for $p > 2$, their algorithm runs in $\tilde{O}(\nnz(A) + \max(m, n) \cdot \min(m, n)^{\alpha_p}\poly(k/\varepsilon))$ time, where $\alpha_p = (\omega-1)(1-2/p)$. Note that for the current value of $\omega \approx 2.37$, their algorithm runs in $\Omega(mn)$ time for $p \ge 7.4$ and hence is not an ``input-sparsity time'' algorithm but for all constant $p,k,\varepsilon$, their algorithm runs in $o(\min(mn^{\omega-1}, nm^{\omega-1})$ time and therefore is faster than computing the SVD.

\subparagraph*{Iterative Algorithms.} Recently, Bakshi, Clarkson, and Woodruff \cite{bakshi2022low} gave an iterative algorithm for Schatten-$p$ LRA. Their algorithm runs the Block Krylov iteration algorithm of Musco and Musco \cite{musco2015randomized} at \emph{two} different block sizes for different number of iterations respectively. They show that the algorithm succeeds in computing a low rank approximation at one of the block sizes and show how to compute which block size succeeds in computing the approximation. For Schatten-$p$ LRA, their algorithm requires $O(kp^{1/6}\poly(\log n)/\varepsilon^{1/3})$ matrix-vector products with the matrix $A$ and hence can be implemented in $\tilde{O}(\nnz(A)kp^{1/6}/\varepsilon^{1/3})$ time. At a high level, their algorithm runs the Block Krylov iteration algorithm with block size $k$ for $O(p^{1/6}\varepsilon^{-1/3}\poly(\log n))$ iterations and with block size $O(p^{-1/3}\varepsilon^{-1/3}k)$ for $O(\sqrt{p}\poly(\log n))$ iterations. They set these parameters such that the algorithm requires an overall same number of matrix-vector products with $A$ at both block sizes. They argue that for a matrix with a ``flat'' spectrum, the low rank approximation computed by the block size $k$ algorithm is a $1+\varepsilon$ approximation and for a matrix with a ``non-flat'' spectrum, the solution computed by block size $O(p^{-1/3}\varepsilon^{-1/3}k)$ algorithm is a $1+\varepsilon$ approximation. 

\subparagraph*{Comparison.} As we can see from Table~\ref{tab:results-summary}, the running times of these algorithms depend in a quite complicated way on the parameters $\nnz(A)$, $m$, $n$, $\varepsilon$ and $p$. Throughout the paper, we assume that $m = n$, $\nnz(A) = n^2$ (i.e., the matrix $A$ is dense) and $k \le n^{c}$ for a small constant $c$ so that $k \ll n$. In some cases, where sparsity in the datasets cannot be well exploited, such as when processing the datasets using GPUs, it is natural to analyze the time complexities of the algorithms and compare the performances assuming that the inputs are dense.

For $p \in [1,2)$, we have that the time complexity of the algorithm of \cite{li2020input} is $O(n^2\log n + n\poly(k)/\varepsilon^{(4/p-1)(\omega-1)} + \poly(k)/\varepsilon^{(4/p-1)(2\omega+2)})$ and the time complexity of the algorithm of \cite{bakshi2022low} is $O(\varepsilon^{-1/3}n^2 k\log(n) + n\poly(k)/\varepsilon^{(\omega-1)/3})$. We see that only when
\begin{align*}
    1/\varepsilon > {n^{\frac{1}{(4/p-1)(\omega+1)-1/6}}},
\end{align*}
the algorithm of \cite{bakshi2022low} is faster than the sketching based algorithm of \cite{li2020input}.
For $\omega \approx 2.371$ and $p=1$, the above is achieved only when $1/\varepsilon \ge n^{\approx 0.1}$. Hence, in the  high accuracy regime, the algorithm of \cite{bakshi2022low} is faster than that of the sketching based algorithm of \cite{li2020input}. For other values of $p \in [1,2)$, $\varepsilon$ has to be even smaller than $1/n^{0.1}$ for the algorithm of \cite{bakshi2022low} to be faster than the algorithm of \cite{li2020input}.

For comparing the algorithms in the case $p > 2$, first we pick $\varepsilon$ to be a constant and obtain that the running time of the algorithm of \cite{li2020input} is $O(n^2 \log n + n^{1+(\omega-1)(1-2/p)}\poly(k))$ and the algorithm of \cite{bakshi2022low} has a running time of $O(p^{1/6}n^2k \log(n))$. Thus, as long as $(\omega-1)(1-2/p) \le 1$, the sketch-based algorithm is faster than the iterative algorithm. We call $p$ such that $(\omega-1)(1-2/p) \le 1$, the \emph{crossover point} from ``sketch'' to ``iterative''. For the current value of $\omega \approx 2.371$, the crossover point is $\approx 7.39$.

Now consider the case of $\varepsilon = 1/n$ and constant $p$. The iterative algorithm of \cite{bakshi2022low} has a running time of $O(n^{2+1/3}k\log (n))$ and the sketch based algorithm of \cite{li2020input} has a running time of $O(n^{\omega}\poly(k))$ and thus offers no improvement over the na\"ive SVD algorithm. This again shows that in the high precision regime, the small dependence on $\varepsilon$ in the running time of the algorithm of \cite{bakshi2022low} is crucial to obtain better than $O(n^{\omega})$ time algorithm. Overall, we summarize the comparison between the algorithms in Table~\ref{tab:results-comparison}.
    \begin{table}
    \centering
    \begin{tabular}{l c c}\toprule
         & Small $\varepsilon$ ($\approx 1/n$) & Large $\varepsilon$\\ \midrule
         $p \in [1,2)$ & Iterative & Sketching  \\
         $2 < p < \text{crossover}$ & Iterative   & Sketching \\
         $p > \text{crossover}$ & Iterative  & Iterative \\ \bottomrule
    \end{tabular}
    \caption{In the case of $m = n$, $\nnz(A) = n^2$ and $k = n^{o(1)}$, the table lists which of the previous works is asymptotically faster for the current value of $\omega \approx 2.371$. \textbf{Iterative} algorithm refers to the algorithm of  \cite{bakshi2022low} and the \textbf{Sketching} algorithm refers to the algorithm of \cite{li2020input}. In the above, crossover $\approx 7.4$.}
    \label{tab:results-comparison}
\end{table}

\subparagraph*{Our Improvements.} We first \emph{improve} the time complexity of the iterative algorithm of \cite{bakshi2022low} for \emph{all} parameter regimes. While the focus of their paper was to minimize the number of matrix-vector products required, we observe that by using fast rectangular matrix multiplication algorithms, we can obtain even faster algorithms using their technique of running the block Krylov iteration algorithm at different block sizes. Fast rectangular matrix multiplication algorithms let us obtain a different block-size vs iteration trade-off giving us faster algorithms. This algorithm directly achieves the fastest running times for small $\varepsilon$ since we improve upon \cite{bakshi2022low} in all regimes.

We saw above that for constant $\varepsilon$, the sketch based algorithm takes only $O(n^2 \log n)$ time when $p \lesssim 7.4$ and hence cannot be improved upon over asymptotically by more than $\polylog(n)$ factors in that regime. We show that using a combination of our fast iterative algorithm and the algorithm of \cite{li2020input} gives an algorithm that runs in near-linear time\footnote{Note the near-linear here means $\tilde{O}(n^2)$ as the input-matrix is assumed to have $n^2$ nonzero entries.} for all $p \lesssim 22$ for appropriate $\varepsilon$ values extending the values of $p$ for which a Schatten-$p$ LRA can be computed in $O(n^2  \log n)$ time, when the rank parameter $k \le n^c$.

Our combined algorithm works as follows: to solve a sub-problem in the algorithm of \cite{li2020input}, we run our improved iterative algorithm for Schatten-$p$ LRA with accuracy parameter $\varepsilon = 1/n$. As our improved iterative algorithm has a better dependence on $\varepsilon$ than earlier algorithms, we obtain a faster algorithm for solving the sub-problem and hence obtain an $O(n^2 \log n)$ time algorithm for all $p \lesssim 22$. Thus, improving the performance of iterative algorithms in the small $\varepsilon$ regime let us obtain faster algorithms overall in the large $\varepsilon$ regime!
\paragraph*{Numerically Stable Algorithms}
While the algorithm of \cite{bakshi2022low} and our modification give fast algorithms for Schatten-$p$ Low Rank Approximation, it is not known if the Block Krylov iteration algorithm is stable when implemented on a floating point machine with $O(\log (n/\varepsilon))$ bits of precision. It is a major open question in numerical linear algebra to show if the Block Krylov iteration algorithm is stable. Obtaining fast algorithms that provably work on finite precision machines is a tricky problem in general. We note that until the recent work of Banks, Garza-Vargas, Kulkarni and Srivastava \cite{banks2022pseudospectral}, it was not clear if an eigendecomposition of a matrix could be computed in $\tilde{O}(n^{\omega})$ time on a finite precision machine. Building on these ideas, another recent work \cite{sobczyk2023hermitian} obtains fast and stable algorithms for the generalized eigenvalue problem.  The sketch-and-solve methods, such as the algorithm of \cite{li2020input}, are usually stable as the operations do not blow up the magnitude of the entries. As we note above, for large $p$, the algorithms in \cite{li2020input} are not input-sparsity time and hence an important question is if there are any stable input-sparsity time algorithms for large $p$. We answer this question in affirmative by showing that the LazySVD algorithm of \cite{allen2016lazysvd} can be stably implemented on a floating point machine with $O(\log m\kappa/\varepsilon)$ bits of precision where $\kappa = \sigma_1(A)/\sigma_{k+1}(A)$. The LazySVD algorithm computes a low rank approximation for all $p \ge 2$. 

Similar to the Block Krylov iteration algorithm, LazySVD also needs $O(k\poly(\log n)/\sqrt{\varepsilon})$ matrix-vector products with $A$. Additionally, the factorization output by LazySVD is \emph{simultaneously} a $1+\varepsilon$ approximation for all $p \ge 2$. To find a rank-$k$ approximation of $A$, the LazySVD algorithm first computes a unit vector $v$ which is an approximation to the top eigenvector of $\T{A}A$. Then the algorithm deflates $\T{A}A$ and forms the matrix $(I-v\T{v})\T{A}A(I-v\T{v})$ and proceeds to find an approximation to the top eigenvector of $(I-v\T{v})\T{A}A(I-v\T{v})$ and so on for a total of $k$ rounds. The authors show that the span of $k$ vectors found across all the iterations contains a $1+\varepsilon$ approximation if the eigenvector approximations satisfy an appropriate condition. Thus, to implement the LazySVD algorithm on a floating point machine, we first need a stable routine that can compute approximations to the top eigenvector of a given matrix. We show that such a routine can be implemented stably using the Lanczos algorithm \cite{stability-of-lanczos}. We additionally modify the LazySVD algorithm and show that the modification allows us to compute matrix-vector products with the deflated matrix to a good enough approximation which lets the Lanczos algorithm compute an approximation to the top eigenvector of the deflated matrix. Our slight modification to LazySVD turns out to be important in making the stability analysis go through. 

The novelty of our stability analysis is that instead of showing each of the vectors $\tilde{v}_1,\ldots, \tilde{v}_k$ computed by a finite precision algorithm are close to the vectors $v_1, \ldots, v_k$ that would be computed by an algorithm with unbounded precision, we essentially argue that for all $i$, the projection matrices onto the subspaces spanned by $\tilde{v}_1,\ldots, \tilde{v}_i$ and $v_1, \ldots, v_i$ are close using induction. This change makes the stability analysis work with only a polylogarithmic number of bits of precision whereas showing all $\tilde{v}_i$s are individually close to corresponding $v_i$s would require polynomially many bits of precision.

\subsection{Our Results}
In the following, $\alpha$ denotes the constant such that an arbitrary $n \times n$ matrix can be multiplied with an arbitrary $n \times n^{\alpha}$ matrix using $O(n^{2+\eta})$ arithmetic operations for any constant $\eta > 0$. The matrix multiplication exponent $\omega$ is the smallest constant such that an arbitrary $n \times n$ matrix can be multiplied with an arbitrary $n \times n$ matrix using $O(n^{\omega+\eta})$ arithmetic operations for any constant $\eta > 0$. For simplicity, we ignore the constant $\eta$, and write as if the matrices can be multiplied in $O(n^{\omega})$ time. We define $\beta \coloneqq (\omega-2)/(1-\alpha)$. Note that $\beta \le 1$.\footnote{See Section~\ref{subsec:fast-multi}.}
\begin{theorem}[Informal, Theorem~\ref{thm:final-theorem-our-algorithm}]
	Given an $n\times n$ matrix $A$, a rank parameter $k$ and an accuracy parameter $\varepsilon$, there is an algorithm that outputs a rank-$k$ orthonormal matrix $W$ that with probability $\ge 0.9$ satisfies,
$
		\lp{A(I-W\T{W})} \le (1+O(\varepsilon))\lp{A - A_k}.
$
If $k \le \varepsilon \cdot n^{\alpha}$, then the algorithm runs in $\tilde{O}(\sqrt{p}n^{2+\eta})$ time for any constant $\eta > 0$.
\label{thm:informal-main-theorem}
\end{theorem}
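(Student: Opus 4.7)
The plan is to follow the two-block-size Block Krylov framework of Bakshi, Clarkson, and Woodruff~\cite{bakshi2022low}, but to choose substantially larger block sizes so as to exploit fast rectangular matrix multiplication. By definition of $\alpha$, multiplying $A$ (which is $n \times n$) against an $n \times B$ matrix costs only $O(n^{2+\eta})$ arithmetic operations whenever $B \le n^{\alpha}$. The hypothesis $k \le \varepsilon\, n^{\alpha}$ guarantees that both the ``small'' block size $\Theta(k)$ and the ``large'' block size $\Theta(k/(p^{1/3}\varepsilon^{1/3}))$ used in BCW sit comfortably below $n^{\alpha}$, so enlarging them until each Krylov step is as cheap as a single $O(n^{2+\eta})$ multiplication is essentially free.

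Concretely, I would run two Block Krylov instances in parallel, each for $T = \tilde O(\sqrt{p})$ outer iterations, at two carefully chosen block sizes so that one instance handles matrices with a roughly flat top spectrum and the other handles matrices whose top singular values decay quickly. BCW's spectrum dichotomy guarantees that at least one of the two candidate rank-$k$ outputs is a $(1+O(\varepsilon))$ Schatten-$p$ LRA of $A$, and we pick the winner via their post-processing step, which approximately evaluates $\lp{A(I-W\T{W})}$ for each candidate. After $T$ outer steps the accumulated Krylov subspace has dimension $O(BT) \le \tilde O(\sqrt{p}\, n^{\alpha})$, and the projection of $A$ onto it followed by the final thin SVD fits within $\tilde O(\sqrt{p}\, n^{2+\eta})$ by another appeal to rectangular matrix multiplication and the bound $\beta \le 1$; the intermediate block orthogonalizations are strictly cheaper.

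The main obstacle will be the Block Krylov convergence analysis in the ``flat spectrum'' regime. BCW's original argument at block size $\Theta(k)$ uses $\tilde O(p^{1/6}/\varepsilon^{1/3})$ iterations, whereas we need to bring this down to $\tilde O(\sqrt{p})$ in order to attain the stated runtime. I would attack this by revisiting BCW's spectrum-splitting lemma with the block size pushed up, leveraging the fact that a larger block captures more of the top spectrum per iteration and hence trades off iteration count for block size; the key technical content is an adapted lemma certifying that, for every singular value profile of $A$, at least one of the two enlarged block sizes induces convergence in $\tilde O(\sqrt{p})$ iterations. Once this lemma is in place, the runtime accounting is routine, and correctness with probability $\ge 0.9$ follows from standard concentration of the Gaussian starting block combined with BCW's validation step.
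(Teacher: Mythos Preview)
Your high-level plan matches the paper's: run Block Krylov at two block sizes, each for $\tilde{O}(\sqrt{p})$ iterations, with both block sizes kept at most $n^{\alpha}$ so every product with $A$ costs $O(n^{2+\eta})$. Two points, however, need correction.

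\textbf{The convergence step is not an obstacle.} No new spectrum-splitting lemma is required. The paper uses the \emph{existing} Musco--Musco bounds verbatim (the per-vector error $\sigma_i^2 - (c/q^2)\sigma_{k+1}^2$ and the gap-dependent bound) and observes a simple parameter coupling: if the small block (size $k$) is run for $q$ iterations and fails, i.e.\ $(k/q^2)\sigma_{k+1}^p > \varepsilon\lp{A-A_k}^p$, then since $\lp{A-A_k}^p \ge b'\sigma_{b'+k}^p$ one gets $\sigma_{k+1}/\sigma_{b'+k} \ge (b'\varepsilon q^2/k)^{1/p}$. Choosing $b' \ge (3/2)k/(\varepsilon q^2)$ forces a gap $\ge 1/(5p)$, so the large block succeeds in $O(\sqrt{p}\log(n/\varepsilon))$ iterations by the gap-dependent bound. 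Setting $q=\sqrt{p}$ gives $b' \approx k/(p\varepsilon) \le n^{\alpha}/p$, which fits under $n^{\alpha}$. So the ``adapted lemma'' you anticipate is a two-line calculation, not a new convergence result.

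\textbf{The selection step is a genuine gap.} You propose to choose between $W_1$ and $W_2$ by approximately evaluating $\lp{A(I-W_j\T{W_j})}$. The paper explicitly rules this out: computing these Schatten norms to the required $(1+\varepsilon)$ accuracy is too slow and would swamp the $\tilde{O}(\sqrt{p}\,n^{2+\eta})$ budget. Instead, the paper (following \cite{bakshi2022low}) reads off estimates $\hat\sigma_k$ and $\hat\sigma_{b'+k}$ from the large-block Krylov run for free and tests whether $\hat\sigma_k \ge (1+1/2p)\hat\sigma_{b'+k}$. If yes, there is a $1/(5p)$ gap and $W_2$ is returned; if no, then $\sigma_{k+1}^p \le O(1)\lp{A-A_k}^p/b'$, which forces the small-block residual to be small, and $W_1$ is returned. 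You should replace your residual-evaluation step with this singular-value comparison.
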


Combining the algorithm in the above theorem and the algorithm of \cite{li2020input}, we obtain the following result:
\begin{theorem}[Informal, Theorem~\ref{thm:combination}]
	Given an $n \times n$ matrix $A$, a rank parameter $k$ independent of $n$ and any constant $\eta > 0$, there is a randomized algorithm that runs in time $\tilde{O}((n^{1-2/p})^{2+\eta+(1-\alpha)\beta/(1+2\beta)}\poly(1/\varepsilon
) + n^2)$ and outputs a rank-$k$ projection $\hat{Q}$ that satisfies
$
		\|A(I-\hat{Q})\|_{S_p}^p \le (1+\varepsilon)\|A-A_k\|_{S_p}^p,
$
with probability $\ge 0.9$
\end{theorem}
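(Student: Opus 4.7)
My plan is to use the algorithm of \cite{li2020input} for $p>2$ as a black-box reduction from rank-$k$ Schatten-$p$ LRA on $A \in \R^{n \times n}$ to rank-$k$ Schatten-$p$ LRA on a much smaller sketched matrix $B$, and then to solve the resulting inner LRA using the fast iterative algorithm of Theorem~\ref{thm:informal-main-theorem} in place of the solver built into \cite{li2020input}. Concretely, the algorithm of \cite{li2020input} produces $B$ from $A$ in $\tilde{O}(\nnz(A)) = \tilde{O}(n^2)$ time, with $B$ having nontrivial dimension $N \coloneqq n^{1-2/p}$ (up to $\poly(k, 1/\varepsilon, \log n)$ factors) and with the guarantee that any sufficiently accurate rank-$k$ Schatten-$p$ LRA of $B$ lifts back to a $(1+\varepsilon)$-approximate rank-$k$ LRA of $A$. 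This preprocessing step is what gives the additive $\tilde{O}(n^2)$ term in the claimed bound.

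Next, I would invoke the full version of Theorem~\ref{thm:informal-main-theorem}, namely Theorem~\ref{thm:final-theorem-our-algorithm}, on $B$ with internal accuracy $\varepsilon' \coloneqq \varepsilon / \poly(N)$ chosen to absorb the error amplification of the Li--Woodruff reduction. Since $k$ is independent of $n$, the regime condition $k \le \varepsilon' N^\alpha$ of Theorem~\ref{thm:informal-main-theorem} is trivially satisfied, and the resulting inner call runs in time $\tilde{O}(N^{2+\eta+(1-\alpha)\beta/(1+2\beta)}\poly(1/\varepsilon'))$. The extra exponent $(1-\alpha)\beta/(1+2\beta)$ on $N$ comes precisely from the two-block-size balance inside the Block Krylov iteration that yields the speedup of our main algorithm; since $(1-\alpha)\beta = \omega-2$, it collapses to $0$ when $\omega = 2$. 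Because the $\varepsilon'$-dependence is only polynomial, setting $\varepsilon' = \varepsilon/\poly(N)$ costs only a $\poly(N/\varepsilon)$ factor in the inner runtime, which I would absorb into the $\poly(1/\varepsilon)$ overhead already present.

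I would then put the pieces together. Correctness of the output projection $\hat{Q}$ follows by composing the $(1+\varepsilon')$-approximation guarantee of Theorem~\ref{thm:informal-main-theorem} with the reduction of \cite{li2020input}, and the total running time is the sum of the preprocessing and inner-call costs, giving $\tilde{O}((n^{1-2/p})^{2+\eta+(1-\alpha)\beta/(1+2\beta)}\poly(1/\varepsilon) + n^2)$ as claimed. The $(1+\varepsilon)^p$ versus $(1+\varepsilon)$ discrepancy between the Schatten-$p$ norm and its $p$-th power in the output guarantee is handled by a constant rescaling of $\varepsilon$, since $p$ is treated as a constant throughout.

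The main obstacle will be revisiting the reduction of \cite{li2020input} carefully enough to pin down the exact internal accuracy $\varepsilon'$ it demands, and then verifying that every $\poly(N)$ blowup arising in their analysis is tolerated by the small-$\varepsilon$ regime of Theorem~\ref{thm:informal-main-theorem} without inflating the stated exponent on $n^{1-2/p}$. A secondary subtlety is that the sketched matrix $B$ is rectangular rather than square; I would handle this either by running the iterative algorithm on $\T{B}B$, which is square of side $N$, or by padding $B$ to a square, at an overhead that is already covered by the stated bound.
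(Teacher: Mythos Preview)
Your high-level plan---sketch via \cite{li2020input} down to dimension $N\approx n^{1-2/p}$, then run Algorithm~\ref{alg:schatten-norm-subspace-apx} on the sketch at boosted accuracy---is exactly the paper's approach. But two concrete points are off.

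First, the Li--Woodruff reduction does \emph{not} reduce Schatten-$p$ LRA on $A$ to Schatten-$p$ LRA on the sketched matrix. What their Lemma~4.2 actually gives is that a $(1+\varepsilon)$-approximation on $SA$ in the \emph{truncated} norm $\|M\|_{(p,r)}\coloneqq(\sum_{i\le r}\sigma_i(M)^p)^{1/p}$ lifts back to a Schatten-$p$ approximation on $A$; a plain Schatten-$p$ guarantee on the sketch is not what their reduction consumes. The paper fills this gap with a new lemma (Lemma~\ref{lma:truncated-to-full-reduction}) showing, via Weyl's inequality, that any $(1+\varepsilon')$ Schatten-$p$ solution on an $m$-row matrix is automatically a $(1+\varepsilon'\lceil (m-k)/r\rceil)$ solution in $\|\cdot\|_{(p,r)}$. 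This is the mechanism behind the ``error amplification'' you allude to, and it is what pins down $\varepsilon' = \Theta(\varepsilon/N)$ rather than a generic $\varepsilon/\poly(N)$.

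Second, your regime claim is wrong: with $\varepsilon'=\varepsilon/\poly(N)$ and constant $k$, the condition $k\le \varepsilon' N^\alpha$ \emph{fails} once $N$ is large, since $\alpha<1$ forces $\varepsilon' N^\alpha=o(1)$. You land in case~2 of Theorem~\ref{thm:final-theorem-our-algorithm}, and substituting $\varepsilon'\sim\varepsilon/N$ there yields $N^{2+\eta}\,(kN^{1-\alpha}/\varepsilon)^{\beta/(1+2\beta)}=N^{2+\eta+(1-\alpha)\beta/(1+2\beta)}(k/\varepsilon)^{\beta/(1+2\beta)}$. So the extra exponent $(1-\alpha)\beta/(1+2\beta)$ is precisely the price of running at accuracy $\varepsilon/N$ in regime~2; it is not a separate baseline cost that you can then multiply by another $\poly(N/\varepsilon)$ and ``absorb into $\poly(1/\varepsilon)$'' (any such factor would change the exponent on $N$). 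Finally, for the rectangularity: $B=SA$ is $N\times n$, so $\T{B}B$ is $n\times n$, not $N\times N$; the paper instead applies a second SRHT on the long side before invoking the iterative algorithm (Lemma~\ref{lma:full-lemma-truncated}).
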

The above theorem shows that for all $p$ at most a suitable constant, the algorithm runs in $\tilde{O}(n^2)$ time for $\varepsilon > 1/n^{c_p}$ for a small enough constant $c_p$ and hence is faster than using the algorithm of \cite{li2020input} or the algorithm in Theorem~\ref{thm:informal-main-theorem}. 

The following result shows that our modification of LazySVD can be stably implemented on a floating point machine.
\begin{theorem}[Informal, Theorem~\ref{thm:lazysvd-stability}]
    Given an $n \times d$ matrix $A$ with condition number $\kappa(A) = \sigma_1(A)/\sigma_{k+1}(A)$, an accuracy parameter $\varepsilon$, a rank parameter $k$ and probability parameter $\eta$, if the machine precision $\varepsilon_{\machine} \le \poly(\varepsilon\eta/n\kappa(A))$, then there is an algorithm that outputs a $d \times k$ matrix $V_k$ such that $\kappa(V_k) \le 4$ and with probability $\ge 1 - \eta$, for all $p \in [2, \infty]$,
\[\|A(I - \Proj_{\colspace(V_k)})\|_{S_p} \le (1 + O(\varepsilon))\|A - A_k\|_{S_p},\]
    and runs in time $O(\frac{\nnz(A)k}{\sqrt{\varepsilon}}\poly(\log(d\kappa(A)/\varepsilon\eta)) + d\poly(k, \log(dk/\eta\varepsilon)))$.
\end{theorem}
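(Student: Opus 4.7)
The plan is to implement the LazySVD algorithm of \cite{allen2016lazysvd} in floating point, replacing each exact top-eigenvector call by a stable Lanczos approximation in the finite-precision model of \cite{stability-of-lanczos}. LazySVD at step $i$ requests the top eigenvector of the deflated matrix $M_i = (I - V_{i-1}\T{V_{i-1}})\T{A}A(I - V_{i-1}\T{V_{i-1}})$. We do not form $M_i$ explicitly; instead, given a Lanczos query $x$, we evaluate $M_i x$ in the order $x' = x - V_{i-1}(\T{V_{i-1}} x)$, $y = \T{A}(Ax')$, and return $y - V_{i-1}(\T{V_{i-1}} y)$. A standard forward-error analysis shows that if $\kappa(V_{i-1}) \le 4$ and $\varepsilon_{\machine}$ is at most the polynomial in the theorem statement, the computed output agrees with $M_i x$ up to additive error $\ll \varepsilon\,\sigma_{k+1}(A)^2\,\|x\|_2$. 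This is precisely the perturbation regime in which the finite-precision Lanczos analysis returns a unit vector $\tilde{v}_i$ whose Rayleigh quotient with the realised matrix $\tilde{M}_i$ is within $\varepsilon\,\sigma_1(\tilde{M}_i)$ of $\sigma_1(\tilde{M}_i)$ after $O(\poly(\log)/\sqrt{\varepsilon})$ matrix-vector products, and the additive matvec error transfers this Rayleigh-quotient guarantee from $\tilde{M}_i$ back to $M_i$.

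Alongside the raw outputs $\tilde{v}_1,\ldots,\tilde{v}_i$, the algorithm maintains an orthonormal basis produced by modified Gram--Schmidt with a single reorthogonalization, packaged into the returned $V_k$; a short floating-point analysis of this routine gives $\kappa(V_k) \le 4$, which is the first conclusion of the theorem. Let $\tilde{P}_i$ denote the projection onto the span of the first $i$ computed basis vectors and $P_i$ the projection the exact LazySVD would have produced. The central inductive invariant is that $\opnorm{\tilde{P}_i - P_i}$ remains at most $\poly(i)\cdot\varepsilon_{\machine}\cdot\poly(n,\kappa(A)/\varepsilon)$, and hence negligible under the assumed precision. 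The inductive step uses the closeness at $i-1$ to bound $\opnorm{M_i - \tilde{M}_i}$, transfers the Lanczos Rayleigh-quotient guarantee from $\tilde{M}_i$ to $M_i$, and then invokes the gap-free deflation lemma of LazySVD, which is exactly the statement that a basis of good Rayleigh quotients advances the projection bound to step $i$.

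The novelty emphasised in the introduction, and what I expect to be the main technical obstacle, is that the induction is phrased on projections rather than on individual vectors. When the singular values of $A$ cluster, the exact $v_i$ and the computed $\tilde{v}_i$ can be nearly orthogonal, and forcing coordinate-wise agreement would require polynomially many bits. Projections onto the spans, however, are stable because the LazySVD correctness proof depends only on an aggregated Rayleigh-quotient bound for the basis and not on which representative is selected inside a near-degenerate cluster; formalising this within the finite-precision Lanczos error model, while simultaneously controlling the growth of constants in the inductive invariant, is the delicate step.

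Finally, to convert $\opnorm{\tilde{P}_k - P_k} \le \varepsilon/\sqrt{k}$ into the Schatten-$p$ bound, write $A(I - \tilde{P}_k) = A(I - P_k) + A(P_k - \tilde{P}_k)$. The first term satisfies the exact LazySVD guarantee, which \cite{allen2016lazysvd} proves holds simultaneously for every $p \ge 2$; the second has rank at most $2k$, so its Schatten-$p$ norm is bounded by $(2k)^{1/p}\opnorm{A}\opnorm{P_k - \tilde{P}_k} \le (2k)^{1/p}\kappa(A)\lp{A - A_k}\cdot(\varepsilon/\sqrt{k})$, which is absorbed into the $(1 + O(\varepsilon))$ factor after adjusting $\varepsilon$ by a $\poly(k,\kappa(A))$ factor that enters only logarithmically into the precision requirement. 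The runtime decomposes into $k$ Lanczos runs of $\tilde{O}(1/\sqrt{\varepsilon})$ iterations, each costing one $\nnz(A)$-sized matvec with $A$ plus $O(dk)$ work for deflation bookkeeping; the trailing $d\,\poly(k,\log(dk/\eta\varepsilon))$ term accounts for Gram--Schmidt and error-tracking overhead independent of $\nnz(A)$.
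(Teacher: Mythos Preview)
Your central inductive invariant---that $\|\tilde{P}_i - P_i\|$ stays at most $\poly(i)\cdot\varepsilon_{\machine}\cdot\poly(n,\kappa/\varepsilon)$ where $P_i$ is the projection an exact-arithmetic LazySVD run would produce---does not hold with the claimed precision, and the step you cite to close it is not what the LazySVD deflation lemma says. The recursion is multiplicative, not additive: a perturbation $\delta$ in the deflated matrix $M_{i-1}$ moves the normalized power-iteration vector $M_{i-1}^q z/\|M_{i-1}^q z\|$ by roughly $q\sqrt{d}\,\kappa\,\eta^{-1}\cdot\delta/\|M\|$, so each step amplifies the projection discrepancy by a factor $1+\Omega(q\sqrt{d}\kappa/\eta)$, and after $k$ steps the bound is exponential in $k$. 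That would force $\varepsilon_{\machine}$ to be exponentially small in $k$, not $\poly(\varepsilon\eta/n\kappa)$. Moreover, the ``gap-free deflation lemma'' of \cite{allen2016lazysvd} only bounds $\|M_i\|$ and the Schatten norms of the residual; it says nothing about $\tilde{P}_i$ being close to any canonical $P_i$, so it cannot advance your induction.

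The paper sidesteps this by \emph{never comparing to an exact run}. Its modified LazySVD (Algorithm~\ref{alg:modified-lazy-svd}) defines $M_{s-1}=(I-\Proj_{\colspace(V_{s-1})})M(I-\Proj_{\colspace(V_{s-1})})$ using the \emph{computed} $V_{s-1}$ and the \emph{exact} orthogonal projection onto its column space. The only inductive invariant is $\kappa(V_{s-1})\le 4$; given this, a backward-stable Householder least-squares solve computes $\Proj_{\colspace(V_{s-1})}x$---and hence matvecs with $M_{s-1}$---to accuracy $\varepsilon_{\machine}\poly(n)\kappa$, and Lemma~\ref{lma:apxpca} then certifies that the floating-point Lanczos output $v_s'$ satisfies the $\AppxPCA$ definition for \emph{this} $M_{s-1}$. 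Since $\colspace(V_{s-1})\subseteq\ker M_{s-1}$, the same $\AppxPCA$ bound yields $\|\Proj_{\colspace(V_{s-1})}v_s'\|^2\le\varepsilon_{\pca}$, which (Lemma~\ref{lma:condition-number-bound}) closes the condition-number induction. With every $v_s'$ directly certified as $\AppxPCA$ for its own $M_{s-1}$, Theorem~4.1 of \cite{allen2016lazysvd} applies as a black box---no coupling, no $A(P_k-\tilde{P}_k)$ correction. The introduction's phrasing about projections ``being close'' is suggestive of your route, but the actual proof does not do that; the point is that the black-box LazySVD analysis is insensitive to \emph{which} $\AppxPCA$ vector is returned, so there is no need to track a reference run at all.
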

In the above theorem, $\Proj_{\text{colspace}(M)}$ denotes the orthogonal projection matrix onto the column space of $M$.

\subsection{Implications to Practice}
While the theoretical fast rectangular matrix  multiplication algorithms are not practically efficient, the message of this paper is that by optimizing for the number of matrix-vector products as in \cite{bakshi2022low}, we are leaving a lot of performance on the table. In modern computing architectures, multiplying an $n \times n$ and an $n \times b$ matrix is, for example, much faster than $b$ times the time required to multiply the $n \times n$ and an $n \times 1$ vector because of data locality and the opportunities for parallelization. Thus, in the algorithm of \cite{bakshi2022low}, running the block size $k$ version for fewer iterations while increasing the larger block size $b$ can give faster algorithms in practice than using the parameters that optimize for the number of matrix-vector products. We include a small experiment in the appendix which compares the time required to compute the product of an $n \times n$ matrix with matrices that have different numbers of columns.

LazySVD with our stability analysis uses a similar number of matrix vector products as the widely used Block Krylov iteration algorithm while requiring only polylogarithmic bits of precision. While as mentioned above, block-based algorithms such as Block Krylov iteration can be much faster than single-vector algorithms such as LazySVD and our modification of it, it is only the case when the matrix is directly given to us. When the matrix is implicitly defined in other ways (for e.g., as the Hessian of a neural network where we can efficiently compute Jacobian-Vector products), the difference in performance between block-based algorithms and single-vector algorithms is less pronounced. When guarantees of stability are required, the fastest algorithms in practice for Low Rank Approximation should use some combination of sketching as in \cite{li2020input} to reduce dimension stably and then use our modification of LazySVD algorithm to find the necessary top $k$ subspace. 
\section{Preliminaries}
\subsection{Notation} For a positive integer $n$, we use $[n]$ to denote the set $\set{1,\ldots,n}$. We use the notation $\tilde{O}(f(n))$ to denote $O(f(n)\poly(\log(f(n))))$ and $\tilde{O}_q(f(n))$ to hide the multiplicative factors that depend only on the parameter $q$. For a vector $x$, we use $\|{x}\|_2 = (\sum_i |x_i|^2)^{1/2}$ to denote the Euclidean norm of $x$.
Given an $m \times n$ matrix $A$, we use $A_{i,j}$ to denote the entry in the index $(i,j)$ of $A$. We use $A_{i*}$ to denote the $i$-th row of $A$ and $A_{*j}$ to denote the $j$-th column. We identify the multiplication of an $m \times n$ matrix with an $n \times k$ matrix with the notation $[m, n, k]$. For a matrix $A$, we use $\colspace(A)$ to denote the vector space $\setbuilder{Ax}{x \in \R^n}$. For any vector space $V \in \R^n$, we use $\Proj_V$ to denote the linear operator which maps a vector $x$ to the projection of $x$ in the subspace $V$ i.e., the nearest vector to $x$ in $V$ in terms of Euclidean distance. If the columns of $X$ are an orthonormal basis for $V$, then $\Proj_{V} = X\T{X}$.

Let $A = U\Sigma\T{V}$ be the singular value decomposition (SVD) of $A$ and let $\sigma_1 \ge \cdots \ge \sigma_{n}$ (recall $m \ge n$) denote the singular values of $A$. For $k \le n$, let $A_k \coloneqq \sum_{i=1}^k \sigma_i U_{*i} (\T{V})_{i*}$ be the matrix obtained by truncating the SVD of $A$ to the top $k$ singular values. 

We use $\frnorm{A}$ to denote the Frobenius norm $(\sum_{i,j}A_{i,j}^2)^{1/2}$ and $\opnorm{A}$ to denote the operator norm $\max_{x \ne 0}\opnorm{Ax}/\opnorm{x}$. For $p \ge 1$, we define $\lp{A} = (\sum_{i=1}^{n} \sigma_i^p)^{1/p}$ to be the Schatten-$p$ norm. As $\lp{ \cdot }$ defines a norm, we have $\lp{A+B} \le \lp{A} + \lp{B}$ for any two $m \times n$ matrices $A$ and $B$. Additionally, we have $\lp{\T{A}} = \lp{A}$ and for any unitary matrices $U', V'$, we have $\lp{U'AV'} = \lp{A}$. 
\subsection{Fast Rectangular Matrix Multiplication}\label{subsec:fast-multi}
Let $\omega$ denote the best matrix multiplication exponent. The current upper bound on $\omega$ is $\approx 2.371$ \cite{duan2022faster}
and for $\gamma < 1$, let $\omega(\gamma)$ denote the exponent such that the product of an $n \times n$ with an $n \times n^{\gamma}$ matrix can be computed using $O(n^{\omega(\gamma) + \eta})$ arithmetic operations for any constant $\eta > 0$. There exists $\alpha > 0.31$ \cite{gall2024faster, gall2018improved} such that for all $\gamma < \alpha$, $\omega(\gamma) = 2$ and for all $\gamma \ge \alpha$, 
\begin{align*}
	\omega(\gamma) \le 2 + (\omega-2)\frac{\gamma - \alpha}{1-\alpha}.
\end{align*}
See \cite{le2012faster, lotti1983asymptotic} for the above bound on $\omega(\gamma)$. Recall $\beta \coloneqq \frac{\omega-2}{1-\alpha}$. We now observe that $n^{1 - \alpha}n^2 \ge n^{\omega}$ since a matrix product of the form $[n,n,n]$ can be computed using $n^{1-\alpha}$ matrix products of the form $[n, n, n^{\alpha}]$. Hence, $1-\alpha \ge \omega-2$, which implies $\beta \le 1$.

\section{Schatten-\texorpdfstring{$p$}{p} LRA using Fast Matrix Multiplication}\label{sec:main-algorithm}
\begin{small}
    \begin{algorithm}
\caption{Block Krylov Iteration Algorithm \cite{musco2015randomized}}\label{alg:block-krylov}
\KwIn{An $n \times n$ matrix $A$, rank parameter $k$, block size $b$ and number $q$ of iterations}	
\KwOut{An orthonormal matrix $Z \in \R^{n \times k}$}
\DontPrintSemicolon
$\Pi \sim \calN(0,1)^{n \times b}$\;
$K \gets \begin{bmatrix}
A\Pi & (A\T{A})A\Pi	& \cdots & (A\T{A})^{q}A\Pi
\end{bmatrix}$ \tcp*{The Krylov Matrix}
Orthonormalize columns of $K$ to get an $n \times qb$ matrix $Q$\;
Compute $M \coloneqq \T{Q}A\T{A}Q$\;
Set $\overline{U}_k$ to the top $k$ singular vectors of $M$\;
\Return{ $Z = Q\overline{U}_k$}\;
\end{algorithm}
\end{small}

\subsection{Block Krylov Iteration Algorithm}
The block Krylov Iteration algorithm of Musco and Musco \cite{musco2015randomized} is stated as Algorithm~\ref{alg:block-krylov}. For any $b$, let $T(n,b)$ be the time to multiply an $n \times n$ matrix with an $n \times b$ matrix. The Block Krylov iteration algorithm with rank parameter $k$, block size $b \ge k$ and iteration count $q$ (with $bq \le n$) runs in time at most
$
	(2q+1)T(n, b) + n(qb)^{\omega-1} + 3T(n, qb) + (qb)^{\omega} + T(n, k).
$\footnote{Assuming that SVD of the $qb \times qb$ matrix $M$ in Algorithm~\ref{alg:block-krylov} can be computed in time $O((qb)^{\omega})$.}

Using the fact that $T(n, qb) \le qT(n,b)$ and $qb \le n$, we obtain that the time complexity of the algorithm is $O(qT(n,b)+ n(qb)^{\omega-1})$. We now have $T(n, b) \ge (b/n) n^{\omega}$ since otherwise the matrix product of the form $[n, n, n]$  can be computed quicker than in $n^{\omega}$ time by computing the $n/b$ products of the form $[n, n, b]$. Hence, $qT(n, b) \ge qb n^{\omega-1} \ge n(qb)^{\omega-1}$ using $qb \le n$. Thus,  we obtain that the time complexity of the Block Krylov Iteration algorithm with parameters $k, b, q$ satisfying $b \ge k$ and $bq \le n$ is $O(qT(n,b))$. We now state a few properties of the Block Krylov algorithm that we use throughout the paper.
\begin{theorem}
With a large probability over the Gaussian matrix $\Pi$, the following properties hold for the matrix $Z$ computed by Algorithm~\ref{alg:block-krylov}:
\begin{enumerate}
    \item There is a universal constant $c$ such that for all $i \in [k]$,
\[
        \sigma_i(\T{Z}A)^2 \ge \opnorm{\T{A}(Z)_{*i}}^2 \ge \sigma_i^2 - ({c \log^2 n}/{q^2})\sigma_{k+1}^2.
\]
    This follows from the per-vector error guarantee of Theorem~1 in \cite{musco2015randomized}.
    \item If $\gap \coloneqq (\sigma_k/\sigma_{b+1}) - 1$ and $q \ge C\log(n/\varepsilon)/\sqrt{\min(1, \gap)}$ for a large enough constant $C$, then for all $i \in [k]$,
$
        \sigma_i(\T{Z}A)^2 \ge \opnorm{\T{A}(Z)_{*i}}^2 \ge \sigma_i^2 - \varepsilon\sigma_{k+1}^2.
$
\end{enumerate}
\label{thm:block-krylov}
\end{theorem}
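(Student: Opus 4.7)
The plan is to dispatch Part~1 by appealing directly to the per-vector guarantee, Theorem~1 of \cite{musco2015randomized}. Before doing so, I would observe that the two quantities $\sigma_i(\T{Z}A)^2$ and $\opnorm{\T{A}Z_{*i}}^2$ are in fact equal: since $Z = Q\overline{U}_k$ and $\overline{U}_k$ diagonalizes the PSD matrix $M = \T{Q}A\T{A}Q$, the matrix $\T{Z}A\T{A}Z$ is diagonal, so its $(i,i)$ entry equals both quantities at once. The claim therefore reduces to a single lower bound on $\sigma_i(\T{Z}A)^2$, and this is precisely the per-vector output of the Block Krylov algorithm with the gap-free polynomial used by Musco--Musco (whose degree-$q$ construction achieves error $O(\log^2 n / q^2) \cdot \sigma_{k+1}^2$ in the $A\T{A}$ spectrum).

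For Part~2 I would re-run the Musco--Musco template with a \emph{gap-amplifying} polynomial in place of the gap-free Chebyshev used there. Since $Q$ orthonormalizes the columns of $K$, the column space of $Q$ contains $p(A\T{A})A\Pi v$ for every scalar polynomial $p$ of degree at most $q$ and every $v \in \R^b$. I would choose $p$ to be a shifted and scaled Chebyshev polynomial of the first kind of degree $q$, tuned so that $p(x) \ge 1$ for $x \in [\sigma_k^2, \sigma_1^2]$ and $|p(x)| \le (\varepsilon/n)^{C'}$ for $x \in [0, \sigma_{b+1}^2]$. The classical Chebyshev amplification rate yields the attenuation $\exp(-\Omega(q\sqrt{\gap'}))$ with $\gap' \coloneqq \sigma_k^2/\sigma_{b+1}^2 - 1 \ge \gap$, so the hypothesis $q \ge C\log(n/\varepsilon)/\sqrt{\min(1,\gap)}$ with a large enough constant $C$ secures what we need.

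Writing $A = U\Sigma\T{V}$ and $G \coloneqq \T{V}\Pi$, the matrix $Y \coloneqq p(A\T{A})A\Pi = U\,p(\Sigma^2)\,\Sigma\, G$ lies in $\colspace(Q)$. A standard Gaussian anti-concentration bound shows that with constant probability the leading $k \times k$ block of $G$ is well-conditioned with all singular values $\Theta(\sqrt{b})$, while the tail rows of $G$ are $O(\sqrt{nb})$ in operator norm. Combined with the spectral separation built into $p$, this lets me carve out of $\colspace(Y)$, for every $i \le k$, an explicit vector $y_i$ close to a scalar multiple of the $i$-th left singular vector $u_i$. Feeding the span of $y_1,\ldots,y_i$ into the Courant--Fischer identity $\sigma_i(\T{Q}A\T{A}Q) = \max_{S \subseteq \colspace(Q),\,\dim S = i}\min_{x \in S,\,\|x\|=1}\|Ax\|^2$ then yields the claimed $\sigma_i^2 - \varepsilon\sigma_{k+1}^2$ lower bound.

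The main obstacle I anticipate is handling the ``middle-band'' indices $j \in (k, b]$, on which $p$ is not guaranteed to be small. The key observation is that the corresponding singular values are at most $\sigma_k$, so their contribution to $\|Ax\|^2$ never hurts the lower bound; to get the per-vector refinement for each $i \le k$ separately I would exploit the well-conditioning of the top $k \times k$ submatrix of $G$ to orthogonalize $y_1,\ldots,y_k$ so that each is, up to error controlled by $\exp(-\Omega(q\sqrt{\gap}))$, aligned with $u_i$. The Gaussian conditioning estimate and the polynomial construction are both classical, so once this middle-band alignment is made precise the proof closes by a short bookkeeping argument.
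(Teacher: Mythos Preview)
Your proposal is correct, but it does considerably more work than the paper does. The paper offers no proof of this theorem: the statement itself already records that Part~1 ``follows from the per-vector error guarantee of Theorem~1 in \cite{musco2015randomized}'', and immediately after the theorem the paper adds one sentence saying that Part~2 follows from the gap-dependent bounds in Theorem~11 of \cite{musco2015randomized}. That is the entire argument in the paper.

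Your Part~1 matches this, with the pleasant bonus observation that $\T{Z}A\T{A}Z$ is diagonal so the two quantities are actually equal rather than merely ordered. Your Part~2, by contrast, re-derives the content of Theorem~11 of \cite{musco2015randomized} from scratch via a shifted Chebyshev polynomial and Courant--Fischer. That sketch is sound (modulo the typo $\|Ax\|^2$ where you mean $\|\T{A}x\|^2$), and your identification of the middle band $j \in (k,b]$ as the delicate piece is exactly right. But none of this is needed here: the paper is content to invoke Musco--Musco as a black box, so a faithful proof would simply cite Theorem~11 and move on.
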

    The second guarantee in the above theorem follows from the gap-dependent error bounds in Theorem~11 in \cite{musco2015randomized}. Note the logarithmic dependence of $q$ on $1/\varepsilon$.
\begin{algorithm}
\caption{Schatten-$p$ Norm Subspace Approximation}\label{alg:schatten-norm-subspace-apx}
\DontPrintSemicolon
\KwIn{An $n \times n$ matrix $A$, rank parameter $k$ and an accuracy parameter $\varepsilon$}
\KwOut{Approximate Solution to the Schatten-$p$ Norm Subspace Approximation problem}
$q \gets \begin{cases}\sqrt{p} & k \le \varepsilon \cdot n^{\alpha} \\
\max(\sqrt{p}, p^{\frac{1}{2(1+2\beta)}}(k/n^{\alpha} \varepsilon)^{\frac{\beta}{1+2\beta}}) & \varepsilon \cdot n^{\alpha} \le k \le n^{\alpha}\\
\max(\sqrt{p}, {p^{\frac{1}{2(1+2\beta)}}}/{\varepsilon^{\frac{\beta}{1+2\beta}}}) & k \ge n^{\alpha}
\end{cases}$\;
$b' \gets \ceil{(3/2)\max(1, k/q^2\varepsilon)}$\;
$Z_1 \gets \textsc{BlockKrylov}(A, \text{rank} =k, \text{block size} = k, \text{iterations} = O(q\log (n))$\;
$Z_2 \gets \textsc{BlockKrylov}(A, \text{rank} =k, \text{block size} = b'+k, \text{iterations} = O(\sqrt{p}\log (n/\varepsilon))$\;
$W_1 \gets \text{colspan}(\T{A}Z_1)$\;
$W_2 \gets \text{colspan}(\T{A}Z_2)$\;
$W \gets $ $W_2$ if $\hat{\sigma}_k \ge (1+1/2p)\hat{\sigma}_{b'+k}$ and $W_1$ otherwise\tcp*{These approximations to $\sigma_k$ and $\sigma_{b'+k}$ can be computed using the $M$ matrix computed in Algorithm~\ref{alg:block-krylov}}
\end{algorithm}
\subsection{Main Theorem}
\begin{theorem}
	Given an $n\times n$ matrix $A$, a rank parameter $k$ and an accuracy parameter $\varepsilon$, Algorithm~\ref{alg:schatten-norm-subspace-apx} outputs a $k$ dimensional orthonormal matrix $W$ that with probability $\ge 0.9$ satisfies,
$
		\lp{A(I-W\T{W})} \le (1+O(\varepsilon))\lp{A - A_k}.
$
    For any constant $\eta > 0$, the running time of the algorithm is as follows:
    \begin{enumerate}
        \item For $k \le \varepsilon n^{\alpha}$, the algorithm runs in time $\tilde{O}(\sqrt{p}n^{2+\eta})$.
        \item For $\varepsilon n^{\alpha} \le k \le n^{\alpha}$, the algorithm runs in time $\tilde{O}(\max(\sqrt{p}n^{2+\eta}, p^{\frac{1}{2(1+2\beta)}}n^{2+\eta} (k/n^{\alpha}\varepsilon)^{\beta/(1+2\beta)}))$.
        \item For $k \ge n^{\alpha}$, the algorithm runs in time $\tilde{O}((p^{1/2}\varepsilon^{-\beta})^{1/(1+2\beta)}n^{2+\eta-\alpha\beta}k^{\beta})$.
    \end{enumerate}
	\label{thm:final-theorem-our-algorithm}
\end{theorem}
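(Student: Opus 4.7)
The plan is to separately analyze correctness and running time. The algorithm chooses between two Block Krylov runs based on the empirical gap test $\hat{\sigma}_k \ge (1+1/2p)\hat{\sigma}_{b'+k}$. First I would verify that the estimates $\hat\sigma_k$ and $\hat\sigma_{b'+k}$ (obtained from the matrix $M$ of Algorithm~\ref{alg:block-krylov}) are accurate enough to select the correct branch: Theorem~\ref{thm:block-krylov}(1) gives $|\hat\sigma_i - \sigma_i| \le O(\log n/q)\sigma_{k+1}$ for the relevant indices, which is well below the required multiplicative slack $1/(2p)$ once $q \ge \sqrt{p}$.

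In the \emph{non-flat} branch (test satisfied), we infer a true multiplicative gap $\sigma_k/\sigma_{b'+k+1} \ge 1+\Omega(1/p)$. Applying Theorem~\ref{thm:block-krylov}(2) to $Z_2$ with block size $b = b'+k$ and $O(\sqrt{p}\log(n/\varepsilon))$ iterations yields the per-vector bound $\sigma_i(\T{Z_2}A)^2 \ge \sigma_i^2 - \varepsilon\sigma_{k+1}^2$ for $i \in [k]$, and the standard Bakshi--Clarkson--Woodruff per-vector-to-Schatten-$p$ conversion lifts this to $\lp{A(I - W\T{W})} \le (1+O(\varepsilon))\lp{A - A_k}$. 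In the \emph{flat} branch (test fails), every singular value $\sigma_{k+1},\ldots,\sigma_{b'+k}$ lies within a factor of $(1+1/2p)$ of $\sigma_{k+1}$, so $\lp{A-A_k}^p \ge \Omega(b')\sigma_{k+1}^p \ge \Omega(k/(q^2\varepsilon))\sigma_{k+1}^p$ by the choice of $b'$. Theorem~\ref{thm:block-krylov}(1) applied to $Z_1$ (block size $k$, $O(q\log n)$ iterations) gives $\sigma_i(\T{Z_1}A)^2 \ge \sigma_i^2 - O(1/q^2)\sigma_{k+1}^2$, so the total additive deficit $O(k/q^2)\sigma_{k+1}^2$ is an $O(\varepsilon)$ fraction of $\lp{A-A_k}^p$ after the same BCW conversion.

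For the running time, each Block Krylov call with block size $b$ and $q'$ iterations costs $\tilde{O}(q' \cdot T(n,b))$ by the bound derived in Section~\ref{subsec:fast-multi}. Writing $\gamma = \log_n b$ and using $\omega(\gamma) = 2$ for $\gamma \le \alpha$ together with $\omega(\gamma) \le 2 + \beta(\gamma - \alpha)$ otherwise, the cost of $Z_1$ scales as $\tilde O(q \cdot n^{2+\eta} \max(1, k/n^\alpha)^\beta)$ and the cost of $Z_2$ as $\tilde O(\sqrt{p} \cdot n^{2+\eta} \max(1, (b'+k)/n^\alpha)^\beta)$. For $k \le \varepsilon n^\alpha$ both block sizes fall in the $\omega(\gamma) = 2$ regime, giving the claimed $\tilde O(\sqrt{p}\,n^{2+\eta})$. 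Otherwise, substituting $b' = \Theta(k/(q^2\varepsilon))$ and balancing the two per-call costs forces $q^{1+2\beta} \asymp p^{1/2}(k/(n^\alpha \varepsilon))^\beta$, which is exactly the choice of $q$ in Algorithm~\ref{alg:schatten-norm-subspace-apx} once the $\sqrt{p}$ floor is imposed; plugging this $q$ back into the per-call bound yields the stated running times in the remaining two regimes.

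The hardest part will be the exponent arithmetic in the second and third regimes, where the two Block Krylov calls may straddle the $\alpha$ threshold (so one uses the $\omega(\gamma) = 2$ bound and the other the linear $\omega(\gamma)$ bound), and the $\sqrt{p}$ lower bound on $q$ must be tracked separately from the balanced choice. A secondary difficulty is properly budgeting constants in the gap test: the $1/(2p)$ threshold, the slack in Theorem~\ref{thm:block-krylov}(2), and the $O(\log n/q)\sigma_{k+1}$ estimation error in the $\hat\sigma_i$s must all be arranged so that both branches remain $(1+O(\varepsilon))$-correct simultaneously, without inflating $q$ by more than constants.
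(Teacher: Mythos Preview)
Your overall strategy---run two Block Krylov calls, use the empirical ratio $\hat\sigma_k/\hat\sigma_{b'+k}$ to decide which one to return, and balance the two costs via fast rectangular multiplication---is exactly the paper's, and your running-time calculation is correct.

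The gap is in the correctness argument. In both branches you go from the per-vector bound $\sigma_i(\T{Z}A)^2 \ge \sigma_i^2 - \delta\,\sigma_{k+1}^2$ to a Schatten-$p$ guarantee by what you call the ``standard BCW conversion,'' and you summarize the resulting deficit as $O(k/q^2)\sigma_{k+1}^2$ (or $O(k\varepsilon)\sigma_{k+1}^2$ in the non-flat branch). That summary is only valid when every $\sigma_i$ with $i\le k$ is within a $(1+O(1/p))$ factor of $\sigma_{k+1}$: after Lemma~\ref{lma:inequality-2-to-p} the per-index deficit in the $p$-th power is $O(p\delta)\,\sigma_i^{p-2}\sigma_{k+1}^2$, and if some $\sigma_i \gg \sigma_{k+1}$ this can swamp $\lp{A-A_k}^p$ no matter how small $\delta$ is. Your gap test only controls $\sigma_k$ versus $\sigma_{b'+k}$; it says nothing about $\sigma_1,\ldots,\sigma_{k-1}$, so the ``flat'' and ``non-flat'' branches as you have stated them both miss this case.

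The paper closes this by a further case split on whether $\sigma_1 \le (1+1/p)\sigma_{k+1}$. When it fails, they locate the largest $\ell\le k$ with $\sigma_\ell \ge (1+1/5p)\sigma_{k+1}$ and $\sigma_\ell \ge (1+\varepsilon/n)\sigma_{\ell+1}$, and split $A = A_\ell + (A-A_\ell)$. The gap at $\ell$ (together with $q\ge\sqrt{p}$, so Theorem~\ref{thm:block-krylov}(2) applies with $\poly(\varepsilon/n)$ error for $i\le\ell$) and Lemma~\ref{lma:error-on-top} force $\lp{A_\ell(I-W\T{W})}$ to be negligible; the remaining piece has $\sigma_{\ell+1},\ldots,\sigma_k$ all within $(1+1/p)$ of $\sigma_{k+1}$, where your summation argument is valid. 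You will need this decomposition (or an equivalent) in both branches of your proof, not just an appeal to a black-box conversion.
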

Assuming $p$ is a constant independent of $\varepsilon$, the dependence on $\varepsilon$ is at least better than $\varepsilon^{-1/3}$ as $\beta \le 1$ which implies $\beta/(1+2\beta) \le 1/3$. The proof of this theorem is similar to that of \cite{bakshi2022low}. We include the proof in the appendix.

\section{Comparison with the Algorithm of Li and Woodruff \texorpdfstring{\cite{li2020input}}{LW20}}\label{sec:comparison}
For $n \times n$ matrices and $p > 2$, the algorithm of \cite{li2020input} for the Schatten-$p$ norm Subspace Approximation problem, shown in Algorithm~\ref{alg:original-algorithm-li-woodruff} runs in time
\begin{align}
	O(n^2 \log n) + \tilde{O}_p\left(\frac{n^{\omega (1-2/p)}k^{2\omega/p}}{\varepsilon^{2\omega/p+2}} + n^{1+(\omega-1)(1-2/p)}(k/\varepsilon)^{2(\omega-1)/p}\right).
	\label{eqn:original-running-time}
\end{align}    
Let $K = k + \varepsilon/\eta_1 = k + n^{1-2/p}k^{2/p}/\varepsilon^{2/p}$. To obtain the above running time, they use a ridge leverage score sampling algorithm to compute a matrix $S$ with $s = O(\varepsilon^{-2}K\log n)$ rows that satisfies \eqref{eqn:mixed-guarantee} with a large probability. The same guarantee can instead be obtained by using the Sub-sampled Randomized Hadamard Transform (SRHT) \cite{tropp2011improved} with $s = O(\varepsilon^{-2}K\log n)$ rows and the matrix-product $SA$ can be computed in time $O(n^2 \log n)$. To obtain the subspace embedding guarantee for $T$ as required in Algorithm~\ref{alg:original-algorithm-li-woodruff}, we can let the matrix $T$ again be an SRHT with $r = O(\varepsilon^{-2}s\log n)$ columns and the product $SAT$ can be computed in time $O(ns\log s) = O(\varepsilon^{-2}n(k + n^{1-2/p}k^{2/p}/\varepsilon^{2/p}))$.

The singular value decomposition of the matrix $SAT$ can be computed in $O(rs^{\omega-1}) = O(\varepsilon^{-2\omega}(k+n^{1-2/p}k^{2/p}/\varepsilon^{2/p})^{\omega}\polylog(n))$ time and a basis for the rowspace of $\T{W}SA$ can be computed in $O(skn)$ time. Overall, for constant $k$ and $\varepsilon$, the algorithm of \cite{li2020input} runs in time
$
	\tilde{O}(n^2 + (n^{1-2/p})^{\omega}).
$
For $p > 2\omega/(\omega-2)$, their algorithm runs in $n^{2+c_p}$ time for a constant $c_p > 0$ that depends on $p$. For the same parameters, our algorithm runs in $\tilde{O}(n^2)$ time and hence we have an improvement. For $k \le n^{\alpha}$ and $\varepsilon = 1/n$, their algorithm runs in time $\Omega(n^{\omega})$ which means that computing the SVD of $A$ is already faster whereas our algorithm runs in time $\tilde{O}(n^{2 + \frac{(1-\alpha)\beta}{1+2\beta}}) = o(n^{\omega})$ if $\omega > 2$. Hence, our algorithm improves upon the algorithm of \cite{li2020input} for a wide range of parameters. We note that computing the SVD of $SAT$ turns out to be the most expensive step for large $p$. In the next subsection, we show that our Algorithm~\ref{alg:schatten-norm-subspace-apx} can be used to sidestep the computation of the SVD of $SAT$, thereby giving an even faster algorithm.

We call $p^* = 2\omega/(\omega-2)$, the \emph{crossover} point. For $p > p^*$, our Algorithm~\ref{alg:schatten-norm-subspace-apx} is faster than the algorithm of \cite{li2020input}. For the current value of $\omega \approx 2.37$, $p^* \approx 12.8$. For $p < p^*$, the leading order term in the time complexity of Algorithm~\ref{alg:original-algorithm-li-woodruff} is $O(n^2 \log n)$ for $\varepsilon > n^{-c_p}$ for a constant $c_p$ depending on $p$, and hence is faster than Algorithm~\ref{alg:schatten-norm-subspace-apx}.
\begin{algorithm}
\caption{Schatten-$p$ Norm Low Rank Approximation for $p > 2$ \cite{li2020input}}\label{alg:original-algorithm-li-woodruff}
\KwIn{A matrix $A \in \R^{m \times n}$ and an accuracy parameter $\varepsilon$}
\KwOut{A rank-$k$ orthonormal projection $Q$ satisfying $\lp{A(I-Q)} \le (1+\varepsilon)\lp{A - A_k}$}	
\DontPrintSemicolon
$\eta_1 \gets O(\varepsilon^{1+2/p}/k^{2/p}n^{1-2/p})$\;
$S$ be a matrix with $s$ rows that satisfies
\begin{align}
	(1-\varepsilon)\T{A}A - \eta_1\frnorm{A-A_k}^2 \cdot I  \preceq \T{A}\T{S}SA \preceq (1+\varepsilon)\T{A}A + \eta_1\frnorm{A-A_k}^2 \cdot I .
	\label{eqn:mixed-guarantee}
\end{align}\;    
\vspace{-2em}
$T \gets $ Subspace embedding for $s$-dimensional subspaces with error $O(\varepsilon)$\;
$W \gets$ Top $k$ left singular vectors of $SAT$\;
$Z \gets$ Matrix whose columns are an orthonormal basis for the row space of $\T{W}SA$\;
$Q \gets Z\T{Z}$
\end{algorithm}
\subsection{Further Improving the running time of \texorpdfstring{\cite{li2020input}}{LW20} using our algorithm}
Given an $n \times n$ matrix $A$, $p \ge 1$ and $r \le n$, let $\|A\|_{(p,r)} = (\sum_{i=1}^r \sigma_i(A)^p)^{1/p}$. We can show that $\| \cdot \|_{(p,r)}$ is a norm over $n \times n$ matrices. As $\|\cdot\|_{(p,r)} $ is unitarily invariant, we have by Eckart-Young-Mirsky's theorem that
$
	\|A - A_k\|_{(p,r)} = \min_{\text{rank-}k\, B}\|A-B\|_{(p,r)}.
$
In Lemma~4.2 of \cite{li2020input}, they show that for $S$ satisfying \eqref{eqn:mixed-guarantee}, if $\hat{Q}$ is a rank-$k$ projection matrix with
\begin{align}
	\|SA(I-\hat{Q})\|_{(p,r)} \le (1+\varepsilon)\min_{\substack{\text{rank-}k\\ \text{projections}\, Q}}\|SA(I-Q)\|_{(p,r)},
	\label{eqn:required-guarantee-truncated-norm}
\end{align}
then
$
	\lp{A(I-\hat{Q})}^p \le (1+C_p \varepsilon)\lp{A-A_k}^p,
$
for a constant $C_p$ that only depends on $p$. They show that the matrix $Q$ returned by Algorithm~\ref{alg:original-algorithm-li-woodruff} satisfies \eqref{eqn:required-guarantee-truncated-norm} and then conclude that the matrix $Q$ is a $1+O(\varepsilon)$ approximation to the Schatten-$p$ norm low rank approximation problem. We will now argue that there is a faster algorithm for computing a projection that satisfies \eqref{eqn:required-guarantee-truncated-norm}. The algorithm does not require the computation of the SVD of the matrix $SAT$ and hence does not incur the $O_{p,k,\varepsilon}(n^{(1-2/p)\omega})$ term in the running time. We first show that a $1+\varepsilon$ approximate solution to the Schatten-$p$ norm subspace approximation problem, is a $1+\varepsilon n/r$ approximation to the $(p,r)$ subspace approximation problem. 
\begin{lemma}
For an arbitrary $m \times n$ matrix $A$ ($m \le n$), if $\hat{Q}$ is a rank-$k$ projection matrix satisfying
$
	\lp{A(I-\hat{Q})}^p \le (1+\varepsilon)\lp{A-A_k}^p
$ and $\text{colspan}(\hat{Q}) \subseteq \text{rowspan}(A)$,
then for any $r \le n$,
\[
	\|{A(I-\hat{Q})}\|_{(p,r)}^p \le (1 + \varepsilon\left\lceil{{(m-k)}/{r}}\right\rceil)\|A-A_k\|_{(p,r)}^p.
\]
\label{lma:truncated-to-full-reduction}
\end{lemma}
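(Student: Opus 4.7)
The plan is to reduce the claim to a clean inequality between two non-negative, decreasing sequences. Let $t = m - k$, and for $i \ge 1$ set $a_i = \sigma_i(A(I-\hat{Q}))^p$ and $b_i = \sigma_{k+i}(A)^p$, with the convention $b_i = 0$ for $i > t$. Both sequences are non-negative and non-increasing by the ordering of singular values. The hypothesis rewrites as $\sum_{i\ge 1} a_i \le (1+\varepsilon)\sum_{i=1}^{t} b_i$, while the goal becomes $\sum_{i=1}^{r} a_i \le \bigl(1 + \varepsilon\lceil t/r\rceil\bigr)\sum_{i=1}^{r} b_i$.

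The first step is to establish the pointwise bound $a_i \ge b_i$ for every $i \ge 1$. Writing $A = A(I-\hat{Q}) + A\hat{Q}$, the second summand has rank at most $k$, so Weyl's inequality gives $\sigma_{i+k}(A) \le \sigma_i(A(I-\hat{Q})) + \sigma_{k+1}(A\hat{Q}) = \sigma_i(A(I-\hat{Q}))$, i.e.\ $b_i \le a_i$. Subtracting $\sum_i b_i$ from both sides of the hypothesis yields the total-slack bound $\sum_{i \ge 1}(a_i - b_i) \le \varepsilon \sum_{i=1}^{t} b_i$, where every summand on the left is non-negative.

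The second step is a block comparison for the decreasing sequence $(b_i)$. Partition $\{1,\ldots,t\}$ into $\lceil t/r \rceil$ consecutive blocks of length at most $r$; for each block $\{(j-1)r+1,\ldots,\min(jr,t)\}$ monotonicity gives $b_{(j-1)r+s} \le b_s$, so the block sum is at most $\sum_{s=1}^{r} b_s$, and summing over blocks gives $\sum_{i=1}^{t} b_i \le \lceil t/r\rceil \sum_{i=1}^{r} b_i$. Combining this with $\sum_{i=1}^{r} a_i = \sum_{i=1}^{r} b_i + \sum_{i=1}^{r}(a_i - b_i)$ and bounding the truncated slack by the total slack produces
\[
\sum_{i=1}^{r} a_i \;\le\; \sum_{i=1}^{r} b_i + \varepsilon \sum_{i=1}^{t} b_i \;\le\; \bigl(1 + \varepsilon \lceil t/r \rceil\bigr) \sum_{i=1}^{r} b_i,
\]
which is exactly the desired inequality.

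There is no real obstacle here; the only care needed is in indexing Weyl's inequality and in handling the edge case $r \ge t$, where $\lceil t/r\rceil = 1$ and both $(p,r)$ norms equal the corresponding full Schatten-$p$ norms, so the bound reduces to the hypothesis. The hypothesis $\text{colspan}(\hat{Q}) \subseteq \text{rowspan}(A)$ does not appear to be used by this argument and is presumably included for consistency with how $\hat{Q}$ is produced elsewhere in the paper.
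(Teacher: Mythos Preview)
Your proof is correct and follows essentially the same route as the paper's: both use Weyl's inequality to get $\sigma_i(A(I-\hat{Q})) \ge \sigma_{k+i}(A)$, subtract to isolate an $\varepsilon\lp{A-A_k}^p$ slack term, and then apply the block bound $\|A-A_k\|_{S_p}^p \le \lceil (m-k)/r\rceil\,\|A-A_k\|_{(p,r)}^p$. Your observation that the hypothesis $\text{colspan}(\hat{Q}) \subseteq \text{rowspan}(A)$ is unnecessary is also correct; the paper invokes it only to cap the sum $\sum_i \sigma_i(A(I-\hat{Q}))^p$ at $i = m-k$, but replacing that upper limit by $m$ (or $\infty$) leaves the argument intact since $\sigma_{i+k}(A) = 0$ for $i > m-k$.
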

\begin{proof}
	Let $\hat{Q}$ be a rank-$k$ projection such that
	\begin{align*}
		\lp{A(I-\hat{Q})}^p \le (1+\varepsilon)\min_{\text{rank-}k\, \text{projections}\, Q}\lp{A(I-Q)}^p = (1+\varepsilon)\sum_{i=k+1}^n \sigma_i(A)^p.
	\end{align*}
	Note that $\lp{A(I-\hat{Q})}^p = \sum_{i=1}^{m-k}\sigma_i(A(I-\hat{Q}))^p$ since the matrix $A(I-\hat{Q})$ has rank at most $m-k$ from our assumption that $\text{colspan}(Q) \subseteq \text{rowspan}(A)$. Now, $\|A(I-\hat{Q})\|_{(p,r)}^p = \sum_{i=1}^{r}\sigma_i(A(I-\hat{Q}))^p$ and therefore,
	\begin{align*}
		\|A(I-\hat{Q})\|_{(p,r)}^p &= \|A(I-\hat{Q})\|_{S_p}^p - \sum_{i=r+1}^{m-k}\sigma_i(A(I-\hat{Q}))^p\\
  &\le (1+\varepsilon)\sum_{i=k+1}^m \sigma_i(A)^p - \sum_{i=r+1}^{m-k}\sigma_i(A(I-\hat{Q}))^p.
	\end{align*}
	Since the matrix $A\hat{Q}$ has rank at most $k$, by Weyl's inequality, 
$
		\sigma_i(A(I-\hat{Q})) \ge \sigma_{i+k}(A)
$
which implies
	\begin{align*}
		\|A(I-\hat{Q})\|_{(p,r)}^p &\le \sum_{i=k+1}^{k+r}\sigma_i(A)^p + \varepsilon\lp{A-A_k}^p  + \left(\sum_{i=k+r+1}^m \sigma_i(A)^p - \sum_{i=r+1}^{m-k}\sigma_i(A(I-\hat{Q}))^p\right)\\
		&\le \min_{\text{rank-}k\, \text{projections}\, Q}\|A(I-Q)\|_{(p,r)}^p + \varepsilon\lp{A-A_k}^p.
	\end{align*}
	Finally, using the fact that $\|A-A_k\|_{S_p}^p \le \ceil{(m-k)/r}\|A-A_k\|_{(p,r)}^p$, we obtain
	\begin{align*}
		\|A(I-\hat{Q})\|_{(p,r)}^p \le (1 + \varepsilon\ceil{(m-k)/r})\|A-A_k\|_{(p,r)}^p. &\qedhere
	\end{align*}
\end{proof}
Finally, we have the following lemma which shows how to find an approximate solution to the $(p,r)$ Low Rank Approximation problem.
\begin{lemma}
Let $A \in \R^{m \times n}$	be an arbitrary matrix with $m \le n$. Given parameters $k$, $p$, $r$ and $\varepsilon$, there is a randomized algorithm to find a rank-$k$ projection $\hat{Q}$, that with probability $\ge 9/10$ satisfies,
\[
	\|A(I-\hat{Q})\|_{(p,r)}^p \le (1+\varepsilon)\|A - A_k\|_{(p,r)}^p.
\]
For constant $p$ and $k \le m^{\alpha}$ and any constant $\eta > 0$, the randomized algorithm runs in time \[\tilde{O}(m^{2+\eta+(1-\alpha)\beta/(1+2\beta)}k^{\beta/(1+2\beta)}\poly(1/\varepsilon) + nm + nk^{\omega-1})\] and for $k \ge m^{\alpha}$, the algorithm runs in 
\[
\tilde{O}(m^{2 + \eta -\alpha\beta + \frac{\beta}{1+2\beta}}k^{\beta}\poly(1/\varepsilon) + nm^{1-\alpha\beta}k^{\beta} + nk^{\omega-1})\] time.
\label{lma:full-lemma-truncated}
\end{lemma}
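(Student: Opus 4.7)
The plan is to reduce the $(p,r)$-norm subspace approximation problem to standard Schatten-$p$ LRA with a scaled accuracy parameter, and then invoke Theorem~\ref{thm:final-theorem-our-algorithm} on the input matrix directly, adapted to the rectangular setting. Concretely, by Lemma~\ref{lma:truncated-to-full-reduction}, any rank-$k$ projection $\hat{Q}$ with $\colspace(\hat{Q}) \subseteq \text{rowspan}(A)$ satisfying $\lp{A(I-\hat{Q})}^p \le (1+\varepsilon')\lp{A-A_k}^p$ for $\varepsilon' = \Theta(\varepsilon/\lceil(m-k)/r\rceil) = \Theta(\varepsilon r/m)$ automatically gives the desired $\|A(I-\hat{Q})\|_{(p,r)}^p \le (1+\varepsilon)\|A-A_k\|_{(p,r)}^p$, so it suffices to produce such a Schatten-$p$ solution.

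To find this $\hat{Q}$, I would run Algorithm~\ref{alg:schatten-norm-subspace-apx} on $A$ with rank parameter $k$, the given $p$, and accuracy $\varepsilon'$. The Block Krylov subroutine builds its Krylov subspace from $A\Pi$ for a Gaussian $\Pi \in \R^{n\times b}$, so its iterates live in the $m$-dimensional column space of $A$, and the final output $W_j \gets \colspace(\T{A}Z_j)$ yields an orthonormal basis for a subspace of the row space of $A$. Setting $\hat{Q} = W\T{W}$ for the selected $W$ gives $\colspace(\hat{Q}) \subseteq \text{rowspan}(A)$, and by Theorem~\ref{thm:final-theorem-our-algorithm} this $\hat{Q}$ satisfies the Schatten-$p$ guarantee with error $\varepsilon'$; Lemma~\ref{lma:truncated-to-full-reduction} then delivers the required $(p,r)$ bound.

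For the runtime, I would reanalyze the cost of Algorithm~\ref{alg:schatten-norm-subspace-apx} on an $m\times n$ matrix with $m \le n$, treating ``dimension'' as $m$ in Theorem~\ref{thm:final-theorem-our-algorithm}. All internal work (forming $M = \T{Q}A\T{A}Q$, its SVD, the gap-based comparison) happens on $m\times m$ and $qb\times qb$ objects and contributes the main iterative cost. Plugging $\varepsilon' = \Theta(\varepsilon r/m)$ into Case~2 (respectively Case~3) of Theorem~\ref{thm:final-theorem-our-algorithm} and using $r \ge 1$ to absorb residual $r$-factors into $\poly(1/\varepsilon)$ recovers the claimed main terms $\tilde{O}(m^{2+\eta+(1-\alpha)\beta/(1+2\beta)}k^{\beta/(1+2\beta)}\poly(1/\varepsilon))$ for $k \le m^{\alpha}$ and $\tilde{O}(m^{2+\eta-\alpha\beta+\beta/(1+2\beta)}k^{\beta}\poly(1/\varepsilon))$ for $k \ge m^{\alpha}$. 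The remaining additive terms come from multiplying $A$ and $\T{A}$ against the iterates: for block size $b \le m^{\alpha}$ fast rectangular multiplication gives per-iteration cost $(n/m)\cdot T(m,m,b) = \tilde{O}(nm)$, while in the regime $k \ge m^{\alpha}$ the block size $b = \Theta(k/(q^2\varepsilon))$ exceeds $m^{\alpha}$ and each iteration costs $\tilde{O}(nm^{1-\alpha\beta}b^{\beta}) = \tilde{O}(nm^{1-\alpha\beta}k^{\beta})$; the final orthonormalization of the $n \times k$ matrix $\T{A}Z$ adds $O(nk^{\omega-1})$.

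The main obstacle is verifying that the guarantees and running-time accounting of Theorem~\ref{thm:final-theorem-our-algorithm} transport cleanly from $n\times n$ to $m\times n$ matrices. The correctness side is essentially automatic: the per-vector Block Krylov guarantees of Theorem~\ref{thm:block-krylov} and the two-block-size argument of \cite{bakshi2022low} depend only on the singular values of $A$, and the resulting $W$ lies in the row space by construction. The subtler point is the runtime: one must carefully separate the cost of each Krylov iteration (one $\T{A}(\cdot)$ application in $\R^n$ and one $A(\cdot)$ application in $\R^m$, each exploiting fast rectangular multiplication differently depending on whether the block size is below or above $m^{\alpha}$) from the in-dimension-$m$ cost of handling $M$, then recombine them correctly to match the two stated regimes.
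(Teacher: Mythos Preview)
Your correctness argument is fine, but the runtime analysis has a genuine gap, and the approach does not match the paper's.

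The paper does \emph{not} run Algorithm~\ref{alg:schatten-norm-subspace-apx} directly on the $m\times n$ matrix $A$. It first applies an SRHT subspace embedding $T$ with $s = \tilde{O}(\varepsilon^{-2}m)$ columns to the long dimension, producing a matrix of size $m \times s$; since the SRHT can be applied in $O(mn\log n)$ time, this is the origin of the truly additive $\tilde{O}(nm)$ term. Algorithm~\ref{alg:schatten-norm-subspace-apx} is then run on this nearly-square $\tilde{O}(m/\varepsilon^2)\times m$ matrix with accuracy $\varepsilon/(pm)$, so every Block Krylov iteration costs only $\tilde{O}(m^{2+\eta}\poly(1/\varepsilon))$, and the subspace-embedding property transfers the Schatten-$p$ (and hence $(p,r)$, via Lemma~\ref{lma:truncated-to-full-reduction}) guarantee back to $A$. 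Finally one computes an orthonormal basis for the rowspace of $\widehat{W}A$, which gives the $nk^{\omega-1}$ term.

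In your approach, every Krylov iteration multiplies by the full $m\times n$ matrix, so the per-iteration cost you correctly compute as $(n/m)T(m,m,b)=\tilde{O}(nm)$ (for $b\le m^{\alpha}$) is \emph{not} a one-time additive term: it is paid $q$ times. With $\varepsilon'=\Theta(\varepsilon/m)$, the iteration count $q$ from Case~2 of Theorem~\ref{thm:final-theorem-our-algorithm} is $\Theta\big((k m^{1-\alpha}/\varepsilon)^{\beta/(1+2\beta)}\big)$, which is polynomial in $m$. The resulting Krylov cost is therefore $\tilde{O}\big(n\,m^{1+\eta+(1-\alpha)\beta/(1+2\beta)}k^{\beta/(1+2\beta)}\poly(1/\varepsilon)\big)$, i.e., the main term carries a factor $n$ where the lemma has $m$. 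The SRHT reduction is precisely what decouples $n$ from the iteration count, and without it you cannot obtain the stated bound.
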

\begin{proof}
First we note that
\begin{align*}
	\min_{\text{rank-}k\, \text{projections}\, Q}\|A(I-Q)\|_{(p,r)}^p = \min_{\text{rank-}k\, \text{projections}\, W}\|(I-W)A\|_{(p,r)}^p = \|A-A_k\|_{(p,r)}^p.
\end{align*}
Let $T$ be an SRHT matrix with $O(\varepsilon^{-2}m\polylog(n))$ rows. With a large probability, $T$ is an $\varepsilon$ subspace embedding for the rowspace of matrix $A$. Then
\begin{align*}
	(1-\varepsilon)A\T{A} \preceq AT\T{T}\T{A} \preceq (1+\varepsilon)A\T{A}
\end{align*}
and further for all rank-$k$ projections $W$, 
\begin{align*}
	(1-\varepsilon)(I-W)A\T{A}(I-W) \preceq (I-W)AT\T{T}\T{A}(I-W) \preceq (1+\varepsilon)(I-W)A\T{A}(I-W).
\end{align*}
We then have for all $i$ that $\sigma_i((I-W)AT) = (\sqrt{1 \pm \varepsilon})\sigma_i((I-W)A)$. Therefore, $\|(I-W)AT\|_{(p,r)}^p = (1 \pm \varepsilon)^{p/2}\|(I-W)A\|_{(p,r)}^p$ for all rank-$k$ projections $W$. Let Algorithm~\ref{alg:schatten-norm-subspace-apx} be run on the matrix $\T{T}\T{A}$ with rank parameter $k$ and approximation parameter $\varepsilon/pm$. By Theorem~\ref{thm:final-theorem-our-algorithm}, we obtain a rank-$k$ projection $\widehat{W}$ satisfying
\begin{align*}
	\|\T{T}\T{A}(I-\widehat{W})\|_{p} \le (1+\varepsilon/pm)\min_{\text{rank-}k\, \text{projections}\, W}\|\T{T}\T{A}(I-W)\|_{p}
\end{align*}
Using, Lemma~\ref{lma:truncated-to-full-reduction}, we obtain that
\begin{align*}
	\|\T{T}\T{A}(I-\widehat{W})\|_{(p,r)}^p \le (1+\varepsilon)\min_{\text{rank-}k\, \text{projections}\, W}\|\T{T}\T{A}(I-W)\|_{(p,r)}^p.
\end{align*}
By using the relation between $\|(I-W)AT\|_{(p,r)}^p$ and $\|(I-W)A\|_{(p,r)}^p$ for all projections $W$, we get
\begin{align*}
	\|\T{A}(I-\widehat{W})\|_{(p,r)}^p &\le \frac{(1+\varepsilon)^{p/2+1}}{(1-\varepsilon)^{p/2}}\min_{\text{rank-}k\, \text{projections}\, W}\|\T{A}(I-W)\|_{(p,r)}^p\\
	&\le (1 + O(\varepsilon p))\|A - A_k\|_{(p,r)}^p.
\end{align*}
Now, $\|A - A({\widehat W}A)^{+}({\widehat W}A)\|_{(p,r)}^p \le \|A - {\widehat W}A\|_{(p,r)}^p = \|(I - \widehat W){A}\|_{(p,r)}^p \le (1+O(\varepsilon p))\|A-A_k\|_{(p,r)}^p$. Scaling $\varepsilon$, we obtain the result.

\paragraph*{Runtime Analysis.} The matrix $AT$ can be computed in time $O(mn\log n)$. For constant $p$, Algorithm~\ref{alg:schatten-norm-subspace-apx} runs on the matrix $\T{T}A$ in time $\tilde{O}(m^{2+\eta+(1-\alpha)\beta/(1+2\beta)}k^{\beta/(1+2\beta)}\poly(1/\varepsilon))$ for $k \le m^{\alpha}$ and when $k \ge m^{\alpha}$, the algorithm runs in time $\tilde{O}(m^{2+\eta-\alpha\beta+\frac{\beta}{1+2\beta}}k^{\beta}\poly(1/\varepsilon))$. Finally, the rowspace of $\T{\widehat{W}}A$ can be computed in time $O(nm + nk^{\omega-1})$ for $k \le m^{\alpha}$ and $O(nm^{1-\alpha\beta}k^{\beta} + nk^{\omega-1})$ for $k \ge m^{\alpha}$.
\end{proof}

Using the above lemma, we can find a rank-$k$ projection $\hat{Q}$ that satisfies
\[
	\|SA(I-\hat{Q})\|_{(p,r)}^p \le (1+\varepsilon)\|A - A_k\|_{(p,r)}^p
\]
in time $\tilde{O}((n^{1-2/p})^{2+\eta+(1-\alpha)\beta/(1+2\beta)}\poly(1/\varepsilon
) + n^2)$ for constant $k$ improving on the $\tilde{O}(n^2 + (n^{(1-2/p)})^{\omega}\poly(1/\varepsilon))$ running time of \cite{li2020input} for the current value of $\omega$ since
$
    2 + \frac{(1-\alpha)\beta}{1+2\beta} = 2 + \frac{\omega-2}{1+2\beta} < \omega
$
if $\beta \ne 0$. We thus have the following theorem.
\begin{theorem}
	Given a dense $n \times n$ matrix $A$, a constant rank parameter $k$ and any constant $\eta > 0$, there is a randomized algorithm that runs in time $\tilde{O}((n^{1-2/p})^{2+\eta+(1-\alpha)\beta/(1+2\beta)}\poly(1/\varepsilon
) + n^2)$ and outputs a rank-$k$ projection $\hat{Q}$ that, with probability $\ge 9/10$, satisfies
$
		\|A(I-\hat{Q})\|_{S_p}^p \le (1+\varepsilon)\|A-A_k\|_{S_p} ^p.
$
 \label{thm:combination}
\end{theorem}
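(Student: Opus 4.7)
The plan is to follow the Li–Woodruff framework laid out in Algorithm~\ref{alg:original-algorithm-li-woodruff}, but to replace the bottleneck SVD-of-$SAT$ step with a call to our faster Algorithm~\ref{alg:schatten-norm-subspace-apx}, packaged through Lemma~\ref{lma:full-lemma-truncated}. First I would sketch $A$ on the left by an SRHT matrix $S$ with $s = \tilde O(\varepsilon^{-2} K)$ rows, where $K = k + n^{1-2/p}k^{2/p}/\varepsilon^{2/p}$; a standard analysis (as used in the discussion following Algorithm~\ref{alg:original-algorithm-li-woodruff}) shows that with high probability such an $S$ satisfies the mixed spectral guarantee \eqref{eqn:mixed-guarantee}, and the product $SA$ can be formed in time $O(n^2 \log n)$ using the fast Hadamard transform. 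For constant $k$, $s = \tilde O(n^{1-2/p} \poly(1/\varepsilon))$, which will be the effective ``$m$'' for the subsequent step.

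Next, instead of taking an SVD of a further-sketched version $SAT$, I would feed $SA$ directly into Lemma~\ref{lma:full-lemma-truncated} with the rank parameter $k$, Schatten parameter $p$, and an appropriate choice of truncation parameter $r$ and accuracy $\varepsilon' = \Theta(\varepsilon/p)$ so that the $\lceil (s-k)/r\rceil$ factor picked up in Lemma~\ref{lma:truncated-to-full-reduction} is a constant (the natural choice is $r = \Theta(s-k)$). This produces a rank-$k$ projection $\hat Q$ with
\[
	\|SA(I-\hat Q)\|_{(p,r)}^p \le (1+\varepsilon)\,\|SA - (SA)_k\|_{(p,r)}^p = (1+\varepsilon)\!\!\min_{\text{rank-}k\text{ proj } Q}\!\!\|SA(I-Q)\|_{(p,r)}^p,
\]
where the equality is Eckart–Young–Mirsky for the unitarily invariant norm $\|\cdot\|_{(p,r)}$. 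Then I would invoke Lemma 4.2 of \cite{li2020input}: because $S$ satisfies \eqref{eqn:mixed-guarantee}, any rank-$k$ projection satisfying the $(p,r)$ guarantee \eqref{eqn:required-guarantee-truncated-norm} for $SA$ automatically satisfies $\|A(I-\hat Q)\|_{S_p}^p \le (1 + C_p \varepsilon)\|A - A_k\|_{S_p}^p$. Rescaling $\varepsilon$ by the constant $C_p$ gives the required approximation.

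For the runtime, computing $SA$ takes $O(n^2 \log n)$, and running Lemma~\ref{lma:full-lemma-truncated} on the $s \times n$ matrix $SA$ with $k$ constant and $k \le s^{\alpha}$ costs $\tilde O(s^{2+\eta+(1-\alpha)\beta/(1+2\beta)} \poly(1/\varepsilon) + ns + nk^{\omega-1})$. Substituting $s = \tilde O(n^{1-2/p}\poly(1/\varepsilon))$, the first term becomes $\tilde O((n^{1-2/p})^{2+\eta+(1-\alpha)\beta/(1+2\beta)} \poly(1/\varepsilon))$ and the $ns$ and $nk^{\omega-1}$ terms are dominated by $n^2$. The overall bound is therefore $\tilde O((n^{1-2/p})^{2+\eta+(1-\alpha)\beta/(1+2\beta)} \poly(1/\varepsilon) + n^2)$ as claimed.

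The main obstacle is less a conceptual one than a bookkeeping one: fixing $r$ and $\varepsilon'$ consistently so that the $p$-dependent blow-up from Lemma~\ref{lma:truncated-to-full-reduction}, the $C_p$ factor from Lemma 4.2 of \cite{li2020input}, and the $(1\pm\varepsilon)^{p/2}$ distortion between $\|\cdot\|_{(p,r)}$ on $SA$ and on $A$ all compose to a clean $1+O(\varepsilon)$. The substantive improvement over Li–Woodruff comes from the fact that our inner routine has exponent $2 + (1-\alpha)\beta/(1+2\beta) = 2 + (\omega-2)/(1+2\beta)$, which is strictly less than $\omega$ whenever $\beta \ne 0$; this is precisely what replaces the $s^{\omega}$ SVD cost in \eqref{eqn:original-running-time} with a subcubic-exponent cost in $s$.
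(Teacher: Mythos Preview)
Your proposal is correct and follows essentially the same route as the paper: sketch with an SRHT $S$ satisfying \eqref{eqn:mixed-guarantee}, invoke Lemma~\ref{lma:full-lemma-truncated} on $SA$ to obtain a $(p,r)$-approximate projection, and then apply Lemma~4.2 of \cite{li2020input} to transfer this to a Schatten-$p$ guarantee on $A$; the runtime analysis is also the paper's. One small clarification: the truncation parameter $r$ is not a free design choice you make to control the $\lceil (s-k)/r\rceil$ factor --- that factor is already absorbed inside Lemma~\ref{lma:full-lemma-truncated} (which internally runs Algorithm~\ref{alg:schatten-norm-subspace-apx} at accuracy $\varepsilon/pm$), and $r$ is instead dictated by the statement of Lemma~4.2 in \cite{li2020input}. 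This is exactly the bookkeeping you flagged, and it resolves cleanly.
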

For this algorithm, the crossover point is $\tilde{p} = \frac{4(1+2\beta)}{\omega-2} + 2$ i.e., only when $p > \tilde{p}$, Algorithm~\ref{alg:schatten-norm-subspace-apx} is faster than the algorithm in the above theorem for constant $k$ and $\varepsilon$. For current values of $\omega, \alpha$, we have $\tilde{p} \approx 22$. In particular, for constant $k$ and $\varepsilon > n^{-c_p}$, for $p \lesssim 22$, the algorithm has a time complexity of only $\tilde{O}(n^2)$.

\section{Stability of LazySVD}\label{sec:lazysvd}
\subsection{Finite Precision Preliminaries}
Following the presentation of \cite{stability-of-lanczos}, we say that a floating point machine has precision $\varepsilon_{\machine}$ if it can perform computations to relative error $\varepsilon_{\machine}$. More formally, let $\fl(x \circ y)$ be the result of the computation $x \circ y$ on the floating point machine where $\circ \in \set{+, -, \times, \div}$. We say that the floating point machine has a precision $\varepsilon_{\machine}$ if for all $x$ and $y$, $\fl(x \circ y) = (1+\delta)(x \circ y)$ where $|\delta| \le \varepsilon_{\machine}$. Additionally, we also require $\fl(\sqrt{x}) = (1+\delta)\sqrt{x}$ for some $\delta$ with $|\delta| \le \varepsilon_{\machine}$. Ignoring overflow or underflow, a machine which implements the IEEE floating point standard with $\ge \log_2(1/\varepsilon_{\machine})$ bits of precision satisfies the above requirements (see \cite[Section~5]{stability-of-lanczos}). Given matrices $A$ and $B$ with at most $n$ rows and columns, we can compute a matrix $C$, on a floating point machine, that satisfies $\opnorm{C - A \cdot B} \le \varepsilon_{\machine}\poly(n)\opnorm{A}\opnorm{B}$ by directly computing $C_{ij}$ as $\fl(\sum_{k} A_{ik}B_{kj})$. 
\subsection{Stability Analysis}
\begin{algorithm}
    \caption{LazySVD \cite{allen2016lazysvd}}\label{alg:lazysvd}
    \KwIn{A positive semidefinite matrix $M \in \R^{d \times d}, k \le d, \varepsilon, \varepsilon_{\pca}, \eta$}
    \KwOut{Vectors $v_1,\ldots,v_k$}
    \DontPrintSemicolon
    $M_0 \gets M$ and $V_0 \gets []$\;
    \For{$s=1,\ldots,k$}{
        $v_s' \gets \AppxPCA_{\varepsilon/2, \varepsilon_{\pca}, \eta/k}(M_{s-1})$\;
        $v_s \gets (I-V_{s-1}\T{V_{s-1}})v_s'/\opnorm{(I-V_{s-1}\T{V_{s-1}})v_s'}$\;
        $V_s \gets [V_{s-1}\, v_s]$\;
        $M_s \gets (I-V_s\T{V_s})M(I-V_s\T{V_s})$ \tcp*{The matrix $M_s$ is not computed as we only need matrix vector products with $M_s$}
    }
    \Return $V_k$\;
\end{algorithm}
The LazySVD algorithm (Algorithm~\ref{alg:lazysvd}) of \cite{allen2016lazysvd} crucially requires a routine called $\AppxPCA$ that computes an approximation to the top eigen vector of the given positive semidefinite matrix. While they use a particular $\AppxPCA$ algorithm in their results, any routine that satisfies the following definition can be plugged into the LazySVD algorithm. 
\begin{definition}[\AppxPCA]\label{dfn:appxpca}
We say that an algorithm is $\AppxPCA$ with parameters $\varepsilon, \varepsilon_{\pca}$ and $\eta$ if given a positive semidefinite matrix $M \in \R^{d \times d}$ with an orthonormal set of eigenvectors $u_1,\ldots,u_d$ corresponding to eigenvalues $\lambda_1 \ge \cdots \ge \lambda_d \ge 0$, the algorithm outputs a unit vector $w$ such that with probability $\ge 1- \eta$,
$
    \sum_{i \in [d]: \lambda_i \le (1-\varepsilon)\lambda_1} \la w, u_i\ra^2 \le \varepsilon_{\pca}.
$
\end{definition}
We now show that Lanczos algorithm can be used to stably compute a vector that satisfies the $\AppxPCA$ guarantee.
\begin{lemma}
    If for any vector $x$, we can compute a vector $y$ such that \[\opnorm{y - M_sx} \le O(\varepsilon_{\machine}\poly(n)\kappa)\opnorm{M_s}\opnorm{x}\] and if $\varepsilon_{\machine} \le \poly(\varepsilon_{\pca}\eta/n\kappa)$, then we can compute a unit vector $v$ such that with probability $\ge 1 - \eta$, 
$
        \sum_{i: \lambda_i(M_s) \le (1-\varepsilon)\lambda_1(M_s)} \la v, u_i(M_s)\ra^2 \le \varepsilon.
$
    The algorithm uses $O(\frac{1}{\sqrt{\varepsilon}}\poly(\log(d/\varepsilon\eta\varepsilon_{\pca})))$ matrix vector products with $M_s$.
    \label{lma:apxpca}
\end{lemma}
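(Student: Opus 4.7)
The plan is to run the Lanczos algorithm of \cite{stability-of-lanczos} on $M_s$ using our approximate matrix--vector product oracle, and argue that its output satisfies the $\AppxPCA$ guarantee with the stated probability. The stability result for Lanczos from \cite{stability-of-lanczos} gives a per-vector eigenvector/eigenvalue-gap guarantee: after $q = O((1/\sqrt{\varepsilon})\poly(\log(d/\varepsilon\eta\varepsilon_{\pca})))$ iterations, and starting from a Gaussian initial vector, the algorithm produces a unit vector $v$ such that with probability $\ge 1-\eta$ the Rayleigh quotient $\T{v}M_s v$ is within a $(1-\varepsilon/2)$ factor of $\lambda_1(M_s)$, provided each matrix--vector product with $M_s$ is carried out to sufficiently small relative error in the operator norm.

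Concretely, I would proceed as follows. First, invoke the stability theorem of \cite{stability-of-lanczos} applied to $M_s$, which reduces the problem to verifying its hypothesis on matrix--vector product accuracy. The hypothesis is of the form $\|y - M_s x\| \le \delta \|M_s\| \|x\|$ where $\delta$ must be at most some polynomial in $\varepsilon, \varepsilon_{\pca}, \eta, 1/d, 1/\kappa$. By the assumption of the lemma, our oracle produces $y$ with $\|y - M_s x\| \le O(\varepsilon_{\machine} \poly(n) \kappa) \|M_s\| \|x\|$, so it suffices to take $\varepsilon_{\machine} \le \poly(\varepsilon_{\pca}\eta/n\kappa)$ with a sufficiently large polynomial degree, which is exactly the lemma's hypothesis. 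Second, I would convert the Rayleigh quotient guarantee into the desired subspace-projection guarantee: if $v = \sum_i c_i u_i(M_s)$ with $\sum_i c_i^2 = 1$, then
\[
\T{v} M_s v = \sum_i c_i^2 \lambda_i(M_s) \le \lambda_1(M_s)\Bigl(1 - \varepsilon \sum_{i:\lambda_i(M_s) \le (1-\varepsilon)\lambda_1(M_s)} c_i^2\Bigr),
\]
so the Rayleigh quotient lower bound of $(1-\varepsilon/2)\lambda_1(M_s)$ (which the stable Lanczos guarantees) forces $\sum_{i:\lambda_i(M_s) \le (1-\varepsilon)\lambda_1(M_s)} c_i^2 \le 1/2$. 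To upgrade $1/2$ to the target $\varepsilon_{\pca}$, I would strengthen the Rayleigh quotient accuracy to $(1 - \varepsilon\,\varepsilon_{\pca}/2)\lambda_1(M_s)$, which only costs an extra $\poly(\log(1/\varepsilon_{\pca}))$ factor in $q$ and a polynomial tightening of $\varepsilon_{\machine}$.

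The iteration count $q = O((1/\sqrt{\varepsilon})\poly(\log(d/\varepsilon\eta\varepsilon_{\pca})))$ is exactly what the stable Lanczos bound provides; each iteration uses one (approximate) matrix--vector product with $M_s$ plus $O(d)$ additional arithmetic operations, so the number of oracle calls is as claimed. The success probability $1-\eta$ comes from the Gaussian initialization being non-degenerate with respect to the top eigenspace of $M_s$, which holds up to failure probability $\eta$ by a standard anti-concentration argument for Gaussians.

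The main obstacle will be verifying that the precision requirement in \cite{stability-of-lanczos} translates cleanly through the dependence on $\kappa = \sigma_1(A)/\sigma_{k+1}(A)$ appearing in our oracle bound: the Lanczos stability statement naturally assumes multiplicative error in $\|M_s\|$, and here $M_s = (I - V_s\T{V_s})M(I - V_s\T{V_s})$ can have much smaller operator norm than $M$, so one has to check that $\|M_s\|$ in the error bound can be replaced by $\|M\|$ without losing a prohibitive factor -- this is precisely why the $\kappa$ factor enters the precision requirement and why the stability of LazySVD requires the modification to the deflation step discussed in the introduction. Once this bookkeeping is done, the remaining steps are routine applications of the cited Lanczos stability theorem plus the elementary Rayleigh quotient argument above.
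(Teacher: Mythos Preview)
Your reduction from a Rayleigh--quotient bound to the subspace--projection bound is correct: if $\T{v}M_s v\ge(1-\delta)\lambda_1$ then $\sum_{i:\lambda_i\le(1-\varepsilon)\lambda_1}\langle v,u_i\rangle^2\le\delta/\varepsilon$, so you need $\delta\le\varepsilon\varepsilon_{\pca}$. The gap is in the sentence ``which only costs an extra $\poly(\log(1/\varepsilon_{\pca}))$ factor in $q$.'' The gap--free Rayleigh--quotient guarantee for Lanczos (or Block Krylov) is of the form $\T{v}M_s v\ge\lambda_1-O(\log^2 d/q^2)\lambda_1$, i.e.\ it converges like $1/q^2$, not exponentially. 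Getting $\delta=\varepsilon\varepsilon_{\pca}$ via that route costs $q=\Theta(1/\sqrt{\varepsilon\varepsilon_{\pca}})$ iterations, and since $\varepsilon_{\pca}=\poly(\varepsilon/d\kappa)$ in the LazySVD application this would be $\poly(d,\kappa)$ matrix--vector products. To make the polylog claim true you must exploit the ``soft gap'' at $(1-\varepsilon)\lambda_1$: the bad eigenvalues are separated from $\lambda_1$ by $\varepsilon\lambda_1$, and a Chebyshev--type polynomial adapted to that gap gives exponential decay of the bad coefficients. But you have not argued this, and the black--box Rayleigh--quotient statement you attribute to \cite{stability-of-lanczos} does not do it for you.

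Relatedly, Theorem~1 of \cite{stability-of-lanczos} is a stability guarantee for \emph{matrix--function application} (given $f$ approximable by a low--degree polynomial on the spectrum, finite--precision Lanczos outputs $y'\approx f(M_s)z$), not a direct eigenvector/Rayleigh--quotient guarantee. The paper uses it in exactly that mode: it first analyzes the exact power iterate $\hat y=M_s^q z/\|M_s^q z\|$ with $q=O(\varepsilon^{-1}\log(d/\varepsilon_{\pca}\eta))$ and shows directly that $|\langle u_i,\hat y\rangle|\le(1-\varepsilon)^q\cdot\poly(d/\eta)$ for each bad $i$; then it invokes a degree--$O(\sqrt{q\log(1/\gamma)})$ polynomial approximation of $x^q$ (Sachdeva--Vishnoi) and applies Theorem~1 of \cite{stability-of-lanczos} to compute $y'\approx M_s^q z$ stably. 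The $\poly(\log(1/\varepsilon_{\pca}))$ dependence then comes transparently from the exponent $q$ inside the square root, not from an unstated Lanczos convergence rate. If you want to salvage your Rayleigh--quotient route, you would essentially have to reproduce this polynomial argument to justify the polylog claim, at which point you are doing the paper's proof.
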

\begin{proof}
     Let $\bz$ be a $d$ dimensional random vector with each coordinate being an independent Gaussian random variable. Let $M_s = \sum_i \lambda_i u_i \T{u_i}$ be the eigendecomposition. Let $r$ be the largest index such that $\lambda_r \ge (1-\varepsilon)\lambda_1$. Consider the vector $M_s^q \bz$ for a $q$ we choose later. We have
    \begin{align*}
        \by = M_s^q \bz = \sum_{i=1}^d \lambda_i^q \la u_i, \bz\ra u_i.
    \end{align*}
    Consider $\la u_1, \bz\ra$. By 2-stability of Gaussian random variables, $\la u_1, \bz\ra \sim N(0, \opnorm{u_1}^2) = N(0,1)$. Hence with probability $1 - \eta$, $|\la u_1, \bz\ra| \ge \eta$. We also have that with probability $\ge 1 - \eta$, for all $i=1,\ldots,d$ $|\la u_i, \bz\ra| \le O(\sqrt{\log d/\eta})$. Condition on these events. Now,
    $
        \opnorm{\by}^2 = \sum_{i=1}^d \lambda_i^{2q} \la u_i, \bz\ra^2 \ge \lambda_1^{2q}\la u_1, \bz\ra^2 \ge \lambda_1^{2q}\eta^2.
    $
    Define $\hat{\by} = \by/\opnorm{\by}$. Let $i > r$ so that $\lambda_i < (1-\varepsilon)\lambda_1$ by definition of $r$. We have
    \begin{align*}
        |\la u_i, \hat{\by}\ra| = \frac{|\la u_i, \by\ra|}{\opnorm{\by}} \le \frac{\lambda_i^{q}|\la u_1, \bz\ra|}{\lambda_1^q \eta} \le \frac{\lambda_i^q \sqrt{\log d/\eta}}{\lambda_1^q \eta} \le (1-\varepsilon)^q \frac{C\sqrt{\log d/\eta}}{\eta}.
    \end{align*}
    If $q \ge C\varepsilon^{-1}\log(d/\varepsilon_{\pca}\eta)$ for a large enough constant $C$, we get $|\la u_i, \hat{\by}\ra| \le \poly(\varepsilon_{\pca}/d)$. Thus, $\sum_{i=r+1}^d |\la u_i, \hat{\by}\ra|^2 \le \poly(\varepsilon_{\pca})$.

    Now define $f(x) = x^q$ so that $f(M_s)\bz = \by$ and define $\rho = \lambda_1/q$. From \cite[Chapter~3]{sachdeva2014faster} there is a polynomial $p(x)$ of degree $\sqrt{2q\log 1/\gamma}$ such that for all $x \in [-\rho, \lambda_1 + \rho]$, 
    \begin{align*}
        |p(x) - x^q| \le e\gamma \lambda_1^q. 
    \end{align*}
    As we can compute matrix-vector products with $M_s$ up to an additive error of $O(\varepsilon_{\machine}\poly(n)\kappa)$, using Theorem~1 of \cite{stability-of-lanczos} as long as $\varepsilon_{\machine} \le \varepsilon'\rho/(\poly(n)\kappa\opnorm{M_s}) \le \varepsilon'/\poly(n)\kappa$, we can compute a vector $\by'$ on a floating point machine, using $\sqrt{2q\log 1/\gamma}$ iterations such that
    \begin{align*}
        \opnorm{\by - \by'} = \opnorm{(M_s)^q \bz - \by'}&\le ((7e\gamma \sqrt{2q\log 1/\gamma}) \lambda_1^q + \varepsilon'\lambda_1^q)\opnorm{\bz}\\ &\le O(\gamma\sqrt{2q\log 1/\gamma} + \varepsilon')\lambda_1^q \sqrt{d}.
    \end{align*}
    where we used that $\opnorm{\bz} \le O(\sqrt{d})$ with high probability. As $\opnorm{\by} \ge \lambda_1^q \eta$, we further obtain that
    \begin{align*}
        \opnorm{\by - \by'} \le O(\gamma \sqrt{2q\log 1/\gamma} + \varepsilon') \sqrt{d}\opnorm{\by}/\eta.
    \end{align*}
    We set $\gamma = \poly(\varepsilon_{\pca}\eta/dq)$ and $\varepsilon' = \poly(\varepsilon_{\pca}\eta/d)$ to obtain that $\opnorm{\by - \by'} \le \poly(\varepsilon_{\pca}/d)\opnorm{\by}$. Thus,
    \begin{align*}
        \opnorm{\hat{\by} - \by'/\opnorm{\by'}} \le \opnorm{\by/\opnorm{\by} - {\by'}/\opnorm{\by'}} \le \poly(\varepsilon_{\pca}/d).
    \end{align*}
    On a floating point machine, we can normalize the vector $\by'$ to obtain a vector $\hat{\by}'$ such that $\opnorm{\hat{\by}'} = (1 \pm \varepsilon_{\machine}\poly(d))$ and $\opnorm{\hat{\by}' - \by'/\opnorm{\by'}} \le \varepsilon_{\machine}\poly(d)$. By triangle inequality, we then obtain $\opnorm{\hat{\by} - \hat{\by}'} \le \poly(\varepsilon/d) + \varepsilon_{\machine}\poly(d)$. Finally, for $i > r$
    \begin{align*}
        |\la u_i, \hat{\by}'\ra| \le |\la u_i, \hat{\by}\ra| + \opnorm{\hat{\by} - \hat{\by}'} \le \poly(\varepsilon_{\pca}/d) + \varepsilon_{\machine}\poly(d)
    \end{align*}
    which then implies that as long as $\varepsilon_{\machine} \le \poly(\varepsilon_{\pca}/d)$, we get $\sum_{i=r+1}^d \la u_i, \hat{\by}\ra^2 \le \poly(\varepsilon_{\pca})$.

    Thus, we overall obtain that if $\varepsilon_{\machine} \le \poly(\varepsilon_{\pca}\eta/d\kappa)$, we can obtain a vector $\hat{\by}'$ by running the Lanczos method for $O(\frac{1}{\sqrt{\varepsilon}}\poly(\log (d/\varepsilon_{\pca}\eta\varepsilon)))$ iterations such that with probability $\ge 1 - \eta$, $\opnorm{\hat{\by}'} = (1 \pm \varepsilon_{\machine}\poly(d))$ and \[\sum_{i:\lambda_i(M_s) \le (1-\varepsilon)\lambda_1(M_s)}\la \hat{\by}', u_i\ra^2 \le \varepsilon_{\pca}.\] Overall, the algorithm uses $O(\frac{1}{\sqrt{\varepsilon}}\poly(\log (d/\varepsilon\eta \varepsilon)))$ matrix vector products with $M_s$ and uses an additional $O(\frac{d}{\sqrt{\varepsilon}}\poly(\log (d/\varepsilon_{\pca}\eta \varepsilon)))$ floating point operations.
\end{proof}
Finally, we modify the LazySVD algorithm (see Algorithm~\ref{alg:modified-lazy-svd}) to make it more stable when implemented on a floating point machine. The modification preserves the semantics of the algorithm in the real number model while allowing the stability analysis to go through. For the matrices that we need to run the routine $\AppxPCA$ on, we show that we can compute very accurate matrix-vector products so that the Lanczos algorithm can be used to approximate the top eigenvector to obtain the following theorem:
\begin{theorem}
    Given an $n \times d$ matrix $A$ with condition number $\kappa(A) = \sigma_1(A)/\sigma_{k+1}(A)$, an accuracy parameter $\varepsilon$, a rank parameter $k$ and probability parameter $\eta$, if $\varepsilon_{\machine} \le \poly(\varepsilon\eta/n\kappa(A))$, there is an algorithm that outputs a $d \times k$ matrix $V_k$ such that $\kappa(V_k) \le 4$ and for all $p \in [2, \infty]$,
        \[
        \|A(I - \Proj_{\colspace(V_k)})\|_{S_p} \le (1 + O(\varepsilon))\|A - A_k\|_{S_p}
        \]
    and runs in time $O(\frac{\nnz(A)k}{\sqrt{\varepsilon}}\poly(\log(d\kappa(A)/\varepsilon\eta)) + d\poly(k, \log(dk/\eta\varepsilon)))$.
    \label{thm:lazysvd-stability}
\end{theorem}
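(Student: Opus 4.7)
The plan is to analyze the modified LazySVD algorithm (Algorithm~\ref{alg:modified-lazy-svd}) run on the Gram matrix $M = \T{A}A$, using the Lanczos routine of Lemma~\ref{lma:apxpca} in place of \AppxPCA, and to track floating-point errors through the \emph{projection matrices} $\tilde{P}_s = \Proj_{\colspace(\tilde{V}_s)}$ rather than through the individual vectors $\tilde{v}_s$. The strategy is to reduce to the exact-arithmetic LazySVD guarantee of \cite{allen2016lazysvd}, which already gives that $P_k = \Proj_{\colspace(V_k)}$ simultaneously yields a $(1+\varepsilon)$ approximation in every Schatten-$p$ norm for $p \ge 2$; the remaining work is to show $\opnorm{\tilde P_k - P_k} \le \poly(n,\kappa)\varepsilon_{\machine}$, which is then absorbed into the final $O(\varepsilon)$ slack under the precision hypothesis.

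The first step is to verify the matrix--vector-product hypothesis of Lemma~\ref{lma:apxpca} for the deflated matrix $\tilde M_{s-1} = (I-\tilde P_{s-1})\T{A}A(I-\tilde P_{s-1})$. The modification in Algorithm~\ref{alg:modified-lazy-svd} is designed precisely so that $\kappa(\tilde V_{s-1}) \le 4$ throughout, which allows $I - \tilde P_{s-1}$ to be applied stably: maintaining a QR-like decomposition of $\tilde V_{s-1}$ lets us apply the projection in $O(dk)$ operations with error $\poly(k)\varepsilon_{\machine}\opnorm{x}$, whereas without the enforced conditioning the repeated Gram--Schmidt steps could amplify error much more severely. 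A single sparse product with $A$ and then with $\T{A}$ contributes error $\varepsilon_{\machine}\poly(n)\opnorm{A}^2\opnorm{x}$, so overall $\tilde M_{s-1}x$ is computed to the relative precision $\varepsilon_{\machine}\poly(n)\kappa$ that Lemma~\ref{lma:apxpca} requires, at cost $O(\nnz(A) + dk)$ per product.

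The inductive core is the bound $\opnorm{\tilde P_s - P_s} \le s \cdot \poly(n,\kappa)\varepsilon_{\machine}$. Conditional on the inductive hypothesis at step $s-1$, $\tilde M_{s-1}$ is an $\opnorm{A}^2\opnorm{\tilde P_{s-1} - P_{s-1}}$-perturbation of $M_{s-1}$, so by a Davis--Kahan/Weyl argument the \AppxPCA guarantee produced by Lemma~\ref{lma:apxpca} for $\tilde M_{s-1}$ transfers to an \AppxPCA guarantee for $M_{s-1}$ with only a mild degradation of $\varepsilon_{\pca}$. Comparing the one-dimensional deflation $\tilde v_s = (I - \tilde P_{s-1})\tilde v_s'/\opnorm{(I - \tilde P_{s-1})\tilde v_s'}$ to its exact counterpart $v_s$, via a $\sin\theta$-type bound on a one-dimensional extension of a projection, changes the projection by $O(\varepsilon_{\pca}^{1/2} + \poly(n,\kappa)\varepsilon_{\machine})$ in operator norm; the enforcement $\kappa(\tilde V_s) \le 4$ is what prevents the normalization step from amplifying the accumulated error. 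Controlling this per-step increment so that the total error grows only linearly in $s$ is the principal obstacle, and it is exactly what the projection-matrix formulation buys us over per-vector tracking (which would require polynomially many bits of precision, as the introduction points out).

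Having established $\opnorm{\tilde P_k - P_k} \le \poly(n,\kappa)\varepsilon_{\machine}$, I invoke the exact-arithmetic guarantee of \cite{allen2016lazysvd} to conclude $\lp{A(I - P_k)} \le (1+\varepsilon)\lp{A - A_k}$ simultaneously for all $p \ge 2$, and perturb via Mirsky's inequality $|\sigma_i(A(I-\tilde P_k)) - \sigma_i(A(I-P_k))| \le \opnorm{A}\opnorm{\tilde P_k - P_k}$ to transfer the bound to $\tilde P_k$; the slack is at most $k^{1/p}\opnorm{A}\opnorm{\tilde P_k - P_k}$, which is $\le \varepsilon \lp{A - A_k}$ under $\varepsilon_{\machine} \le \poly(\varepsilon\eta/n\kappa)$. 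The output $V_k$ has $\kappa(V_k) \le 4$ by construction of the modified algorithm, and the runtime bound collects $k$ iterations, each using $O(\sqrt{1/\varepsilon}\poly(\log(d\kappa/\varepsilon\eta)))$ matrix--vector products at $O(\nnz(A) + dk)$ each, plus $d\poly(k,\log(dk/\eta\varepsilon))$ bookkeeping for the QR maintenance of $\tilde V_s$ and the final basis assembly.
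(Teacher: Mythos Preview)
Your plan takes a different and more circuitous route than the paper. You frame the proof as a coupling argument: run the finite-precision algorithm alongside an exact-arithmetic twin, show $\|\tilde P_s - P_s\| \le s\cdot\poly(n,\kappa)\varepsilon_{\machine}$ inductively, then invoke the exact-arithmetic LazySVD guarantee and perturb via Mirsky. The paper never compares to an exact run at all. Its key observation is that Theorem~4.1 of \cite{allen2016lazysvd} is a black box: it only requires that at each step $s$ the vector $v_s'$ be $\AppxPCA_{\varepsilon,\varepsilon_{\pca}}$ for the matrix $M_{s-1}=(I-\Proj_{\colspace(V_{s-1})})M(I-\Proj_{\colspace(V_{s-1})})$, and it does not care how $V_{s-1}$ was produced. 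So the paper argues that the \emph{finite-precision} sequence $V_0,V_1,\ldots$ itself meets this hypothesis: Lemma~\ref{lma:condition-number-bound} gives $\kappa(V_s)\le 4$ as a \emph{consequence} of the $\AppxPCA$ guarantee (not an enforced condition), stable least squares then yields accurate products $M_{s}x$, and Lemma~\ref{lma:apxpca} shows Lanczos on those approximate products returns a valid $\AppxPCA$ vector for that exact $M_s$. Theorem~4.1 of \cite{allen2016lazysvd} then applies directly to the finite-precision $V_k$, and the Schatten-$p$ bound for $p\ge 2$ follows from $\|\T{B}B\|_{S_{p/2}}^{p/2}=\|B\|_{S_p}^p$.

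Your sketch also has some loose steps. You invoke a normalization $\tilde v_s=(I-\tilde P_{s-1})\tilde v_s'/\opnorm{\cdot}$ that Algorithm~\ref{alg:modified-lazy-svd} does \emph{not} perform: it appends $v_s'$ raw, and $\kappa(V_s)\le 4$ falls out of Lemma~\ref{lma:condition-number-bound} rather than from any orthogonalization or QR maintenance. The Davis--Kahan step you describe to transfer the $\AppxPCA$ guarantee from $\tilde M_{s-1}$ to $M_{s-1}$ is also not straightforward, since there is no assumed eigenvalue gap at the $(1-\varepsilon)\lambda_1$ threshold; one would need to perturb the power-method vector $M_{s-1}^q z$ directly and track how the error propagates through the $q$-th power. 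Your route may be salvageable with these fixes, and it would deliver an explicit operator-norm bound on $\|\tilde P_k - P_k\|$ that the paper does not state; but the paper shows that bound is unnecessary, since the LazySVD analysis is agnostic to the provenance of $V_s$ and can be run entirely inside the finite-precision execution.
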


\begin{algorithm}
    \caption{Modified LazySVD}\label{alg:modified-lazy-svd}
    \KwIn{A positive semidefinite matrix $M \in \R^{d \times d}, k \le d, \varepsilon, \varepsilon_{\pca}, \eta$}
    \KwOut{Vectors $v'_1, \ldots, v'_k$}
    \DontPrintSemicolon
    $M_0 \gets M$ and $V_0 \gets []$\;
    \For{$s = 1,\ldots,k$}{
        $v'_s \gets \AppxPCA_{\varepsilon, \varepsilon_{\pca}, \eta/k}((I-\Proj_{\colspace(V_{s-1})})M(I - \Proj_{\colspace(V_{s-1})}))$\;
        $V_{s} \gets [V_{s-1}\, v_s']$\;
    }
    \Return $V_k$\;
\end{algorithm}
For convenience, we denote any algorithm that satisfies Definition~\ref{dfn:appxpca} as $\AppxPCA_{\varepsilon, \varepsilon_{\pca}, \eta}$. We abuse notation and say that if a unit vector $w$ satisfies $\sum_{i \in [d]:\lambda_i(M) \le (1-\varepsilon)\lambda_1(M)} \la w, u_i(M)\ra^2 \le \varepsilon_{\pca}$, then ``$w$ is $\AppxPCA_{\varepsilon, \varepsilon_{\pca}}(M)$''.

In \cite{allen2016lazysvd}, the authors show that if $\varepsilon_{\pca} = \poly(\varepsilon, 1/d, \lambda_{k+1}/\lambda_1)$, then with probability $\ge 1 - \eta$ (union bounding over the success of all $k$ calls to the $\AppxPCA$ routine), the orthonormal matrix $V_k$ output by Algorithm~\ref{alg:lazysvd} satisfies
\begin{enumerate}
    \item \label{lazysvd:prop-1} $\opnorm{(I-V_k\T{V_k})M(I-V_k\T{V_k})} \le \frac{\lambda_{k+1}(M)}{(1-\varepsilon)}$,
    \item \label{lazysvd:prop-2} $(1-\varepsilon)\lambda_k(M) \le \T{v_k}Mv_k \le \frac{1}{1-\varepsilon} \lambda_k(M)$ and
    \item \label{lazysvd:prop-3} for every $p \ge 1$, $\|(I-V_k\T{V_k})M(I-V_k\T{V_k})\|_{S_p} \le (1 + O(\varepsilon))(\sum_{i=k+1}^d \lambda_i^p)^{1/p}$.
\end{enumerate}

Since, the modified algorithm (Algorithm~\ref{alg:modified-lazy-svd}) has the same semantics as Algorithm~\ref{alg:lazysvd}, the properties \ref{lazysvd:prop-1} and \ref{lazysvd:prop-3} continue to hold for the modified LazySVD algorithm.

The advantage of the modification is that given any vector $x$, we can compute $(I - \Proj_{\colspace(V_{s}}))x$ very accurately on a floating point machine using stable algorithms for the least squares problem, thereby obtaining a vector $y$ on a floating point machine that is a very good approximation to $M_s x = (I-\Proj_{\colspace(V_s)})M((I-\Proj_{\colspace(V_s)}))x$ for any given $x$. Below we have a result that states the stability of solving the Least Squares problem on a floating point machine.
\begin{theorem}[{Theorem~19.1 of \cite{trefethen}}]
    The algorithm for solving the least squares problem $\min_x \opnorm{Ax - b}^2$ using Householder triangulation is backwards stable in the sense that the solution $\tilde{x}$ satisfies
    \begin{align*}
        \opnorm{(A + \delta A)\tilde{x} - b}^2 = \min_x \opnorm{(A + \delta A)x - b}^2
    \end{align*}
    for some matrix $\delta A$ satisfying $\opnorm{\delta A} \le O(\varepsilon_{\machine}\opnorm{A})$.
\end{theorem}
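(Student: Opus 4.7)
The plan is to prove this backward-stability result by decomposing the Householder least-squares algorithm into its three constituent numerical stages, showing each is individually backward stable, and then combining the three local backward errors into a single global perturbation $\delta A$ of the input matrix only.

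The first step is to establish the workhorse single-reflection lemma. For a Householder reflector $H = I - 2 v v^T / \opnorm{v}^2$ acting on a vector $y$, I would show that the computed result $\fl(Hy)$ can be written as $\hat{H} y + e$, where $\hat{H}$ is an exactly orthogonal matrix within $O(\varepsilon_{\machine})$ of $H$ and $\opnorm{e} = O(n \varepsilon_{\machine}) \opnorm{y}$. This reduces to tracking floating-point errors in: (a) the formation of the reflection vector $v$ and the computation of $\opnorm{v}^2$, (b) the inner product $v^T y$, for which the standard bound gives $\fl(v^T y) = v^T y + \delta$ with $|\delta| = O(n\varepsilon_{\machine})\opnorm{v}\opnorm{y}$, and (c) the rank-one axpy update. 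The error is deliberately expressed as a perturbation of $y$ rather than of $H$ so that successive reflections compose cleanly without amplification.

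Composing $n$ such reflections to reduce $A$ to upper triangular form then yields a computed $\tilde{R}$ and an exactly orthogonal $\hat{Q}$ (the product of the exact reflectors $\hat{H}_i$) with $\hat{Q}^T (A + E_A) = \tilde{R}$ and $\opnorm{E_A} = O(n \varepsilon_{\machine}) \opnorm{A}$; the factor of $n$ appears because each subsequent orthogonal factor preserves the Euclidean norm of errors accumulated earlier. Applying the same sequence of reflectors to $b$ in floating point yields a computed $\tilde{c}$ with $\tilde{c} = \hat{Q}^T (b + e_b)$ and $\opnorm{e_b} = O(n \varepsilon_{\machine}) \opnorm{b}$. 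Finally, back-substitution on the upper-triangular system $\tilde{R}_{1:n,1:n}\, x = \tilde{c}_{1:n}$ is backward stable in Wilkinson's classical sense: the computed $\tilde{x}$ exactly satisfies $(\tilde{R}_{1:n,1:n} + F)\tilde{x} = \tilde{c}_{1:n}$ with $\opnorm{F} = O(n \varepsilon_{\machine}) \opnorm{\tilde{R}}$.

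The main obstacle is the final step: folding these three separate backward errors into a single perturbation $\delta A$ of $A$ alone, as the theorem statement keeps $b$ exact. Unwinding the chain gives $(A + E_A + \hat{Q} F') \tilde{x} - b = e_b$, where $F'$ is $F$ zero-padded to the full size, so $\tilde{x}$ is not yet the exact minimizer of a single perturbed least-squares problem with unperturbed $b$. The standard fix is to absorb $e_b$ by subtracting the rank-one correction $e_b \tilde{x}^T / \opnorm{\tilde{x}}^2$ from the matrix; its operator norm is bounded by $\opnorm{e_b}/\opnorm{\tilde{x}}$, and this ratio is controlled using $\opnorm{b} \le \opnorm{A}\opnorm{\tilde{x}} + \opnorm{A\tilde{x} - b}$ together with the residual bound coming from near-optimality of $\tilde{x}$ on the perturbed problem. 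The delicate piece is verifying that the resulting combined $\delta A = E_A + \hat{Q} F' - e_b \tilde{x}^T / \opnorm{\tilde{x}}^2$ still satisfies $\opnorm{\delta A} = O(\varepsilon_{\machine}) \opnorm{A}$ and that $\tilde{x}$ exactly satisfies the normal equations $(A + \delta A)^T ((A + \delta A)\tilde{x} - b) = 0$; once this accounting goes through, the three stage-by-stage bounds collapse into the single backward-error statement claimed.
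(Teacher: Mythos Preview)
The paper does not prove this theorem; it is quoted from Trefethen and Bau's textbook and used as a black box to derive the bound $\opnorm{A\tilde{x} - Ax^*} \le O(\varepsilon_{\machine}\poly(\kappa(A))\opnorm{b})$. There is therefore no proof in the paper to compare your attempt against.

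That said, your three-stage decomposition is the standard one, and the first three paragraphs are fine. The gap is exactly where you flag it as ``delicate'': the rank-one absorption of $e_b$. You propose to control $\opnorm{e_b}/\opnorm{\tilde{x}}$ via $\opnorm{b} \le \opnorm{A}\opnorm{\tilde{x}} + \opnorm{A\tilde{x} - b}$, but this only gives
\[
\frac{\opnorm{e_b}}{\opnorm{\tilde{x}}} = O(\varepsilon_{\machine})\left(\opnorm{A} + \frac{\opnorm{r}}{\opnorm{\tilde{x}}}\right),
\]
where $r = A\tilde{x} - b$ is the residual. Near-optimality of $\tilde{x}$ on the perturbed problem tells you $\opnorm{r}$ is close to the \emph{optimal} residual, but the optimal residual itself can be arbitrarily large relative to $\opnorm{A}\opnorm{\tilde{x}}$: take $b$ nearly orthogonal to $\colspace(A)$, so that $\tilde{x} \approx 0$ while $\opnorm{r} \approx \opnorm{b}$. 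In that regime your $\delta A$ is not $O(\varepsilon_{\machine})\opnorm{A}$, and the argument breaks. This is a known subtlety: the standard backward-stability statement for Householder least squares (e.g.\ Higham, \emph{Accuracy and Stability of Numerical Algorithms}, Theorem~20.3) carries perturbations on \emph{both} $A$ and $b$. Absorbing $\delta b$ entirely into $\delta A$, as the Trefethen statement does, requires either an additional hypothesis or a more delicate construction than a single rank-one update. For the paper's actual application---projecting a vector onto $\colspace(V_s)$ with $\kappa(V_s) \le 4$---the version with a $\delta b$ term already suffices to get the conclusion $\opnorm{y - \Proj_{\colspace(A)}x} \le O(\varepsilon_{\machine}\poly(\kappa(A),d))\opnorm{x}$, so the distinction is immaterial for the paper's purposes.
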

Let $x^* = A^+b$ and from the above theorem, we have $\tilde{x} = (A+\delta A)^{+}b$. Assuming $\varepsilon_{\machine} \le 1/2\kappa(A)$, we have $A+\delta A$ is full rank and therefore $(A+\delta A)^+ = (\T{(A+\delta A)}(A+\delta A))^{-1}\T{(A+\delta A)}$ using which we obtain that $\opnorm{A\tilde{x} - Ax^*} \le O(\varepsilon_{\machine}\poly(\kappa(A))\opnorm{b})$. Note that $Ax^* = \Proj_{\colspace(A)}b$. Thus, given a matrix $A$ and a vector $x$, we can compute a vector $y$ on a floating point machine such that
$
    \opnorm{y - \Proj_{\colspace(A)} x} \le O(\varepsilon_{\machine}\poly(\kappa(A), d)\opnorm{x}).
$

Finally, we can compute another vector $y'$ satisfying
$
    \opnorm{y' - (I-\Proj_{\colspace(A)})x} \le O(\varepsilon_{\machine}\poly(\kappa(A), d)\opnorm{x}).
$
Thus, given any vector $x$, if operations are computed using machine precision $\varepsilon_{\machine}$ and if we assume that for any arbitrary vector $x$, we can compute a vector $y$ satisfying $\opnorm{y - Mx} \le \varepsilon_{M}\opnorm{M}\opnorm{x}$, then given any vector $x$, we can compute a vector $y$ on a floating point machine satisfying
$
    \opnorm{y - M_s x} \le O(\varepsilon_{\machine}\poly(\kappa(V_s), d) + \varepsilon_M)\opnorm{M}\opnorm{x}.
$

We now bound $\kappa(V_s)$. Assume that the vector $v_{s}'$ satisfies $\opnorm{v_{s}'} = (1 \pm \poly(d)\varepsilon_{\machine})$. If the vector $v_{s}'$ is $\AppxPCA_{\varepsilon, \varepsilon_{\pca}}(M_{s-1})$, then
\begin{align*}
    \opnorm{\Proj_{\colspace(V_{s-1})} v_s'}^2 \le \sum_{i \in [d]: \lambda_i(M_{s-1}) \le (1-\varepsilon)\lambda_1(M_{s-1})} \la v_s', u_i(M_s)\ra^2 \le \varepsilon_{\pca}
\end{align*}
where the first inequality follows from the fact that colspace($V_{s-1}$) is spanned by the eigenvectors of $M_{s-1}$ corresponding to zero eigenvalues. Using the above inequality, we can upper bound $\sigma_{\max}(V_s)$ and lower bound $\sigma_{\min}(V_s)$. 
\begin{lemma}
    Suppose $V_{s-1}$ is a $d \times (s-1)$ matrix such that $\sigma_{\max}(V_{s-1}) = \alpha_{s-1}$ and $\sigma_{\min}(V_{s-1}) = \beta_{s-1}$. Let $v_s'$ be a vector with $\opnorm{v'_s} = (1 \pm \poly(d)\varepsilon_{\machine})$ and satisfies
$
        \opnorm{\Proj_{\colspace(V_{s-1})}v'_s}^2 \le \varepsilon_{\pca}.
$
    Let $V_s = [V_{s-1}\, v'_s]$. Then 
$
    \sigma_{\max}(V_s) \le \max(\sigma_{\max}(V_{s-1}), 1+\poly(d)\varepsilon_{\machine}) + \sqrt{\varepsilon_{\pca}}
$ and 
\begin{align*}
    \sigma_{\min}(V_s) &\ge \sqrt{\max(0,\min(\sigma_{\min}(V_{s-1})^2, 1-\poly(d)\varepsilon_{\machine}) - \sigma_{\max}(V_{s-1})\sqrt{\varepsilon_{\pca}})}.
\end{align*}
\label{lma:condition-number-bound}
\end{lemma}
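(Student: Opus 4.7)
The plan is to decompose $v_s' = u + w$, where $u := \Proj_{\colspace(V_{s-1})} v_s'$ and $w := v_s' - u$ is orthogonal to $\colspace(V_{s-1})$. By hypothesis $\opnorm{u}^2 \le \varepsilon_{\pca}$ and, since $\opnorm{v_s'}^2 = \opnorm{u}^2 + \opnorm{w}^2$, we will have $\opnorm{w}^2 \le \opnorm{v_s'}^2 \le 1 + \poly(d)\varepsilon_{\machine}$. The one consequence that drives both bounds is $V_{s-1}^T v_s' = V_{s-1}^T u$ (since $V_{s-1}^T w = 0$), which yields $\opnorm{V_{s-1}^T v_s'} \le \sigma_{\max}(V_{s-1})\opnorm{u} \le \alpha_{s-1}\sqrt{\varepsilon_{\pca}}$.

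For the upper bound on $\sigma_{\max}(V_s)$, I would write $V_s = [V_{s-1}\ w] + [0\ u]$ and apply Weyl's inequality for singular values. Because $w \perp \colspace(V_{s-1})$, the Gram matrix $[V_{s-1}\ w]^T [V_{s-1}\ w]$ is block-diagonal with blocks $V_{s-1}^T V_{s-1}$ and $\opnorm{w}^2$, so $\sigma_{\max}([V_{s-1}\ w]) = \max(\sigma_{\max}(V_{s-1}), \opnorm{w}) \le \max(\alpha_{s-1}, 1+\poly(d)\varepsilon_{\machine})$. The perturbation $[0\ u]$ has operator norm $\opnorm{u} \le \sqrt{\varepsilon_{\pca}}$, and Weyl gives the claimed bound immediately.

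For the lower bound, Weyl turns out to be the wrong tool and I would instead test the Rayleigh quotient directly. For a unit $x = (y, c) \in \R^s$ with $\opnorm{y}^2 + c^2 = 1$, expand
\[
\opnorm{V_s x}^2 = \opnorm{V_{s-1} y}^2 + 2c\, y^T V_{s-1}^T v_s' + c^2 \opnorm{v_s'}^2.
\]
The cross term is controlled using the setup: $|2c\, y^T V_{s-1}^T v_s'| \le 2|c|\opnorm{y}\alpha_{s-1}\sqrt{\varepsilon_{\pca}}$, and by AM-GM $2|c|\opnorm{y} \le \opnorm{y}^2 + c^2 = 1$, so the cross term contributes at most $\alpha_{s-1}\sqrt{\varepsilon_{\pca}}$ in absolute value. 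The diagonal terms satisfy $\opnorm{V_{s-1} y}^2 + c^2 \opnorm{v_s'}^2 \ge \beta_{s-1}^2 \opnorm{y}^2 + (1-\poly(d)\varepsilon_{\machine})c^2 \ge \min(\beta_{s-1}^2, 1 - \poly(d)\varepsilon_{\machine})$. Infimizing over unit $x$, taking the nonnegative part (since $\sigma_{\min}^2 \ge 0$), and a square root yields the stated $\sigma_{\min}(V_s)$ bound.

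The main subtlety to get right is keeping the $\sqrt{\varepsilon_{\pca}}$ penalty \emph{additive} rather than multiplicative: a naive Weyl-perturbation of $V_s^T V_s$ by the full off-diagonal block would produce factors $\alpha_{s-1}^2\varepsilon_{\pca}$ inside the square root, which would be too weak when iterated across all $k$ rounds of Algorithm~\ref{alg:modified-lazy-svd}. Routing the perturbation through the identity $V_{s-1}^T v_s' = V_{s-1}^T u$ and then splitting the cross term via AM-GM is precisely what preserves the linear-in-$\sqrt{\varepsilon_{\pca}}$ scaling the rest of the stability analysis requires; the $\poly(d)\varepsilon_{\machine}$ slack from $\opnorm{v_s'}^2$ is small enough that it can be absorbed without interacting with $\alpha_{s-1}$.
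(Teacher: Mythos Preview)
Your proposal is correct and, for the lower bound, essentially identical to the paper's proof: both expand $\opnorm{V_s x}^2$ for a unit $x=(y,c)$, bound the cross term via $\opnorm{V_{s-1}^T v_s'}\le \alpha_{s-1}\sqrt{\varepsilon_{\pca}}$ (the paper routes this through a QR factorization $V_{s-1}=QR$ and uses $\opnorm{Q^T v_s'}\le\sqrt{\varepsilon_{\pca}}$, which is the same thing), and use $2|c|\opnorm{y}\le 1$. For the upper bound the paper simply reuses the same Rayleigh-quotient expansion to get $\opnorm{V_sx}^2\le\max(\alpha_{s-1}^2,1+\poly(d)\varepsilon_{\machine})+\alpha_{s-1}\sqrt{\varepsilon_{\pca}}$ and then takes a square root, whereas you split $V_s=[V_{s-1}\ w]+[0\ u]$ and invoke Weyl; both are straightforward and give the stated bound.
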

\begin{proof}
    Let $Q$ be an orthonormal basis for the column space of $V_{s-1}$ and let $V_{s-1} = QR$ for a matrix $R$ with $\sigma_{\max}(R) = \sigma_{\max}(V_{s-1}) = \alpha_{s-1}$ and $\sigma_{\min}(R) = \sigma_{\min}(V_{s-1}) = \beta_{s-1}$. We have that $\opnorm{\T{Q}v_s'}^2 = \opnorm{Q\T{Q}v_s'}^2 = \opnorm{\Proj_{\colspace(V_{s-1})} v_s'}^2 \le \varepsilon_{\pca}$. Let $x \in \R^{s}$ be an arbitrary unit vector. Let $x_1 \in \R^{s-1}$ and $x_2 \in \R$ be such that $x = (x_1, x_2)$. Now,
    \begin{align*}
        \opnorm{V_s x}^2 &= \opnorm{QRx_1 + v'_s x_2}^2 = \opnorm{Rx_1}^2 + x_2^2\opnorm{v'_s}^2 + (2x_2) \T{x_1}\T{R}\T{Q}v'_s\\
        &\le \alpha_{s-1}^2\opnorm{x_1}^2 + (1+\poly(d)\varepsilon_{\machine})x_2^2 + (2|x_2|)\alpha_{s-1}\opnorm{x_1}\sqrt{\varepsilon_{\pca}}\\
        &\le \max(\alpha^2_{s-1}, 1+\poly(d)\varepsilon_{\machine}) + \alpha_{s-1}\sqrt{\varepsilon_{\pca}}(\opnorm{x_1}^2 + |x_2|^2)
    \end{align*}
    which implies that $\opnorm{V_s} \le \max(\alpha_{s-1}, 1+\poly(d)\varepsilon_{\machine}) + \sqrt{\varepsilon_{\pca}}$. Similarly,
    \begin{align*}
        \opnorm{V_s x}^2 &= \opnorm{QRx_1 + v'_s x_2}^2 = \opnorm{Rx_1}^2 + x_2^2\opnorm{v'_s}^2 + (2x_2)\T{x_1} \T{R}\T{Q}v'_s\\
        &\ge \beta_{s-1}^2\opnorm{x_1}^2 + (1 - \poly(d)\varepsilon_{\machine})x_2^2 - 2|x_2|\opnorm{x_1}\alpha_{s-1}\sqrt{\varepsilon_{\pca}}\\
        &\ge \min(\beta_{s-1}^2, 1-\poly(d)\varepsilon_{\machine}) - \alpha_{s-1}\sqrt{\varepsilon_{\pca}}.
    \end{align*}
    Hence, $\sigma_{\min}(V_s) \ge  \sqrt{\max(0,\min(\sigma_{\min}(V_{s-1})^2, 1-\poly(d)\varepsilon_{\machine}) - \sigma_{\max}(V_{s-1})\sqrt{\varepsilon_{\pca}})}$.
\end{proof}
Conditioned on the event that $\opnorm{\Proj_{\colspace(V_{s-1})}v'_s}^2 \le \varepsilon_{\pca}$ and $\opnorm{v'_s}^2 = (1 \pm \poly(d)\varepsilon_{\machine})$ for all $s=1,\ldots,k$, from the above lemma, we obtain that $\opnorm{V_s} \le 1 + \poly(d)\varepsilon_{\machine} + k\sqrt{\varepsilon_{\pca}}$. If $\varepsilon_{\pca} \le 1/\poly(k)$ and $\varepsilon_{\machine} \le 1/\poly(d)$, then $\opnorm{V_s} \le 2$ for all $s=1,\ldots,k$ which in turn implies that for all $s=1,\ldots,k$,
$
    \sigma_{\min}(V_s) \ge \sqrt{1 - \poly(d)\varepsilon_{\machine} - 2k\sqrt{\varepsilon_{\pca}}} \ge 1/2
$
assuming $\varepsilon_{\pca} \le 1/\poly(k)$ and $\varepsilon_{\machine} \le 1/\poly(d)$. 

Hence, $\kappa(V_s) \le 4$ for all $s=1,\ldots,k$ in Algorithm~\ref{alg:modified-lazy-svd} conditioned on the success of all the calls to $\AppxPCA$. Thus, we can assume that given any vector $x$, we can compute a vector $y$ on a floating point computer with precision $\varepsilon_{\machine} \le 1/\poly(d)$ such that
$
    \opnorm{y - M_s x} \le O(\varepsilon_{\machine}\poly(d) + \varepsilon_M)\opnorm{M}\opnorm{x}.
$

Let $A \in \R^{n \times d}$ be an arbitrary matrix with $n \ge d$. Define $M = \T{A}A$ to be a $d\times d$ matrix. Given any vector $x$, we can compute a vector $y$ satisfying
$
    \opnorm{\T{A}A x - y} \le O(\varepsilon_{\machine}\poly(n)\opnorm{A}^2\opnorm{x}). 
$
As $\opnorm{A}^2 = \opnorm{M}$, we thus have that for any $x$, we can compute $y$ satisfying $O(\varepsilon_{\text{machine}}\poly(d)\opnorm{M}\opnorm{x})$. Thus, $\varepsilon_{M}$ as defined above can be taken as $\varepsilon_{\machine}\poly(d)$. Let $\kappa = \lambda_1(M)/\lambda_{k+1}(M)$. By definition of eigenvalues, for any matrix $V$ with at most $k$ columns, we have
$
    \opnorm{(I-\Proj_{\colspace(V)})M(I - \Proj_{\colspace(V))})} \ge \lambda_{k+1}. 
$
Hence for all $s=1,\ldots,k$, $\opnorm{M_s} \ge \opnorm{M}/\kappa$. Thus, given any vector $x$, we can compute a vector $y$ satisfying
$
    \opnorm{y - M_s x} \le O(\varepsilon_{\machine}\poly(n)\kappa)\opnorm{M_s}\opnorm{x}.
$

Finally, we have the main theorem showing the stability of the LazySVD algorithm.

\begin{proof}[Proof of Theorem~\ref{thm:lazysvd-stability}]
Note that the algorithm in Lemma~\ref{lma:apxpca} satisfies the $\AppxPCA_{\varepsilon, \varepsilon_{\pca}, \eta}$ definition. Thus, if $\varepsilon_{\pca} \le \poly(\varepsilon/d\kappa)$, by Theorem~4.1 of \cite{allen2016lazysvd}, Algorithm~\ref{alg:modified-lazy-svd} when run on the $d \times d$ matrix $\T{A}A$ outputs a matrix $V_k$ such that with probability $\ge 1 - \eta$,
$\opnorm{(I-\Proj_{\colspace(V_k)})\T{A}A(I-\Proj_{\colspace(V_k)})} \le \frac{1}{1-\varepsilon}\sigma_{k+1}(A)^2
$
and for all $p' \ge 1$,
\[
    \|{(I-\Proj_{\colspace(V_k)})\T{A}A(I-\Proj_{\colspace(V_k)})}\|_{S_{p'}} \le (1+O(\varepsilon))(\sum_{i=k+1}^d \sigma_i(A)^{2p'})^{1/p'}.
\]
Thus, we have $\opnorm{A(I-\Proj_{\colspace(V_k)})} \le (1+O(\varepsilon))\sigma_{k+1}(A)$ and 
using the fact that $\|\T{A}A\|_{S_p}^p = \|A\|_{S_{2p}}^{2p}$, for all $p \ge 2$,
$
    \|A(I-\Proj_{\colspace(V_k)})\|_{S_p} \le (1+ O(\varepsilon))\|A - A_k\|_p.
$
We additionally have $\kappa(V_k) \le 4$ from Lemma~\ref{lma:condition-number-bound}.

\subparagraph*{Runtime Analysis.} In each iteration of Algorithm~\ref{alg:modified-lazy-svd}, we require $O(\varepsilon^{-1/2}\poly(\log(d\kappa/\eta\varepsilon)))$ matrix vector products with the matrices $A$ and $\T{A}$. For iterations $s=1,\ldots,k$, we solve $O(\varepsilon^{-1/2}\poly(\log(d\kappa/\eta\varepsilon)))$ least squares problems on a fixed $d \times s$ matrix and different label vectors. Thus, the overall time complexity of the algorithm is
\begin{align*}
    O\left(\frac{\nnz(A)k}{\sqrt{\varepsilon}}\poly(\log (d\kappa/\eta\varepsilon)) + d\poly(k, \log(d\kappa/\eta\varepsilon))\right). &\qedhere
\end{align*}
\end{proof}
\section*{Acknowledgments} We would like to thank Cameron Musco, Christopher Musco, and Aleksandros Sobczyk for helpful discussions. We thank a Simons Investigator Award and NSF CCF-2335411 for partial support.

\bibliography{main}

\begin{thebibliography}{10}

\bibitem{allen2016lazysvd}
Zeyuan Allen-Zhu and Yuanzhi Li.
\newblock {LazySVD}: {E}ven faster {SVD} decomposition yet without agonizing pain.
\newblock {\em Advances in neural information processing systems}, 29, 2016.

\bibitem{bakshi2022low}
Ainesh Bakshi, Kenneth~L Clarkson, and David~P Woodruff.
\newblock Low-rank approximation with $1/\varepsilon^{1/3}$ matrix-vector products.
\newblock {\em STOC 2022. arXiv:2202.05120}, 2022.
\newblock \href {https://doi.org/10.48550/arXiv.2202.05120} {\path{doi:10.48550/arXiv.2202.05120}}.

\bibitem{banks2022pseudospectral}
Jess Banks, Jorge Garza-Vargas, Archit Kulkarni, and Nikhil Srivastava.
\newblock Pseudospectral shattering, the sign function, and diagonalization in nearly matrix multiplication time.
\newblock {\em Foundations of Computational Mathematics}, pages 1--89, 2023.
\newblock \href {https://doi.org/10.1007/s10208-022-09577-5} {\path{doi:10.1007/s10208-022-09577-5}}.

\bibitem{bhatia2002pinchings}
Rajendra Bhatia, William Kahan, and Ren-Cang Li.
\newblock Pinchings and norms of scaled triangular matrices.
\newblock {\em Linear and Multilinear Algebra}, 50(1):15--21, 2002.

\bibitem{cw-lra}
Kenneth~L. Clarkson and David~P. Woodruff.
\newblock Low-rank approximation and regression in input sparsity time.
\newblock {\em J. ACM}, 63(6):Art. 54, 45, 2017.
\newblock \href {https://doi.org/10.1145/3019134} {\path{doi:10.1145/3019134}}.

\bibitem{duan2022faster}
Ran Duan, Hongxun Wu, and Renfei Zhou.
\newblock Faster matrix multiplication via asymmetric hashing.
\newblock {\em FOCS 2023. arXiv:2210.10173}, 2022.
\newblock \href {https://doi.org/10.48550/arXiv.2210.10173} {\path{doi:10.48550/arXiv.2210.10173}}.

\bibitem{gall2018improved}
Fran{\c{c}}ois~Le Gall and Florent Urrutia.
\newblock Improved rectangular matrix multiplication using powers of the coppersmith-winograd tensor.
\newblock In {\em Proceedings of the Twenty-Ninth Annual ACM-SIAM Symposium on Discrete Algorithms}, pages 1029--1046. SIAM, 2018.
\newblock \href {https://doi.org/10.1137/1.9781611975031.67} {\path{doi:10.1137/1.9781611975031.67}}.

\bibitem{le2012faster}
Fran{\c{c}}ois Le~Gall.
\newblock Faster algorithms for rectangular matrix multiplication.
\newblock In {\em 2012 IEEE 53rd annual symposium on foundations of computer science}, pages 514--523. IEEE, 2012.
\newblock \href {https://doi.org/10.1109/FOCS.2012.80} {\path{doi:10.1109/FOCS.2012.80}}.

\bibitem{gall2024faster}
Fran{\c{c}}ois Le~Gall.
\newblock Faster rectangular matrix multiplication by combination loss analysis.
\newblock In {\em Proceedings of the 2024 Annual ACM-SIAM Symposium on Discrete Algorithms (SODA)}, pages 3765--3791. SIAM, 2024.
\newblock \href {https://doi.org/10.1137/1.9781611977912.133} {\path{doi:10.1137/1.9781611977912.133}}.

\bibitem{li2020input}
Yi~Li and David Woodruff.
\newblock Input-sparsity low rank approximation in schatten norm.
\newblock In {\em International Conference on Machine Learning}, pages 6001--6009. PMLR, 2020.
\newblock URL: \url{http://proceedings.mlr.press/v119/li20q.html}.

\bibitem{lotti1983asymptotic}
Grazia Lotti and Francesco Romani.
\newblock On the asymptotic complexity of rectangular matrix multiplication.
\newblock {\em Theoretical Computer Science}, 23(2):171--185, 1983.
\newblock \href {https://doi.org/10.1016/0304-3975(83)90054-3} {\path{doi:10.1016/0304-3975(83)90054-3}}.

\bibitem{musco2015randomized}
Cameron Musco and Christopher Musco.
\newblock Randomized block krylov methods for stronger and faster approximate singular value decomposition.
\newblock {\em Advances in neural information processing systems}, 28, 2015.
\newblock URL: \url{https://proceedings.neurips.cc/paper/2015/hash/1efa39bcaec6f3900149160693694536-Abstract.html}.

\bibitem{stability-of-lanczos}
Cameron Musco, Christopher Musco, and Aaron Sidford.
\newblock Stability of the {L}anczos method for matrix function approximation.
\newblock In {\em Proceedings of the {T}wenty-{N}inth {A}nnual {ACM}-{SIAM} {S}ymposium on {D}iscrete {A}lgorithms}, pages 1605--1624. SIAM, Philadelphia, PA, 2018.
\newblock \href {https://doi.org/10.1137/1.9781611975031.105} {\path{doi:10.1137/1.9781611975031.105}}.

\bibitem{sachdeva2014faster}
Sushant Sachdeva and Nisheeth~K Vishnoi.
\newblock Faster algorithms via approximation theory.
\newblock {\em Foundations and Trends{\textregistered} in Theoretical Computer Science}, 9(2):125--210, 2014.
\newblock \href {https://doi.org/10.1561/0400000065} {\path{doi:10.1561/0400000065}}.

\bibitem{sobczyk2023hermitian}
Aleksandros Sobczyk, Marko Mladenovi{\'c}, and Mathieu Luisier.
\newblock Hermitian pseudospectral shattering, cholesky, hermitian eigenvalues, and density functional theory in nearly matrix multiplication time.
\newblock {\em arXiv preprint arXiv:2311.10459}, 2023.
\newblock \href {https://doi.org/10.48550/arXiv.2311.10459} {\path{doi:10.48550/arXiv.2311.10459}}.

\bibitem{trefethen}
Lloyd~N. Trefethen and David Bau, III.
\newblock {\em Numerical linear algebra}.
\newblock Society for Industrial and Applied Mathematics (SIAM), Philadelphia, PA, 1997.
\newblock \href {https://doi.org/10.1137/1.9780898719574} {\path{doi:10.1137/1.9780898719574}}.

\bibitem{tropp2011improved}
Joel~A Tropp.
\newblock Improved analysis of the subsampled randomized hadamard transform.
\newblock {\em Advances in Adaptive Data Analysis}, 3(01n02):115--126, 2011.
\newblock \href {https://doi.org/10.1142/S1793536911000787} {\path{doi:10.1142/S1793536911000787}}.

\end{thebibliography}
\appendix
\onecolumn
\section{Time Complexity of SVD in the Real RAM model}\label{sec:svd-complexity}
Consider an $m \times n$ matrix $A$ where $m \le n$. We can compute the matrix $M = \T{A}A$ in time $O(nm^{\omega-1})$, where $\omega$ is the matrix multiplication exponent. Using the eigen decomposition algorithm of \cite{banks2022pseudospectral}, we can then compute a matrix $V$ and a diagonal matrix $D$ satisfying $\opnorm{M - VDV^{-1}} \le \opnorm{M}/\poly(n)$ in time $\tilde{O}(m^{\omega})$. Although the matrix $M$ is symmetric, the matrix $V$ output by the algorithm may not be orthonormal. In the real RAM model, we can perform the following changes to their algorithm:
\begin{enumerate}
    \item The Ginibre perturbation step is replaced with the symmetric Gaussian Orthogonal Ensemble perturbation as mentioned in Remark~6.1 of \cite{banks2022pseudospectral}.
    \item In step 5 of the algorithm \textsf{EIG} in \cite{banks2022pseudospectral}, after computing the orthonormal matrices $\tilde{Q}_{+}$ and $\tilde{Q}_{-}$, we modify $\tilde{Q}_{-}$ to an orthonormal basis of the column space of the matrix $(I-\tilde{Q}_{+}\T{\tilde{Q}_{+}})\tilde{Q}_{-}$. This ensures that $\T{\tilde{Q}_{+}}{\tilde{Q}_{-}} = 0$, while preserving the properties of $\tilde{Q}_{-}$ guaranteed by the algorithm \textsf{DEFLATE}. Note that the matrix $\tilde{Q}_{-}$ can be updated in time $\tilde{O}(n^{\omega})$ in the real RAM model.
\end{enumerate}
Thus, the algorithm of \cite{banks2022pseudospectral} can be used to compute an orthonormal matrix $V$ and a diagonal matrix $D$ such that $\opnorm{M - VD\T{V}} \le \opnorm{M}/\poly(n)$ in $\tilde{O}(nm^{\omega-1})$ time in the real RAM model.

If we define $U = AV \cdot D^{-1/2}$, then $U, D^{1/2}, \T{V}$ is an approximate singular value decomposition of the matrix $A$, where the matrices $U, V$ are orthonormal up to a $1/\poly(n)$ error. Since the matrix $AV$ can be computed in $\tilde{O}(nm^{\omega-1})$, we obtain that SVD of a well conditioned matrix can be computed in $\tilde{O}(nm^{\omega-1})$ time.
\section{Missing Proofs from Section~\ref{sec:main-algorithm}}
We use the following fact, called the pinching inequality, about $\lp{ \cdot }$ norms frequently throughout the proof:
\begin{theorem}[Pinching Inequality \cite{bhatia2002pinchings,bakshi2022low}]
    Let $A$ be an arbitrary matrix and let $P$ and $Q$ be arbitrary orthogonal projection matrices of appropriate dimensions. Then
    \begin{align*}
        \lp{PAQ}^p + \lp{(I-P)A(I-Q)}^p \le \lp{A}^p.
    \end{align*}
    \label{thm:pinching-inequality}
\end{theorem}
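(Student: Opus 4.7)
My plan is to prove the pinching inequality by first combining $PAQ$ and $(I-P)A(I-Q)$ into a single matrix whose Schatten-$p$ norm is tractable, and then bound this single matrix in terms of $\lp{A}$ via a unitary averaging trick.

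First I would observe that the two ``corner'' blocks $PAQ$ and $(I-P)A(I-Q)$ have orthogonal column spaces (column space of $P$ versus column space of $I-P$) and orthogonal row spaces (column space of $Q$ versus column space of $I-Q$). Writing the SVDs $PAQ = U_1 \Sigma_1 \T{V_1}$ and $(I-P)A(I-Q) = U_2 \Sigma_2 \T{V_2}$, the orthogonality of the spaces means that $[U_1\ U_2]$ and $[V_1\ V_2]$ each have orthonormal columns, so
\begin{align*}
    PAQ + (I-P)A(I-Q) = [U_1\ U_2]\begin{pmatrix}\Sigma_1 & 0 \\ 0 & \Sigma_2\end{pmatrix}[V_1\ V_2]^{\top}
\end{align*}
is essentially an SVD. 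Hence the singular values of the sum are the union of the singular values of the two corner blocks, which gives
\begin{align*}
    \lp{PAQ + (I-P)A(I-Q)}^p = \lp{PAQ}^p + \lp{(I-P)A(I-Q)}^p.
\end{align*}

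Next, I would introduce the self-adjoint involutions $U := 2P - I$ and $V := 2Q - I$. Since $P$ and $Q$ are orthogonal projections, $U^2 = I$ and $\T{U}=U$, so $U$ is unitary (orthogonal), and likewise for $V$. A direct expansion shows
\begin{align*}
    UAV = (2P-I)A(2Q-I) = 4PAQ - 2PA - 2AQ + A,
\end{align*}
and a short calculation gives
\begin{align*}
    \tfrac{1}{2}(A + UAV) = PAQ + (I-P)A(I-Q).
\end{align*}
By unitary invariance of the Schatten-$p$ norm, $\lp{UAV} = \lp{A}$, and the triangle inequality yields
\begin{align*}
    \lp{PAQ + (I-P)A(I-Q)} = \tfrac{1}{2}\lp{A + UAV} \le \tfrac{1}{2}(\lp{A} + \lp{UAV}) = \lp{A}.
\end{align*}

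Combining the two displays above gives $\lp{PAQ}^p + \lp{(I-P)A(I-Q)}^p \le \lp{A}^p$, as desired. There is no real obstacle in this proof beyond keeping the algebraic identity $\tfrac{1}{2}(A+UAV) = PAQ+(I-P)A(I-Q)$ straight; the key conceptual idea is that orthogonality of the two corner blocks lets us add their $p$-th powers of Schatten norms exactly, while the involution trick replaces the pinched sum by an average of two unitarily equivalent copies of $A$.
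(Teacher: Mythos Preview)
Your proof is correct. Both steps are sound: the orthogonality of the column and row spaces of $PAQ$ and $(I-P)A(I-Q)$ indeed makes the sum block-diagonal in the decomposition induced by $P$ and $Q$, so its singular values are the union of those of the two blocks; and the identity $\tfrac{1}{2}(A + (2P-I)A(2Q-I)) = PAQ + (I-P)A(I-Q)$ together with unitary invariance and the triangle inequality finishes the argument.

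There is nothing to compare against in the paper itself: Theorem~\ref{thm:pinching-inequality} is stated as a cited fact (from \cite{bhatia2002pinchings,bakshi2022low}) and is not proved in the text. Your argument is precisely the classical averaging/involution proof of the pinching inequality that underlies the Bhatia reference, so in that sense your approach coincides with the one the paper is implicitly invoking.
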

We also use the following simple lemma in our proof.
\begin{lemma}
    Let $a, b, c \ge 0$ with $b \ge c$ be such that $a^2 \ge b^2 - \gamma c^2$ for a constant $\gamma < 1/2$. Then for any $p \ge 1$, $a^p \ge b^p - \gamma pc^2 b^{p-2}$.
    \label{lma:inequality-2-to-p}
\end{lemma}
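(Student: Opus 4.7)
The plan is to reduce the claim to a one-variable Bernoulli-type inequality, after dividing through by $b^p$. Concretely, I would proceed as follows.

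First, I would handle the degenerate case $b = 0$: then $c = 0$ (since $0 \le c \le b$) and the hypothesis forces $a = 0$, so the conclusion $0 \ge 0$ is trivial. So assume $b > 0$. From the hypothesis together with $c \le b$ and $\gamma < 1/2$ we get
\[
a^2 \;\ge\; b^2 - \gamma c^2 \;\ge\; (1-\gamma)b^2 \;>\; 0,
\]
so $a > 0$ and it is legitimate to raise both sides of $a^2 \ge b^2 - \gamma c^2$ to the power $p/2 \ge 1/2$, using the monotonicity of $x \mapsto x^{p/2}$ on $[0,\infty)$. This yields $a^p \ge (b^2 - \gamma c^2)^{p/2}$.

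Next, I would normalize. Set $t \coloneqq \gamma c^2 / b^2$, so $t \in [0, \gamma] \subseteq [0, 1/2)$ because $c \le b$. Then $(b^2 - \gamma c^2)^{p/2} = b^p (1-t)^{p/2}$, and the target inequality $a^p \ge b^p - \gamma p c^2 b^{p-2}$ is equivalent to
\[
(1-t)^{p/2} \;\ge\; 1 - p t, \qquad t \in [0,1/2],\ p \ge 1. \tag{$\star$}
\]
So everything reduces to establishing $(\star)$.

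Finally, I would prove $(\star)$ by a short calculus argument, splitting on $p$. For $p \ge 2$, the map $t \mapsto (1-t)^{p/2}$ is convex on $[0,1]$, so it lies above its tangent at $t=0$: $(1-t)^{p/2} \ge 1 - (p/2)t \ge 1 - pt$, since $t \ge 0$ and $p/2 \le p$. For $1 \le p < 2$, convexity fails, so I would instead set $h(t) \coloneqq (1-t)^{p/2} - 1 + pt$ and show $h'(t) = p - (p/2)(1-t)^{p/2 - 1} \ge 0$ on $[0,1/2]$; this follows because on that interval $(1-t)^{p/2-1} \le (1/2)^{p/2-1} = 2^{1-p/2} \le \sqrt{2}$, so $h'(t) \ge p(1 - 1/\sqrt{2}) > 0$. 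Combined with $h(0) = 0$, this gives $h(t) \ge 0$, i.e.\ $(\star)$. Substituting back finishes the proof.

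The only mildly delicate step is $(\star)$ in the range $p \in [1,2)$, where straight convexity/Bernoulli does not give the right direction; the assumption $\gamma < 1/2$ is used exactly to keep $t$ bounded away from $1$ so that the derivative argument above closes. Everything else is monotonicity and a one-line normalization.
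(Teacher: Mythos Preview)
Your argument is correct. Both you and the paper reduce the claim to the same one-variable inequality $(1-t)^{p/2} \ge 1 - pt$ for $t = \gamma c^2/b^2 \in [0,1/2)$, so the overall structure matches. The difference is in how that inequality is established: the paper avoids your case split on $p$ by passing through the exponential, using $1 - t \ge e^{-2t}$ on $[0,1/2]$ to get $(1-t)^{p/2} \ge e^{-pt} \ge 1 - pt$ in one stroke for all $p \ge 1$. Your route is slightly more elementary (pure calculus, no exponential bounds) at the cost of handling $p \ge 2$ and $1 \le p < 2$ separately; the paper's route is shorter and makes it clearer that the constraint $\gamma < 1/2$ is used only once, to validate $1 - t \ge e^{-2t}$.
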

\begin{proof}
    From $a^2 \ge b^2 - \gamma c^2$, we get $a^p \ge b^p(1 - \gamma c^2/b^2)^{p/2}$. As $\gamma c^2/b^2 \le 1/2$, we have $1 - \gamma c^2/b^2 \ge \exp(-2\gamma c^2/b^2)$ which implies
    \begin{align*}
        a^p \ge b^p \exp(-2\gamma c^2/b^2)^{p/2} = b^p\exp(-\gamma pc^2/b^2) \ge b^p(1 - \gamma pc^2/b^2) = b^p - \gamma p c^2 b^{p-2}. &\qedhere
    \end{align*}
\end{proof}
To prove that the Algorithm~\ref{alg:schatten-norm-subspace-apx} outputs a $1+\varepsilon$ approximate Schatten-$p$ norm low rank approximation, we proceed by case analysis: In the first case, we assume that the matrix $A$ does not have any large singular values and in the second case, we assume that the matrix has large singular values.
\subsection{No Large Singular Values}
Assume that $\sigma_1(A) \le (1 + 1/p)\sigma_{k+1}$. Algorithm~\ref{alg:schatten-norm-subspace-apx} runs the Block Krylov Iteration Algorithm two times with the following parameters:
\begin{enumerate}
	\item Block Size $b = k$ for $O(q\polylog n)$ iterations. Let $T_1$ be the time complexity of the algorithm and $Z_1$ be the output.
	\item Block size $b = b' + k$ for $O(\sqrt{p}\log(n/\varepsilon))$ iterations where $b' = \ceil{(3/2)\max(1,k/q^2\varepsilon)}$. Let $T_2$ be the time complexity of the Block Krylov algorithm with these parameters and $Z_2$ be the output.
\end{enumerate}
From Theorem~\ref{thm:block-krylov}, with a large probability, for all $i=1,\ldots,k$,
\begin{align*}
	\sigma_i(\T{A}Z_1)^2 \ge \opnorm{\T{A}(Z_1)_{*i}}^2 \ge \sigma_i^2 - \frac{c}{q^2}\sigma_{k+1}^2
\end{align*}
for a small enough constant $c$ (the value of $c$ depends on the constant factor in the iteration count). By Lemma~\ref{lma:inequality-2-to-p}, the above inequality implies that
$
	\sigma_i^p(\T{A}Z_1) \ge \sigma_i^p - \frac{cp}{q^2}\sigma_i^{p-2}\sigma_{k+1}^2
$
and therefore $\lp{\T{A}Z_1}^p = \sum_{i=1}^k \sigma_i(\T{A}Z_1)^p \ge \lp{A_k}^p - (cp/q^2)\sum_{i=1}^k\sigma_i^{p-2}\sigma_{k+1}^2$. Using the assumption that $\sigma_i \le (1+1/p)\sigma_{k+1}$ for all $i = 1,\ldots,k$, we obtain that
$
	\lp{\T{A}Z_1}^p \ge \lp{A_k}^p - (ecpk/q^2)\sigma_{k+1}^p
$
as $(1 + 1/p)^{p-2} \le e$ for all $p \ge 1$. We pick $c$ such that $ec \le 1$ which implies $\lp{\T{A}Z_1}^p \ge \lp{A_k}^p - (pk/q^2)\sigma_{k+1}^p$. 

If $(k/q^2)\sigma_{k+1}^p \le \varepsilon\lp{A - A_k}^p$, then
$
	\lp{A}^p - \lp{\T{A}Z_1\T{Z_1}}^p \le (1+p\varepsilon)\lp{A - A_k}^p.
$
By the pinching inequality (Theorem~\ref{thm:pinching-inequality}), and noting that $Z_1\T{Z_1}AW_1\T{W_1} = Z_1\T{Z_1}A$, we then have
\begin{align*}
    \lp{A(I-W_1\T{W_1})}^p \le \lp{A}^p - \lp{Z_1\T{Z_1}A}^p \le (1+p\varepsilon)\lp{A-A_k}^p
\end{align*}
which implies $\lp{A(I-W_1\T{W_1})} \le (1+\varepsilon)\lp{A - A_k}$. Suppose $(k/q^2)\sigma_{k+1}^p > \varepsilon\lp{A-A_k}^p$. Note that this implies $k/q^2 > \varepsilon$. We then have
\begin{align*}
	\sigma_{k+1}^p \ge \frac{\varepsilon q^2}{k}\lp{A - A_k}^p \ge \sigma_{b' + k}^p b' \frac{\varepsilon q^2}{k}
\end{align*}
and therefore that $\sigma_{k+1} \ge \sigma_{b' + k}(\varepsilon q^2b'/k)^{1/p}$. Note that $b' = \ceil{(3/2)\max(1, k/q^2\varepsilon)}$ is picked such that 
$
	b' \varepsilon q^2/k \ge 3/2.
$
Therefore, $\sigma_{k+1} \ge (3/2)^{1/p}\sigma_{b'+ k}$. Now $\gap = \frac{\sigma_k}{\sigma_{b'+k+1}} - 1 \ge \frac{\sigma_{k+1}}{\sigma_{b'+k}}-1 \ge (3/2)^{1/p} - 1 \ge 1/(5p)$.  By the gap-dependent convergence guarantee in Theorem~\ref{thm:block-krylov}, the matrix $Z_2$ satisfies
$
	\opnorm{\T{A}(Z_2)_{*i}}^2 \ge \sigma_i^2 - \poly(\varepsilon/n)\sigma_{k+1}^2
$
for all $i = 1,\ldots,k$. By picking the degree of $\poly(\varepsilon/n)$ to be appropriately large, we obtain 
$\lp{A}^p - \lp{\T{A}Z_2\T{Z}_2}^p \le (1+p\varepsilon)\lp{A - A_k}^p
$
which then using the pinching inequality implies that $\|A(I-W_2\T{W_2})\|_p^p \le (1+p\varepsilon)\|A - A_k\|_p^p$. Now,
$
	T_1 = \tilde{O}(q T(n, k))
$
and
$
	T_2 = \tilde{O}(\sqrt{p} T(n, b' + k))
$
and the time complexity of Algorithm~\ref{alg:schatten-norm-subspace-apx} is $O(T_1+T_2)$. We now show that with $q$ and $b'$ picked in the algorithm, the time complexity is as stated in Theorem~\ref{thm:final-theorem-our-algorithm}.

\subsubsection{\texorpdfstring{$k \le \varepsilon n^{\alpha}$}{k less than eps time n-power-alpha}}
If $k \le \varepsilon n^{\alpha}$, we can set $q = \sqrt{p}$ and $b' = (3/2)n^{\alpha}$ to obtain that $b' \varepsilon q^2/k \ge 3/2$. We then have $T_1 = \tilde{O}(\sqrt{p}n^2)$ and $T_2 = \tilde{O}(\sqrt{p}n^2)$ and therefore the time complexity of the algorithm is $\tilde{O}(\sqrt{p}n^2)$.

\subsubsection{\texorpdfstring{$\varepsilon n^{\alpha} \le k \le n^{\alpha}$}{k < n-power-alpha}}
Setting $q = \max(\sqrt{p}, \sqrt{\frac{3k}{2\varepsilon b'}})$ suffices. Let $b' = n^{\theta}$ for some $\theta \ge \alpha$. We then have $T(n, k + b') = O(n^{2-\alpha\beta}n^{\theta \beta})$ and $T_2 = \tilde{O}(\sqrt{p} n^{2 - \alpha\beta}n^{\theta \beta})$. Set $\theta$ such that
$
    n^{\theta} = \max(n^{\alpha}, ({n^{2\alpha\beta}k}/{p\varepsilon})^{\frac{1}{1+2\beta}}).
$
Then $q \le O(\max(\sqrt{p}, p^{\frac{1}{2(1+2\beta)}}(k/n^{\alpha}\varepsilon)^{\frac{\beta}{1+2\beta}}))$ and $T_1 = \tilde{O}(\sqrt{p}n^2 + p^{\frac{1}{2(1+2\beta)}}(k/n^{\alpha}\varepsilon)^{\frac{\beta}{1+2\beta}} n^2)$ and \[T_2 = \tilde{O}(\sqrt{p}n^{2-\alpha\beta}n^{\theta\beta}) = \tilde{O}(\sqrt{p}n^2 + p^{\frac{1}{2(1+2\beta)}}n^2 ({k}/{(n^{\alpha}\varepsilon)})^{\frac{\beta}{1+2\beta}})\] showing that the overall complexity of the algorithm is
$
    \tilde{O}(\max(\sqrt{p}n^2, p^{\frac{1}{2(1+2\beta)}}n^2 \left({k}/{(n^{\alpha}\varepsilon)}\right)^{\frac{\beta}{1+2\beta}})).
$
\subsubsection{\texorpdfstring{$k \ge n^{\alpha}$}{k > n-power-alpha}}
For any $r \ge n^{\alpha}$, $T(n, r) = n^{2 - \alpha\beta} r^{\beta}$. Note $b' = (3/2)\max(1, k/q^2\varepsilon)$ and that
\begin{align*}
	T(n,k) = n^{2-\alpha\beta}k^{\beta}\, \text{and}\, T(n, b' + k) = n^{2 - \alpha\beta}(b'+k)^{\beta}.
\end{align*}
Recall $\beta = \frac{\omega-2}{1-\alpha} > 0$ and we have $q = p^{1/2(1+2\beta)}/\varepsilon^{\beta/(1+2\beta)}$ in Algorithm~\ref{alg:schatten-norm-subspace-apx}. Note that $q \le 1/\sqrt{\varepsilon}$ for all $p \le 1/\varepsilon$. Hence, for all $p \le 1/\varepsilon$, $q \le 1/\sqrt{\varepsilon}$ and $k/q^2\varepsilon \ge k$ which implies that $b' = (3/2)k/q^2\varepsilon \ge k$. Now,
\begin{align*}
	T_1 = \tilde{O}\left(\frac{p^{\frac{1}{2(1+2\beta)}}}{\varepsilon^{\frac{\beta}{1+2\beta}}}n^{2 - \alpha \beta}k^{\beta}\right)
\quad 
\text{and}
\quad
	T_2 = \tilde{O}\left(\sqrt{p}n^{2-\alpha\beta}(b'+k)^{\beta}\right) = \tilde{O}\left(\frac{p^{\frac{1}{2(1+2\beta)}}}{\varepsilon^{\frac{\beta}{1+2\beta}}} n^{2-\alpha\beta}k^{\beta}\right).
\end{align*}
Thus, $T_1+T_2 = \tilde{O}((p^{1/2}\varepsilon^{-\beta})^{1/(1+2\beta)} n^{2-\alpha\beta}k^{\beta})$.
\subsection{Proof with Large Singular Values}
In this section, we prove Theorem~\ref{thm:final-theorem-our-algorithm} when the matrix $A$ has large singular values.

Assume that $\sigma_1 \ge (1+1/5p)\sigma_{k+1}$ (Note that the case analysis we do is not disjoint!). Let $\ell$ be the largest integer such that $\sigma_{\ell} \ge (1+1/5p)\sigma_{k+1}$ and $\sigma_{\ell} \ge (1+\varepsilon/n)\sigma_{\ell+1}$. If such an $\ell$ does not exist, then $\sigma_1 \le (1+\varepsilon/n)^k(1+1/5p)\sigma_{k+1} \le (1 + 2k\varepsilon/n + 1/5p)\sigma_{k+1}$. Assuming $p \le \log n/\varepsilon$, we have $2k\varepsilon/n \le 1/2p$ as long as $k \le n/4\log n$ which then implies $\sigma_1 \le (1 + 1/p)\sigma_{k+1}$. The correctness of the algorithm follows from the previous section.  Thus we assume that there exists an $\ell \in [k]$ defined as above. We have by definition of $\ell$ that
$
	\sigma_{\ell+1} \le (1+\varepsilon/n)^{k}(1+1/5p)\sigma_{k+1} \le (1 + 1/5p + 2k\varepsilon/n)\sigma_{k+1} \le (1+1/p)\sigma_{k+1}.
$

Now define $\gap_{\ell} = \frac{\sigma_{\ell}}{\sigma_{k+1}} - 1$ and by definition of $\ell$, we have $\gap_{\ell} \ge 1/5p$. As $q \ge \sqrt{p}$ in Algorithm~\ref{alg:schatten-norm-subspace-apx}, by Theorem~\ref{thm:block-krylov}, the matrix $Z_1$ satisfies
$
	\sigma_i(\T{Z_1}A)^2 \ge \opnorm{\T{A}(Z_1)_{*i}}^2 \ge \sigma_i^2 - \poly(\varepsilon/n)\sigma_{\ell+1}^2 \ge \sigma_i^2 - (1+1/p)^2{\poly(\varepsilon/n)\sigma_{k+1}^2}
$
for all $i \in [\ell]$. For $i$ such that $\ell+1 \le i \le k$, we have
\begin{align}
	\sigma_i(\T{Z_1}A)^2 \ge \opnorm{\T{A}(Z_1)_{*i}}^2 \ge \sigma_i^2 - \frac{1}{q^2}\sigma_{k+1}^2.
 \label{eqn:guarantee-for-small-svs}
\end{align}
Let $W_1$ be an orthonormal basis for the column space of $\T{A}Z_1$. We will now bound $\lp{A(I-W_1\T{W_1})}$. By the triangle inequality,
$
	\lp{A(I-W_1\T{W_1})} \le \lp{A_{\ell}(I-W_1\T{W_1})} + \lp{(A-A_{\ell})(I-W_1\T{W_1})}.
$
We now bound both the terms separately. As $\text{rank}(A_{\ell}(I-W_1\T{W_1})) \le \ell \le k$, we have $\lp{A_{\ell}(I-W_1\T{W_1})} \le \sqrt{k}\frnorm{A_{\ell}(I-W_1\T{W_1})}$ for any $p \in [1,\infty)$. We now use the following lemma which we prove in the appendix.
\begin{lemma}
	Let $Z$ be a rank-$k$ orthonormal matrix satisfying
$
		\opnorm{\T{A}(Z)_{*i}}^2 \ge \sigma_i^2 - \poly(\varepsilon/n)\sigma_{k+1}^2
$
	for all $i=1,\ldots, \ell$ for some $\ell \le k$. Let $W$ be an orthonormal basis for the column space of $\T{A}Z$. Then
	\begin{align*}
		\frnorm{A_{\ell}(I-W\T{W})}^2 \le \frac{\ell \poly(\varepsilon/n)\sigma_{k+1}^2}{1 - \sigma_{\ell+1}^2/\sigma_{\ell}^2}.
	\end{align*}
 \label{lma:error-on-top}
\end{lemma}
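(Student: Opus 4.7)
The plan is to bound $\frnorm{A_\ell(I-W\T{W})}^2 = \sum_{i\leq\ell}\sigma_i^2\opnorm{(I-W\T{W})v_i}^2$ by showing that each of the top $\ell$ right singular vectors of $A$ is approximately captured by $\colspace(\T{A}Z)$.

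First I would replace $W\T{W}$ by the projection $\tilde P$ onto $\text{span}(\T{A}Z_{*1},\ldots,\T{A}Z_{*\ell})$; since this subspace is contained in $\colspace(W)$, we have $\frnorm{A_\ell(I-W\T{W})}^2 \leq \frnorm{A_\ell(I-\tilde P)}^2$, and the rest of the analysis only needs the smaller rank-$\ell$ projection. I would then change to the $V$-basis of $A = U\Sigma\T{V}$: setting $\tilde Z \coloneqq \T{U}[Z_{*1}\,\cdots\,Z_{*\ell}]\in\R^{n\times\ell}$, the matrix $\tilde Z$ has orthonormal columns and $\T{A}Z_{*i} = V\Sigma\tilde Z_{*i}$. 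The quantity to bound becomes $\sum_{i\leq\ell}\sigma_i^2\opnorm{(I-\Pi)e_i}^2$, where $\Pi$ is the orthogonal projection in $\R^n$ onto $\colspace(\Sigma\tilde Z)$. Split $\tilde Z$ row-wise into $\tilde Z_1\in\R^{\ell\times\ell}$ and $\tilde Z_2\in\R^{(n-\ell)\times\ell}$, and $\Sigma$ into $\Sigma_1 = \mathrm{diag}(\sigma_1,\ldots,\sigma_\ell)$ and $\Sigma_2 = \mathrm{diag}(\sigma_{\ell+1},\ldots,\sigma_n)$.

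The main technical step is showing $\frnorm{\tilde Z_2}^2$ is small. In the $V$-basis the per-vector guarantee reads $\sum_j\sigma_j^2\tilde Z_{ji}^2 \geq \sigma_i^2 - \delta$ for $i\leq\ell$, with $\delta \coloneqq \poly(\varepsilon/n)\sigma_{k+1}^2$. Summing over $i\leq\ell$ and setting $c_j \coloneqq \sum_{i\leq\ell}\tilde Z_{ji}^2$, column-orthonormality of $\tilde Z$ gives $\sum_j c_j = \ell$ and $c_j \leq 1$; combined with $\sum_{j\leq\ell}\sigma_j^2 c_j \leq \sum_{j\leq\ell}\sigma_j^2$ (via $c_j\leq 1$) and $\sum_{j>\ell}\sigma_j^2 c_j \leq \sigma_{\ell+1}^2 \sum_{j>\ell}c_j$, together with $\sigma_j^2 \geq \sigma_\ell^2$ for $j\leq\ell$, a direct rearrangement yields $(\sigma_\ell^2 - \sigma_{\ell+1}^2)\sum_{j>\ell}c_j \leq \ell\delta$, so $\frnorm{\tilde Z_2}^2 \leq \ell\delta/(\sigma_\ell^2 - \sigma_{\ell+1}^2)$. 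The identity $\T{\tilde Z_1}\tilde Z_1 = I_\ell - \T{\tilde Z_2}\tilde Z_2$ then shows $\tilde Z_1$ is invertible with $\opnorm{\tilde Z_1^{-1}}^2 \leq 2$, provided $\delta$ is a sufficiently large power of $\varepsilon/n$ that the above bound is $\leq 1/2$.

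To finish, for each $i\leq\ell$ set $\gamma_i \coloneqq \tilde Z_1^{-1}\Sigma_1^{-1} e_i$ (with $e_i\in\R^\ell$); then $\Sigma\tilde Z\gamma_i$ matches $e_i\in\R^n$ exactly on its first $\ell$ coordinates and has residual $\Sigma_2\tilde Z_2\gamma_i$ on the remaining ones, giving $\opnorm{(I-\Pi)e_i} \leq \opnorm{\Sigma_2\tilde Z_2\gamma_i}$. Pulling $\sigma_i$ inside via $\sigma_i\gamma_i = \tilde Z_1^{-1}e_i$ and summing,
\[
\sum_{i\leq\ell}\sigma_i^2\opnorm{(I-\Pi)e_i}^2 \leq \frnorm{\Sigma_2\tilde Z_2\tilde Z_1^{-1}}^2 \leq \sigma_{\ell+1}^2\,\frnorm{\tilde Z_2}^2\,\opnorm{\tilde Z_1^{-1}}^2 \leq \frac{2\ell\delta\sigma_{\ell+1}^2}{\sigma_\ell^2 - \sigma_{\ell+1}^2},
\]
which after dividing numerator and denominator by $\sigma_\ell^2$ and absorbing the constant into $\poly$ gives the lemma's bound $\ell\cdot\poly(\varepsilon/n)\sigma_{k+1}^2/(1 - \sigma_{\ell+1}^2/\sigma_\ell^2)$. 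The main obstacle is the algebraic rearrangement producing $\frnorm{\tilde Z_2}^2 \leq \ell\delta/(\sigma_\ell^2 - \sigma_{\ell+1}^2)$: extracting this clean gap-dependent bound from the per-vector Rayleigh-quotient guarantees together with the exact orthonormality of $\tilde Z$ is the key structural step, and the rest is routine operator-norm bookkeeping.
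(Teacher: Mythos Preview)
Your argument is correct and reaches the same bound, but it is organized differently from the paper's proof. Both proofs begin with the same reduction to the rank-$\ell$ projection onto $\colspace(\T{A}Z_{*1:\ell})$. After that, the paper stays at the level of Frobenius-norm identities: it writes $\frnorm{A\widetilde W}^2 = \frnorm{A_\ell\widetilde W}^2 + \frnorm{(A-A_\ell)\widetilde W}^2$, bounds the second term by $\sigma_{\ell+1}^2(\ell - \frnorm{\T{V_\ell}\widetilde W}^2)$, uses the summed hypothesis $\frnorm{A\widetilde W}^2 \ge \frnorm{A_\ell}^2 - \ell\delta$, and separately shows $\frnorm{A_\ell\widetilde W}^2 \le \frnorm{A_\ell}^2 - \sigma_\ell^2(\ell - \frnorm{\T{V_\ell}\widetilde W}^2)$; combining these two inequalities isolates $\frnorm{A_\ell(I-\widetilde W\T{\widetilde W})}^2$ directly. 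Your route is instead constructive in the $V$-basis: you first bound the ``tail mass'' $\frnorm{\tilde Z_2}^2$ via the rearrangement $(\sigma_\ell^2 - \sigma_{\ell+1}^2)\sum_{j>\ell}c_j \le \ell\delta$, then use well-conditioning of $\tilde Z_1$ to build an explicit preimage $\gamma_i$ for each $e_i$ and bound the residual. The paper's version is slightly slicker in that it never needs $\tilde Z_1$ to be invertible, so the lemma holds verbatim without your side assumption that $\ell\delta/(\sigma_\ell^2-\sigma_{\ell+1}^2) \le 1/2$; your version in exchange yields a marginally sharper numerator (an extra factor $\sigma_{\ell+1}^2/\sigma_\ell^2 \le 1$) and makes the dependence on $\frnorm{\tilde Z_2}^2$ transparent. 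In the paper's intended application one has $1 - \sigma_{\ell+1}^2/\sigma_\ell^2 \ge \varepsilon/n$, so your side assumption is harmless once the degree of $\poly(\varepsilon/n)$ is chosen large enough, but it is worth noting that the paper's argument avoids it entirely.
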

\begin{proof}
	Let $\widetilde{W}$ be a basis for the column space of $\T{A}Z_{*1:\ell}$. Clearly, $\text{span}(\widetilde{W}) \subseteq \text{span}(W)$ which then implies
	\begin{align*}
		\frnorm{A_{\ell}(I-W\T{W})} \le \frnorm{A_{\ell}(I-\widetilde{W}\T{\widetilde{W}})}.
	\end{align*}
	In the rest of the proof, we bound $\frnorm{A_{\ell}(I-\widetilde{W}\T{\widetilde{W}})}$. First, we have
\begin{align*}
	\frnorm{A_{\ell}\widetilde{W}\T{\widetilde{W}}}^2 = \frnorm{\Sigma_{\ell}\T{V_{\ell}}\widetilde{W}\T{\widetilde{W}}}^2 &= \frnorm{\Sigma_{\ell} (\T{V_{\ell}}\widetilde{W})}^2\\
	&= \sum_{i=1}^{\ell} \sigma_i^2 \opnorm{\T{(V_{\ell}}\widetilde{W})_{i*}}^2\\
	&= \sum_{i=1}^{\ell}\sigma_i^2 - \sum_{i=1}^{\ell}\sigma_i^2(1 - \opnorm{\T{(V_{\ell}}\widetilde{W})_{i*}}^2)\\
	&\le \frnorm{A_{\ell}}^2 - \sigma_{\ell}^2(\ell - \frnorm{\T{V_{\ell}}\widetilde{W}}^2)
	\end{align*}
	which then implies $\ell - \frnorm{\T{V_{\ell}}\widetilde{W}}^2 \le (\frnorm{A_{\ell}}^2 - \frnorm{A_{\ell}\widetilde{W}\T{\widetilde{W}}}^2)/\sigma_{\ell}^2$.
Now, we write $\frnorm{A\widetilde{W}\T{\widetilde{W}}}^2 = \frnorm{A_{\ell}\widetilde{W}\T{\widetilde{W}}}^2 + \frnorm{(A-A_{\ell})\widetilde{W}\T{\widetilde{W}}}^2$ using the Pythagorean theorem. We can further write $\frnorm{(A-A_{\ell})\widetilde{W}\T{\widetilde{W}}}^2 = \frnorm{\Sigma_{\setminus \ell}\T{V_{\setminus \ell}}\widetilde{W}}^2$. Since $\ell = \frnorm{\widetilde{W}}^2 = \frnorm{\T{V_{\ell}}\widetilde{W}}^2 + \frnorm{\T{V_{\setminus \ell}}\widetilde{W}}^2$, we get $\frnorm{\T{V_{\setminus \ell}}\widetilde{W}}^2 = (\ell - \frnorm{\T{V_{\ell}}\widetilde{W}}^2)$ and therefore that
\begin{align*}
	\frnorm{(A-A_{\ell})\widetilde{W}\T{\widetilde{W}}}^2 = \frnorm{\Sigma_{\setminus \ell}\T{V_{\setminus \ell}}\widetilde{W}}^2 \le \sigma_{\ell+1}^2(\ell - \frnorm{\T{V_{\ell}}\widetilde{W}}^2).
\end{align*}
Overall,
$
	\frnorm{A\widetilde{W}\T{\widetilde{W}}}^2 \le \frnorm{A_{\ell}\widetilde{W}\T{\widetilde{W}}}^2 + ({\sigma_{\ell+1}^2}/{\sigma_{\ell}^2})(\frnorm{A_{\ell}}^2 - \frnorm{A_{\ell}\widetilde{W}\T{\widetilde{W}}}^2).$
We now use the fact that $\frnorm{A\widetilde{W}}^2 \ge \frnorm{\T{A}Z_{*1:\ell}}^2 \ge \frnorm{A_{\ell}}^2 - \ell \poly(\varepsilon/n)\sigma_{k+1}^2$ and obtain,
\begin{align*}
	\frnorm{A_{\ell}}^2 - \ell \poly(\varepsilon/n)\sigma_{k+1}^2 \le \frnorm{A_{\ell}\widetilde{W}\T{\widetilde{W}}}^2 + \frac{\sigma_{\ell+1}^2}{\sigma_{\ell}^2}(\frnorm{A_{\ell}}^2 - \frnorm{A_{\ell}\widetilde{W}\T{\widetilde{W}}}^2).
\end{align*}
Rearranging the inequality and using the  Pythagorean theorem,
\begin{align*}
	\frnorm{A_{\ell}(I-W\T{W})}^2 \le \frnorm{A_{\ell}(I-\widetilde{W}\T{\widetilde{W}})}^2 \le \frac{\ell \poly(\varepsilon/n)\sigma_{k+1}^2}{1 - \sigma_{\ell+1}^2/\sigma_{\ell}^2}.&\qedhere
\end{align*}
\end{proof}
Since $\sigma_{\ell} \ge (1+\varepsilon/n)\sigma_{\ell+1}$, we get $1 - \sigma_{\ell+1}^2/\sigma_{\ell}^2 \ge 1 - 1/(1+\varepsilon/n)^2 \ge \varepsilon/n$. Now using the above lemma,
$
	\frnorm{A_{\ell}(I-W_1\T{W_1})}^2 \le {\ell \poly(\varepsilon/n)\sigma_{k+1}^2}/{(\varepsilon/n)} \le \poly(\varepsilon/n)\sigma_{k+1}^2.
$
Hence,
$
	\lp{A(I-W_1\T{W_1})} \le \poly(\varepsilon/n)\sigma_{k+1} + \lp{(A-A_{\ell})(I-W_1\T{W_1})}.
$
Now, to bound the second term, we again use the triangle inequality to obtain
$
	\lp{(A-A_{\ell})(I-W_1\T{W_1})} \le \lp{Z_1\T{Z_1}(A-A_{\ell})(I-W_1\T{W_1})} + \lp{(I-Z_1\T{Z_1})(A-A_{\ell})(I-W_1\T{W_1})}.
$
Now, $Z_1\T{Z_1}A(I-W_1\T{W_1}) = 0$ since the columns of $W_1$ are a basis for the rows of $\T{Z_1}A$. Hence, 
\begin{align*}
\lp{Z_1\T{Z_1}(A-A_{\ell})(I-W_1\T{W_1})} &= \lp{Z_1\T{Z_1}A_{\ell}(I-W_1\T{W_1})} \le \lp{A_{\ell}(I-W_1\T{W_1})}\\
&\le \sqrt{k}\frnorm{A_{\ell}(I-W_1\T{W_1})} \le \poly(\varepsilon/n)\sigma_{k+1}.
\end{align*}
Thus, it only remains to bound $\lp{(I-Z_1\T{Z_1})(A-A_{\ell})(I-W_1\T{W_1})}$. Using the pinching inequality,
$
	\lp{(I-Z_1\T{Z_1})(A-A_{\ell})(I-W_1\T{W_1})}^p \le \lp{A-A_{\ell}}^p - \lp{Z_1\T{Z_1}(A-A_{\ell}) W_1\T{W_1}}^p.
$
By definition of the Schatten-$p$ norm, $\lp{A-A_{\ell}}^p = \sum_{i=l+1}^k \sigma_i(A)^p + \sum_{i=k+1}^n \sigma_i(A)^p$ and $\lp{Z_1\T{Z_1}(A-A_{\ell})W_1\T{W_1}}^p = \sum_{i=1}^k \sigma_i(\T{Z_1}(A-A_{\ell})W_1)^p$ where we used the fact that $\T{Z_1}(A-A_{\ell})W_1$ is a $k \times k$ matrix. Hence,
\begin{align*}
	\lp{A-A_{\ell}}^p - \lp{Z_1\T{Z_1}(A-A_{\ell})W_1\T{W_1}}^p &= \lp{A-A_k}^p + \sum_{i=\ell+1}^k \sigma_i(A)^p - \sum_{i=1}^{k}\sigma_i(\T{Z_1}(A-A_{\ell})W_1)^p.
\end{align*}
By Weyl's inequality, $\sigma_{i}(\T{Z_1}(A-A_{\ell})W_1) \ge \sigma_{i+\ell}(\T{Z_1}AW_1) - \sigma_{\ell+1}(\T{Z_1}A_{\ell}W_1)$. Now, using the fact that the matrix $\T{Z_1}A_{\ell}W_1$ has rank at most $\ell$, we obtain $\sigma_i(\T{Z_1}(A-A_{\ell})W_1) \ge \sigma_{i+\ell}(\T{Z_1}AW_1)$. Thus,
\begin{align*}
	\sum_{i=\ell+1}^k \sigma_i(A)^p - \sum_{i=1}^k \sigma_i(\T{Z}(A-A_{\ell})W_1)^p &\le \sum_{i={\ell+1}}^k \sigma_i(A)^p - \sigma_{i}(\T{Z_1}AW_1)^p\\
 &=\sum_{i={\ell+1}}^k \sigma_i(A)^p - \sigma_{i}(\T{Z_1}A)^p.
\end{align*}
We then have from Lemma~\ref{lma:inequality-2-to-p} and \eqref{eqn:guarantee-for-small-svs} that
$
	\sigma_{i}^p(\T{Z_1}A) \ge \sigma_i^p - \frac{cp}{q^2}\sigma_i^{p-2}\sigma_{k+1}^2
$
and therefore, $\sum_{i=\ell+1}^k \sigma_i(A)^p - \sigma_i(\T{Z_1}A)^p \le (cp/q^2)\sigma_{k+1}^2\sum_{i=\ell+1}^k \sigma_i^{p-2} \le (ecp/q^2)k\sigma_{k+1}^p$ using the fact that $\sigma_{\ell+1} \le (1 + 1/p)\sigma_{k+1}$. The constant factors in the algorithm are set such that $ec \le 1$ and therefore, we obtain overall that
\begin{align*}
	\lp{A-A_{\ell}}^p - \lp{Z_1\T{Z_1}(A-A_{\ell})W_1\T{W_1}}^p \le \lp{A-A_k}^p + (kp/q^2)\sigma_{k+1}^p.
\end{align*}
Now, just as in the proof in the case of ``no small singular values'', if $(k/q^2)\sigma_{k+1}^p \le \varepsilon\lp{A-A_k}^p$, we get
\begin{align*}
\lp{A(I-W_1\T{W_1})} &\le \poly(\varepsilon/n)\sigma_{k+1} + \poly(\varepsilon/n)\sigma_{k+1} + ((1+p\varepsilon)\lp{A-A_k}^{p})^{1/p}\\
&\le (1 + 2\varepsilon)\lp{A-A_k}.
\end{align*}
Otherwise, we have $(k/q^2)\sigma_{k+1}^p \ge \varepsilon\lp{A-A_k}^p \ge \varepsilon b' (\sigma_{k+b'})^p$.  As long as $b'\varepsilon q^2/k \ge 3/2$, we have $\gap = \sigma_{k}/\sigma_{k+b'+1} - 1 \ge 1/5p$. Again, as the Block Krylov algorithm is run with block size $k+b'$ for at least $\Omega(\sqrt{p}\log(n/\varepsilon))$ iterations, we obtain from Theorem~\ref{thm:block-krylov} that for all $i \in [k]$,
$
    \sigma_i(\T{Z_2}A)^2 \ge \opnorm{\T{A}(Z_2)_{*i}}^2 \ge \sigma_i^2 - \poly(\varepsilon/n)\sigma_{k+1}^2.
$
Using a proof similar to above, we get $\lp{A(I-W_2\T{W_2})} \le \poly(\varepsilon/n)\sigma_{k+1} + \lp{(I-Z_2\T{Z_2})(A-A_{\ell})(I-W_2\T{W_2})}$. Again, we have
\begin{align*}
    \lp{(I-Z_2\T{Z_2})(A-A_{\ell})(I-W_2\T{W_2})}^p &\le \lp{A-A_{\ell}}^p - \lp{Z_2\T{Z_2}(A-A_{\ell})W_2\T{W_2}}^p\\
    &\le \lp{A-A_k}^p + \sum_{i=\ell+1}^k \sigma_i(A)^p -\sigma_i(\T{Z_2}A)^p.
\end{align*}
As we have $\sigma_i(\T{Z_2}A)^2 \ge \sigma_i^2 - \poly(\varepsilon/n)\sigma_{k+1}^2$, we get $\sigma_i(\T{Z_2}A)^p \ge \sigma_{i}^p - p \cdot \poly(\varepsilon/n)\sigma_{i}^{p-2}\sigma_{k+1}^2$ from Lemma~\ref{lma:inequality-2-to-p}. Using the fact that for $i \ge \ell+1$, $\sigma_i \le (1+1/p)\sigma_{k+1}$, we get $\sum_{i=\ell+1}^k \sigma_i(A)^p - \sigma_i(\T{Z_2}A)^p \le \poly(\varepsilon/n)\sigma_{k+1}^p$. Thus,
$
     \lp{(I-Z_2\T{Z_2})(A-A_{\ell})(I-W_2\T{W_2})} \le (1+\poly(\varepsilon/n))^{1/p}\lp{A-A_k}
$
which overall implies that
$
    \lp{A(I-W_2\T{W_2})} \le (1 + \poly(\varepsilon/n))\lp{A-A_k}.
$
\subsection{Choosing between \texorpdfstring{$W_1$}{W1} and \texorpdfstring{$W_2$}{W2}}\label{subsec:picking-w1-w2}
Algorithm~\ref{alg:schatten-norm-subspace-apx} computes two rank-$k$ orthonormal matrices $W_1$ and $W_2$. Above, we proved that with high probability, one of the two matrices $W_1$ and $W_2$ is a  $(1+O(\varepsilon))$-approximation to the Schatten-$p$ low rank approximation problem. Computing the Schatten norms of $A(I-W_1\T{W_1})$ and $A(I-W_2\T{W_2})$ up to the desired accuracy is too slow. Thus, we need an alternate way to decide which matrix to output as the solution.

Note the Block Krylov iteration algorithm with block size $b'+k$ and the rank parameter $b'+k$ for $O(\sqrt{p}\log(n))$ iterations outputs estimates $\hat{\sigma}_{k}$ and $\hat{\sigma}_{b'+k}$ satisfying with probability $\ge 99/100$,
$
    \hat{\sigma}_i = (1 \pm 1/10p){\sigma}_i
$
for $i \in \set{k,b'+k}$. We do not have to spend more computation as we already compute the SVD of the $n \times O(\sqrt{p}(b'+k)\log(n/\varepsilon))$ matrix in the Block Krylov iteration algorithm when run with block size $b'+k$. 

Suppose $\hat{\sigma}_k \ge (1+1/2p)\hat{\sigma}_{b'+k}$. Conditioned on the above event about the accuracy of $\hat{\sigma}_i$, we get
\begin{align*}
    \sigma_k \ge \frac{\hat{\sigma}_k}{1+1/10p} \ge \frac{1+1/2p}{1+1/10p}\hat{\sigma}_{b'+k} \ge \frac{(1+1/2p)(1-1/10p)}{1+1/10p}\sigma_{b'+k} \ge (1+1/5p)\sigma_{b'+k}.
\end{align*}
Thus, $\gap = \sigma_{k}/\sigma_{b'+k+1} - 1\ge 1/5p$ which implies that $W_2$ is a $(1+\varepsilon)$-approximation from the analysis in previous sections. If $\hat{\sigma}_{k} < (1+1/2p)\hat{\sigma}_{b'+k}$, then
\begin{align*}
    \sigma_{b'+k} \ge \frac{\hat{\sigma}_{b'+k}}{1 + 1/10p} \ge \frac{\hat{\sigma}_k}{(1+1/10p)(1+1/2p)} \ge \frac{(1-1/10p)}{(1+1/10p)(1+1/2p)}\sigma_k \ge (1 - 7/10p)\sigma_k.
\end{align*}
which implies $\sigma_{b'+k} \ge (1-7/10p)\sigma_k$. We now have
$
    \lp{A-A_k}^p \ge b'\sigma_{k+b'}^p \ge b'(1-7/10p)^p\sigma_{k}^p \ge b'(e^{-7/5p})^p\sigma_k^p \ge b'e^{-7/5}\sigma_k^p. 
$
Hence, $\sigma_{k+1}^p \le e^{7/5}\lp{A-A_k}^p/b' \le (2\varepsilon q^2/3k) e^{7/5}\lp{A-A_k}^p$ using $b'\varepsilon q^2/k \ge 3/2$. 

In the case of ``no large singular values'', we have $\lp{A(I-W_1\T{W_1})}^p \le (1+3p\varepsilon)\lp{A-A_k}^p$ which then implies $\lp{A(I-W_1\T{W_1})} \le (1+O(\varepsilon))\lp{A-A_k}$. In the case of ``large singular values'', we have $\lp{A(I-W_1\T{W_1})} \le \poly(\varepsilon/n)\sigma_{k+1} + ((1+3p\varepsilon)\lp{A-A_k}^p)^{1/p} \le (1+4\varepsilon)\lp{A-A_k}$. 

Thus, an algorithm which outputs $W_2$ when $\hat{\sigma}_k \ge (1+1/2p)\hat{\sigma}_{b'+k}$ and $W_1$ otherwise is a $(1 + O(\varepsilon))$-approximation algorithm to the Schatten-$p$  norm low rank approximation.
\section{An Experiment}
We consider multiplying an $n \times n$ matrix with an $n \times d$ matrix while varying $d$. We set $n = 10{,}000$ and vary $d$ to take values in the interval $[10, 100]$. If $t_d$ is the median amount of time (over 5 repetitions) to compute the product of an $n \times n$ matrix with an $n \times d$ matrix, we obtain a color map~(Figure~\ref{fig:color-map}) of the values $(j/i)/(t_j/t_i)$ for $j \ge i$. If $(j/i)/(t_j/t_i)$ is large then $t_j$ is much smaller than $t_i (j/i)$ which is what we would expect if the matrix-multiplication time scales linearly with the dimension.
\begin{figure}
    \centering
    \includegraphics[width=\linewidth]{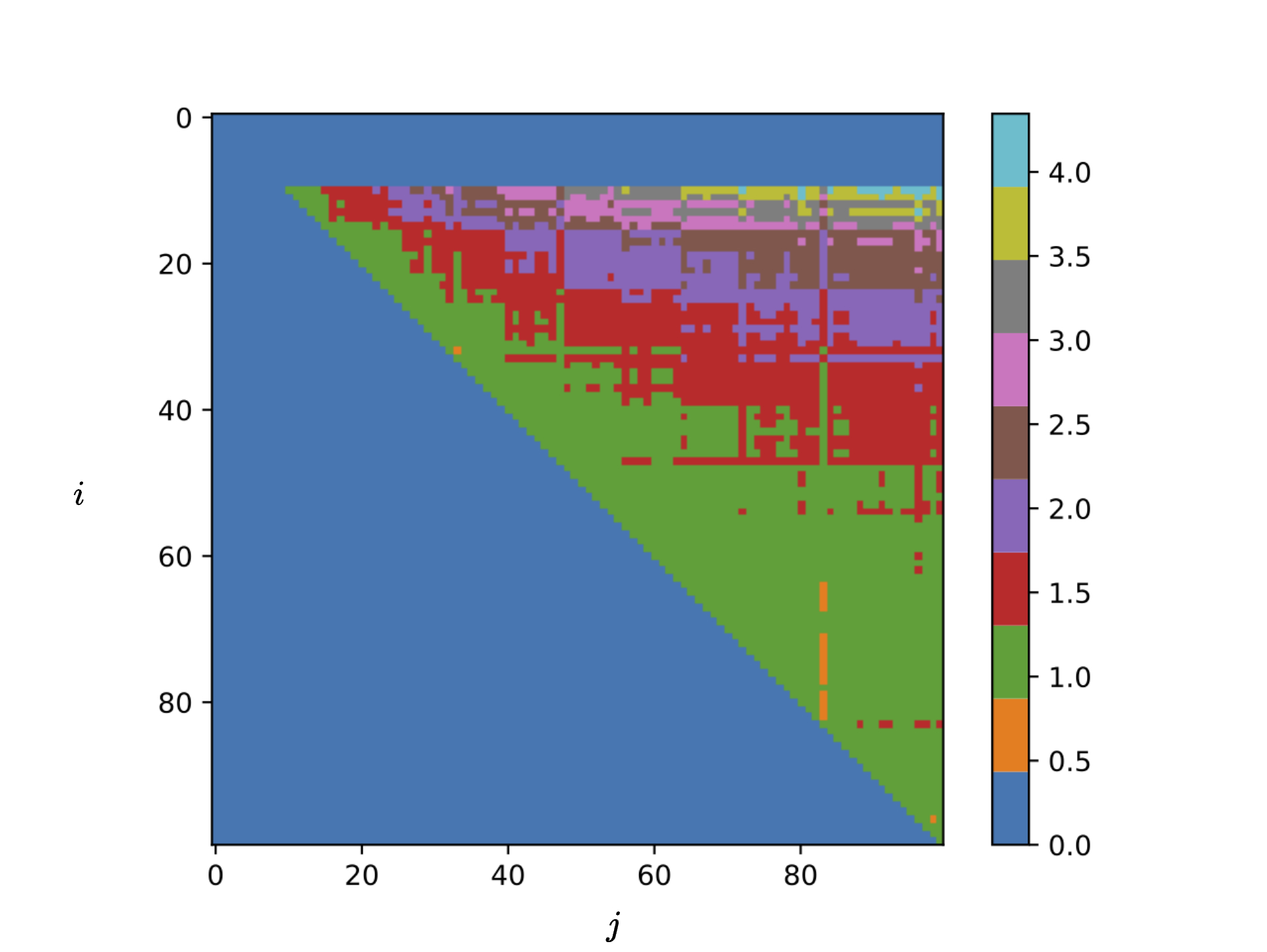}
    \caption{Color Map of $(j/i)/(t_j/t_i)$}
    \label{fig:color-map}
\end{figure}
The experiment was performed using NumPy on a machine with 2 cores. We see that fixing an $i$, as we increase $j$, $t_j$ becomes smaller compared to $t_i \cdot (j/i)$. Hence, it is advantageous to run with larger block sizes if it means that it reduces the number of iterations for which the smaller block size is to be run. In the proof of Theorem~\ref{thm:final-theorem-our-algorithm}, we see that if we increase the larger block to 4 times the original, then the number of iterations the smaller block size is to be run decreases to 0.5x the original. Based on the characteristics of the machine, we can obtain significant improvements over the parameters obtained by optimizing for matrix-vector products.
\end{document}